\algrenewcommand\algorithmicrequire{\textbf{Input:}}
\algrenewcommand\algorithmicensure{\textbf{Output:}}
\newcommand{\revised}[1]{#1}
\newcommand{\myparagraph}[1]{\smallskip\noindent {\bf #1.}}
\newcommand{\id}[1]{\ifmmode\mathit{#1}\else\textit{#1}\fi}
\newcommand{\const}[1]{\ifmmode\mbox{\textc{#1}}\else\textsc{#1}\fi}
\newcommand{\degree}[1]{\ensuremath{deg(#1)}}
\newcommand{\unionnoarg}[1]{\ensuremath{\textsc{Union}}}
\newcommand{\uniontext}[1]{union}
\newcommand{\nghheap}[1]{neighbor-heap}
\newcommand{\parhac}{$\mathsf{ParHAC}$}
\newcommand{\seqhac}{$\mathsf{SeqHAC}$}
\newcommand{\rac}{$\mathsf{RAC}$}
\newcommand{\uwalbucketmerge}{\parhac{}-ContractLayer}
\newcommand{\whp}[1]{\emph{whp}}
\newcommand{\sccsim}{SCC$_{\text{sim}}$}
\newcommand{\parhacappx}{\parhac{}$_{0.1}$}
\newcommand{\boruvka}{Bor\r{u}vka}
\newcommand{\ctrue}{\ensuremath{C_{\mathsf{true}}}}
\newcommand{\cfalse}{\ensuremath{C_{\mathsf{false}}}}
\newcommand{\gand}{\ensuremath{\mathsf{AND}}}
\newcommand{\gor}{\ensuremath{\mathsf{OR}}}
\newcommand{\weightedavglink}{weighted average-linkage}
\newcommand{\unweightedavglink}{average-linkage}
\newcommand{\hacweight}[2]{\ensuremath{\mathcal{W}(#1, #2)}}
\newcommand{\cut}[1]{\ensuremath{\mathsf{Cut}(#1)}}
\newcommand{\wmax}{\ensuremath{\mathcal{W}_{\max}}}
\newcommand{\wmin}{\ensuremath{\mathcal{W}_{\min}}}
\newcommand{\ncclass}{\ensuremath{\mathsf{NC}}}
\newcommand{\pcomplete}{\ensuremath{\mathsf{P}}-complete}
\newcommand{\nc}{\ensuremath{\mathsf{NC}}}
\definecolor{best}{rgb}{0.0, 0.5, 0.0}
\newcommand{\best}[1]{\color{best}{\underline{#1}}}
\definecolor{munsell}{rgb}{0.0, 0.5, 0.69}
\definecolor{burntsienna}{rgb}{0.91, 0.45, 0.32}
\newcommand{\parcolor}[1]{#1}
\theoremstyle{plain}
\newtheorem{theorem}{Theorem}[section]
\newtheorem{lemma}[theorem]{Lemma}
\newcommand{\defn}[1]{\textbf{\emph{#1}}}
\title{Hierarchical Agglomerative Graph Clustering \\ in Poly-Logarithmic Depth}
\author{%
  Laxman Dhulipala \\
  Google Research and \\
  University of Maryland \\
  \texttt{laxman@umd.edu} \\
   \And
   David Eisenstat \\
   Google Research \\
   \texttt{eisen@google.com} \\
   \And
   Jakub {\L}\k{a}cki \\
   Google Research \\
   \texttt{jlacki@google.com} \\
   \And
   Vahab Mirrokni \\
   Google Research \\
   \texttt{mirrokni@google.com} \\
   \And
   Jessica Shi \\
   MIT CSAIL\\
   \texttt{jeshi@mit.edu} \\
}
\begin{document}

\maketitle

\begin{abstract}

Obtaining scalable algorithms for \emph{hierarchical agglomerative clustering} (HAC)
is of significant interest due to the massive size of real-world datasets.
At the same time, efficiently parallelizing HAC is difficult due to the seemingly sequential
nature of the algorithm.
In this paper, we address this issue and present \parhac{},
the first efficient parallel HAC algorithm with sublinear depth 
for the widely-used average-linkage function.
In particular, we provide a $(1+\epsilon)$-approximation algorithm for 
this problem on $m$ edge graphs using $\tilde{O}(m)$ work and
poly-logarithmic depth.
Moreover, we show that obtaining 
similar bounds for \emph{exact} average-linkage HAC is not possible
under standard complexity-theoretic assumptions.

We complement our theoretical results with a comprehensive study 
of the \parhac{} algorithm in terms of its scalability, 
performance, and quality, and compare with several state-of-the-art
sequential and parallel baselines.
On a broad set of large publicly-available real-world datasets, we 
find that \parhac{} obtains a 50.1x speedup on average over the 
best sequential baseline, while achieving quality similar to the exact 
HAC algorithm.
We also show that \parhac{} can cluster one of the largest publicly 
available graph datasets with 124 billion
edges in a little over three hours using a commodity multicore machine.

\end{abstract}

\section{Introduction}
Hierarchical Agglomerative Clustering (HAC)~\cite{king67stepwise, lance67general, sneath73numerical} is a fundamental and widely-used clustering method with numerous applications in unsupervised learning, community detection, and biology.
Given $n$ input points, the HAC algorithm starts by forming a separate cluster
for each input point, and proceeds in $n-1$ steps.
Each step replaces the two most \emph{similar} clusters by its union.
The exact notion of similarity between two clusters is specified by a 
configurable \emph{linkage function}. This function is typically given 
all pairwise similarities between points from the two clusters. Some of
the most popular choices are \emph{average-linkage} (the arithmetic mean of 
all similarities), \emph{single-linkage} (the maximum similarity), and 
\emph{complete-linkage} (the minimum similarity). Among these, average-linkage
is of particular importance, as it is 
known to find high-quality clusters 
in real-world applications~\cite{hac-reward, monath2020scalable, moseley-wang, doi:10.1021/ct700119m}.
\footnote{We note that HAC can also be defined in terms of input points and \emph{dissimilarities} between
the points. We discuss and compare both settings in Section~\ref{sec:graph-hac}}

Although HAC has been of significant interest to statisticians, computer 
scientists, and clustering practitioners since the 1960s, applying HAC to very 
large datasets remains a major challenge.
A significant source of difficulty is the need to compute all 
pairwise similarities between sets of points, which HAC implementations typically
perform at the start of the algorithm at the expense of $\Theta(n^2)$ work and space~\cite{murtagh2012algorithms, fastcluster, scipy}.
To address this difficulty, two directions have recently been explored in the literature.
The first has focused on designing approximate sub-quadratic work algorithms in the dissimilarity
setting, using sketching and approximate nearest neighbor (ANN) techniques~\cite{abboud19hac, 48657}.
The second approach focuses on the \emph{similarity-graph setting}, and is what we build on in this paper.
Here, the idea is to build a (typically sparse) similarity graph over the pointset input, e.g., by 
representing each point as a vertex, connecting a point to its $k$ most similar neighbors, and 
then applying a graph-based HAC algorithm on this similarity graph~\cite{dhulipala2021hierarchical}.\footnote{We note that~\cite{dhulipala2021hierarchical} introduce both an exact  algorithm for average-linkage HAC running in $O(n\sqrt{m})$ work and an approximate algorithm running in $\tilde{O}(m)$ work. We define $\tilde{O}(f(x)) := O(f(x) \mathsf{polylog}(f(x)))$.
}
Surprisingly, despite only using a sub-quadratic number of similarities,
similarity graph-based HAC algorithms can match or sometimes even surpass the  quality of an algorithm  using the full $O(n^2)$ similarity matrix~\cite{dhulipala2021hierarchical}.

Prior works on these approaches~\cite{abboud19hac, dhulipala2021hierarchical}
give provable guarantees on the quality of the resulting approximation algorithms; in fact, both
yield $(1+\epsilon)$-approximate HAC algorithms, as proposed by~\cite{48657}.
Specifically, a \emph{$(1+\epsilon)$-approximate HAC algorithm} is an algorithm in which
each step only merges edges that have similarity at least 
$\mathcal{W}_{\max} / (1+\epsilon)$ where $\mathcal{W}_{\max}$
is currently the largest similarity.\footnote{We note that~\cite{48657} deals with 
dissimilarities, but we adapt the definition to similarities in the natural way.}
Hence, an $(1+\epsilon)$-approximate algorithm is constrained to merge an
edge that is ``close'' in similarity to the merge that the exact algorithm 
would perform. At the same time, the algorithm has flexibility in
which edge to merge, which can potentially be algorithmically exploited.

Despite the progress on scaling up HAC, further improvements appeared challenging due to the seemingly inherently sequential nature of the HAC algorithm~\cite{bateni2017affinity,monath2020scalable}.
Several recent attempts at parallelizing HAC either rely on optimistic assumptions on the input~\cite{sumengen2021scaling, parchain},
significantly diverge from the original algorithm, potentially impacting
quality~\cite{bateni2017affinity, cochez2015twister, monath2020scalable}, or deliver weak approximation guarantees~\cite{48657}.
Understanding if HAC can be solved in sub-quadratic work and poly-logarithmic depth (or even polynomial work and sub-linear depth) has thus remained an intriguing open question, even when allowing for approximation.

\subsection{Our Contributions}
In this paper, we introduce the first efficient parallel algorithm computing
$(1+\epsilon)$-approximate average-linkage HAC with 
poly-logarithmic depth and near-linear total work.
We use the standard \emph{work-depth model} of parallel
computation~\cite{blelloch2019optimal, CLRS} to analyze the theoretical cost of our algorithm, where briefly, the \emph{work} is the total number of operations performed, and the \emph{depth} is the longest chain of dependencies.
The algorithm takes as input a similarity graph, which contains 
$n$ vertices (representing input points) and $m$ weighted edges (representing nonzero similarities between points).

\begin{theorem}
There is a parallel $(1+\epsilon)$-approximate average-linkage graph-based HAC algorithm that
given a similarity graph containing $n$ vertices and $m$
weighted edges runs in $\tilde{O}(m + n)$ work in
expectation and has $O(\log^4 n)$ depth with high probability.\footnote{
An algorithm has $O(f(n))$ 
cost with high probability (\whp{}) if it has
$O(k \cdot f(n))$ cost with probability at least $1 - 1/n^{k}$.
}
\end{theorem}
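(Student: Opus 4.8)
The plan is to design a parallel algorithm that processes similarities in geometrically decreasing \emph{weight buckets}, and within each bucket repeatedly contracts a large fraction of the graph in a Bor\r{u}vka-like fashion while staying within the $(1+\epsilon)$-approximation slack. Concretely, fix a constant $\epsilon$ and partition the possible similarity values into buckets of the form $[(1+\epsilon)^{i}, (1+\epsilon)^{i+1})$; there are only $O(\log_{1+\epsilon} (\wmax/\wmin)) = \tilde O(1)$ non-empty buckets if weights are polynomially bounded (in general one argues about only the buckets that ever become ``active''). The key structural fact is that inside a single bucket, any edge whose current (average-linkage) weight lies in the top bucket is a legal merge for a $(1+\epsilon)$-approximate algorithm, so we are free to merge many such edges simultaneously. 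The outer loop thus has $\tilde O(1)$ iterations, and the whole cost analysis reduces to bounding the work and depth of processing one bucket.

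For a single bucket I would use an inner loop that in each round (i) identifies the set of ``heavy'' edges whose current average-linkage weight is at least the bucket threshold, (ii) has each remaining cluster pick its best heavy incident edge, forming a \bestedge{} graph of out-degree $1$, (iii) contracts the connected components (equivalently, pseudotrees/``stars'') of this graph in parallel, using a randomized symmetry-breaking / Bor\r{u}vka step so that each round reduces the number of active clusters by a constant factor in expectation, and (iv) updates the weighted neighborhoods (the \nghheap{}/neighbor lists) of the new merged clusters via a parallel merge of the constituent lists. The crucial point is that contracting a component of this \bestedge{} graph can only \emph{increase} the average-linkage weight of the surviving boundary edges into that component up to the $(1+\epsilon)$ factor, or at worst keep them in the same or a higher bucket, so repeatedly contracting within the bucket never forces an illegal merge; once no heavy edges remain, the bucket is exhausted and we move to the next-lower threshold. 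Since each inner round shrinks the number of active clusters geometrically, there are $O(\log n)$ inner rounds per bucket \whp{}, giving $O(\log n)$ outer $\times$ $O(\log n)$ inner rounds; each round itself costs $O(\log n)$ or $O(\log^2 n)$ depth for the parallel primitives (prefix sums, semisort, parallel hashing, heap merges, connectivity on an out-degree-$1$ graph), which is where the $O(\log^4 n)$ depth bound comes from.

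For the work bound the plan is an amortization/charging argument: the total work is dominated by the neighbor-list merges in step (iv). Each time two clusters merge, the cost of merging their neighbor lists is near-linear in the smaller list, so by a standard ``heavy-path''/union-by-size argument each original edge endpoint is charged $O(\log n)$ times over the course of the algorithm, for $\tilde O(m+n)$ total. One subtlety is that parallel merging of weighted neighbor lists may temporarily create duplicate entries (two clusters both adjacent to a third), which must be combined by summing similarity contributions and updating the average-linkage weight; I would handle this with a parallel semisort keyed by neighbor ID after each batch of contractions, which keeps the per-merge cost near-linear and preserves the amortization. The randomized Bor\r{u}vka symmetry-breaking also has to guarantee a constant-factor reduction \emph{in expectation} per round, which gives the $\tilde O(m+n)$ \emph{expected} work and, via a Chernoff bound over the $O(\log n)$ rounds, the $O(\log^4 n)$ \emph{high-probability} depth.

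The hard part will be the approximation-correctness argument, i.e.\ proving that the batched contractions within a bucket never merge an edge whose true current average-linkage weight is below $\wmax/(1+\epsilon)$. This requires carefully tracking how the average-linkage weight of a boundary edge evolves as a chain of contractions happens inside one bucket: one must show that the weight of any edge we merge, at the moment we merge it, is within a $(1+\epsilon)$ factor of the global maximum, even though that maximum and the edge's own weight are both moving targets during the round. I expect this to need a monotonicity lemma of the form ``contracting a set of clusters all joined by edges of weight $\ge \tau$ does not decrease the weight of a surviving edge incident to that set below $\tau$,'' combined with a bound on how much a single bucket's worth of contractions can inflate weights (at most one bucket up, hence one factor of $(1+\epsilon)$ of slack, which is why we must re-bucket between outer iterations rather than within them). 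Everything else — the work amortization, the depth accounting, the parallel primitives — is standard given the algorithmic framework.
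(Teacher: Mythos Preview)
Your outer framework (geometric layering into $O(\log_{1+\epsilon}(\wmax/\wmin))$ weight buckets, then shrink each bucket in $O(\log n)$ parallel rounds) matches the paper. The inner loop, however, is precisely the approach the paper rules out before introducing its actual mechanism, and the monotonicity lemma you are hoping for is false.

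Concretely: your step (ii)--(iii) is the Bor\r{u}vka/Affinity contraction (each cluster picks a best heavy edge, contract the resulting components). Take a path $v_1\!-\!v_2\!-\!\cdots\!-\!v_k$ of singletons with all edge weights equal to $1$. All edges are in the top bucket, the \bestedge{} graph is the whole path, and you contract it in one round. But a $(1+\epsilon)$-approximate HAC algorithm must output a \emph{binary} dendrogram, and the approximation constraint applies to \emph{each} binary merge. After merging $\{v_1,\dots,v_j\}$ with $v_{j+1}$, the average-linkage similarity of that merge is $1/j$; the last merge has similarity $1/(k-1)$, arbitrarily far below $\wmax/(1+\epsilon)$. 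So for any $\epsilon<1/2$ a correct algorithm may only merge pairs in this example, and your round cannot contract a whole component. More generally, for average linkage $\mathcal{W}(X\cup Y,Z)\le \max(\mathcal{W}(X,Z),\mathcal{W}(Y,Z))$ (reducibility), so boundary weights go \emph{down}, not up, and internal-merge weights drop like the inverse of the growing cluster size. Your claimed lemma (``contracting clusters joined by edges $\ge\tau$ keeps surviving edges $\ge\tau$'') is therefore the wrong direction, and in any case says nothing about the internal merges that the approximation guarantee actually governs.

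The paper's fix is exactly to prevent this blow-up in cluster size within a round: instead of Bor\r{u}vka, it uses a \emph{capacitated} random-mate step. Vertices are randomly colored red/blue; only blue$\to$red merges along heavy edges are allowed, and each red vertex $r$ accepts proposals only until its cluster size would exceed $(1+\epsilon)$ times its size at the start of the round. This cap is what makes the approximation go through (every merged edge stays $\ge W_{\max}/(1+\epsilon)^2$), but it breaks the easy ``constant-fraction shrink per round'' argument. The actual round-complexity proof then has two nontrivial pieces you are missing: (i) within an outer round, each blue vertex either loses a constant fraction of its heavy neighbors or is merged with constant probability, giving $O(\log n)$ inner rounds whp; and (ii) across outer rounds, every surviving heavy edge must have an endpoint whose size grew by a $(1+\epsilon)$ factor, which can happen only $O(\log n)$ times, giving $O(\log n)$ outer rounds whp. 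Your work amortization and depth accounting for the primitives are fine, but the algorithm they are wrapped around needs this capacitation idea to be correct.
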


For a fixed $\epsilon$, our algorithm, which we call \parhac{}, runs in poly-logarithmic depth and $\tilde{O}(m)$-work.
Thus, it is work-efficient up to logarithmic factors and achieves high parallelism (ratio of work to depth).
We note that the same setting, i.e., parametrizing the input size by the number of \emph{nonzero} similarities, has been recently studied in the sequential case~\cite{dhulipala2021hierarchical}, where an $\tilde{O}(m)$-work sequential $(1+\epsilon)$-approximate algorithm is known.
Prior to our result, no average-linkage HAC algorithm with 
sublinear depth was known (even allowing approximation).

On the negative side, we show that
allowing approximation is necessary, as obtaining an average-linkage \emph{exact}
HAC algorithm with poly-logarithmic depth is not possible under standard 
complexity-theory assumptions (i.e., it is a $\mathsf{P}$-complete problem).
In other words, an exact polynomial-work parallel HAC algorithm for average-linkage in 
poly-logarithmic depth would imply
poly-logarithmic depth algorithms for \emph{all} problems solvable
in polynomial time (i.e., show that $\mathsf{P} = \mathsf{NC}$).
Our lower bound formalizes a commonly held belief that the algorithm is inherently sequential~\cite{sumengen2021scaling} and justifies studying the approximate variant of HAC.

\begin{theorem}
Graph-based HAC using \unweightedavglink{} is \pcomplete{}.
\end{theorem}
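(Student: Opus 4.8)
Membership in $\mathsf{P}$ is immediate: one can run average-linkage HAC in polynomial time by maintaining the current clusters together with all cluster-to-cluster similarities and repeatedly contracting a maximum-similarity edge, so the natural decision version --- say, ``given a similarity graph $G$, vertices $u,v$, a rational threshold $\tau$, and a fixed deterministic tie-breaking rule, do $u$ and $v$ end up in a common cluster via a merge of similarity $\ge \tau$?'' --- lies in $\mathsf{P}$. The substance is the hardness direction, for which the plan is to give a log-space (equivalently, $\mathsf{NC}^1$) reduction from a canonical $\mathsf{P}$-complete problem. I would reduce from the Circuit Value Problem, using a convenient layered, bounded fan-in and fan-out NAND (or monotone) variant, which remains $\mathsf{P}$-complete.

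The observation that makes a circuit simulation possible is that \unweightedavglink{} similarities between clusters are \emph{size-weighted averages} of the earlier similarities: contracting an edge changes cluster sizes, which in turn deterministically changes all future similarities. Hence cluster sizes --- and which edge is the current global maximum --- can carry the bits of a computation, and the contraction order can be engineered to be forced. Given a circuit $\mathcal{C}$ with gates $g_1,\dots,g_s$ in topological order, I would build a graph so that HAC is compelled to ``process'' $g_1,\dots,g_s$ in that order: (i) a \emph{clock} consisting of edges whose weights lie in narrow, well-separated windows centered near $1-i\delta$ for a polynomially small $\delta$ and with polynomially small intra-window slack, so that $g_i$'s gadget can become the global maximum only after $g_{i-1}$'s gadget has finished all of its contractions and must finish before $g_{i+1}$'s begins; (ii) an encoding of the boolean value on each wire by the size of the corresponding output cluster (e.g., $N$ if false and $N{+}1$ if true, the difference being a single ``token'' vertex that has or has not been absorbed); and (iii) a constant-size gadget for each gate $g_i=\mathrm{NAND}(g_j,g_k)$ whose internal edge weights are simple functions of the (already determined) sizes of the input clusters, so that the forced internal contraction order absorbs a token into $g_i$'s output cluster exactly when $\mathrm{NAND}(g_j,g_k)$ is true. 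Fan-out is realized by small copy gadgets that duplicate a wire's token state into fresh clusters, and the output wire is attached to the distinguished pair $(u,v)$ and threshold $\tau$ so that $u$ and $v$ merge above $\tau$ if and only if $\mathcal{C}$ outputs $1$.

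Correctness follows by induction on the topological order: assuming all contractions so far were the intended ones and $g_1,\dots,g_{i-1}$ were simulated correctly, one shows that the (unique) global maximum-similarity edge is the next intended edge of $g_i$'s gadget, that the gadget's forced internal contractions read the input cluster sizes and set the output size correctly, and that no edge of a gadget that is not yet active or has already finished can exceed the current clock level. Making this ``no spurious global maximum'' bookkeeping go through --- choosing window widths and slacks so that gadgets never interfere across clock ticks and the maximum is always strictly unique (so the tie-breaking rule is irrelevant) --- is the main technical obstacle. A secondary subtlety specific to \unweightedavglink{} is that it averages over \emph{all} pairs, counting non-edges as similarity $0$, so the gadgets must keep the relevant clusters small (and the ``important'' bipartite connections dense) to prevent intended edges from being diluted below their clock windows as clusters grow. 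Everything else --- polynomial graph size, weights with polynomially many bits, and log-space computability of the local gadget construction --- is routine.
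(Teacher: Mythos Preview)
Your high-level strategy---a log-space reduction from CVP with clocked gadgets---is the right shape, but your encoding of boolean values is quite different from the paper's, and the part you defer (the actual NAND and copy gadgets) is exactly where the difficulty lies. You propose to encode a wire's bit by the \emph{size} of its cluster ($N$ versus $N{+}1$) and to have each gate gadget ``read'' these sizes through the $1/(|X|\cdot|Y|)$ normalization in the average-linkage formula. This is plausible in principle, but the similarity signal carried by a size difference of $1$ is only a $\Theta(1/N)$ relative perturbation, and you must keep these signals strictly separated across all clock ticks while simultaneously ensuring each intended contraction is the unique global maximum and that copy gadgets reproduce the exact size into fresh clusters. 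You acknowledge this bookkeeping as the main obstacle but do not carry it out, so the proposal is really a plan rather than a proof.

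The paper takes a different route that sidesteps size arithmetic. It reduces from \emph{monotone} CVP and encodes the bit of gate $g_k$ by \emph{which} of two special clusters, $\ctrue{}$ or $\cfalse{}$, the vertex $g_k$ merges into. The key device specific to average-linkage is to keep $|\ctrue{}|=|\cfalse{}|$ at every step: for each gate the construction introduces \emph{two} vertices, $g_k$ and $\bar g_k$, wired as complementary $\gand{}$/$\gor{}$ gadgets so that exactly one merges to $\ctrue{}$ and the other to $\cfalse{}$. With the two sinks always equal in size, the $|X|\cdot|Y|$ denominators cancel when comparing the candidate merges for $g_k$, and a small additive $\epsilon$-bias on the edges $(g_k,\ctrue{})$, $(g_k,\cfalse{})$, together with $\epsilon$-weight edges from $g_k$ to its inputs (already absorbed into $\ctrue{}$ or $\cfalse{}$), implements the gate logic by a two-line case check. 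Global ordering is enforced by base weights $w(k)=4t-2k$, and a batch of dummy vertices merged first makes $\ctrue{}$ and $\cfalse{}$ large enough that their mutual similarity stays below every gadget edge. No fan-out or copy gadgets are needed because a gate's value is recorded by where it sits, not by a cluster size that must be replicated. Your token-size encoding might be pushed through, but the paper's ``balanced sinks'' idea is what makes the construction short and explicit; if you pursue your route, the concrete gadget designs and the non-interference bounds are the missing content.
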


We complement our theoretical results by providing an efficient parallel implementation of \parhac{}, and comparing it with existing graph-based and pointset-based HAC baselines
across a broad set of publicly available real-world datasets. 
For scalability, we show that \parhac{} achieves strong speedups 
relative to other high-quality HAC baselines, achieving 50.1x average 
speedup over the approximate sequential algorithm 
from~\cite{dhulipala2021hierarchical} when run on a 72-core machine.
Moreover, we study the overall running times of using \parhac{}
in the pointset setting, and find that \parhac{} can cluster
significantly larger datasets compared to fastcluster~\cite{fastcluster}, a state-of-the-art
pointset clustering algorithm, and achieves up to
417x speedup over fastcluster in end-to-end running time.

We show that the above speedups are achieved without loss (or even with gain) for the final algorithm.
In particular, for the Adjusted Rand-Index quality measure, we find that the quality
of \parhac{} is on average 3.1\% better than the best approximate sequential
baseline, and is on average within 3.69\% of the best score for any method on the
datasets that we study. Finally, we find that \parhac{} achieves consistently
strong quality results on three other quality metrics that we study compared with
the best solutions offered by exact and approximate algorithms.

We plan to make our implementations publicly available on 
GitHub.

\subsection{Graph-Based Hierarchical Agglomerative Clustering}\label{sec:graph-hac}
As mentioned earlier in the introduction, our focus in this paper is on
clustering (typically sparse) weighted graphs where the edge weights represent
similarities between vertices.
More formally, we study parallel algorithms for the \emph{graph-based hierarchical agglomerative clustering (HAC)}
problem, which takes as input a graph $G = (V,E,w)$ and proceeds by
repeatedly merging the two most similar clusters, where the similarity is given by a
configurable linkage function. 
Equivalently, this process can be viewed in terms of a graph $H$, whose vertices are clusters, and edges represent clusters of positive similarity (in particular, the edge weights give the similarities between clusters).
Initially, when all clusters have size $1$, $H$ is equal to $G$.
Each operation of merging two clusters corresponds to contracting an edge of $H$  (we call this operation a \emph{merge}).
In the following we typically use this alternative view of the algorithm.

The output of the algorithm is a \emph{dendrogram} -- a 
rooted binary tree that has a single leaf for each vertex in the input graph, with internal nodes corresponding clusters obtained by merges, and the weights of internal nodes representing the similarity of the corresponding merge that formed it.

We focus on the \emph{average-linkage measure}, which assumes that the 
similarity between two clusters $(X,Y)$ is equal to $\sum_{(x,y) \in \cut{X,Y}} w(x,y)/(|X| \cdot |Y|)$,
that is, the total weight of edges between $X$ and $Y$, divided by the maximum number of possible 
edges between the clusters. Throughout this paper, we
consider graphs with arbitrary positive edge weights representing similarities. We discuss other
linkage measures in the Appendix.

A natural question is, why use similarities instead of dissimilarities?
Both settings have been previously considered in the literature, and
although it hard to argue that one approach is ``better'' than the other, our primary reason for
studying similarities is because the similarity setting is arguably a more natural setting for clustering
\emph{sparse} graphs.
Specifically, it is natural to assume that the similarity between 
pairs of vertices not connected by an edge is $0$, whereas no such assignment of a dissimilarity
to missing edges is obvious when edge weights represent dissimilarities.

\subsection{Related Works}\label{sec:related}
\parhac{} is inspired by a recent sequential 
$(1+\epsilon)$-approximate average-linkage HAC algorithm, which runs 
in $\tilde{O}(m)$ time~\cite{dhulipala2021hierarchical}.
However, obtaining a theoretically-efficient and practical parallel algorithm requires significant new ideas.
The evaluation of~\cite{dhulipala2021hierarchical} studied the difference in quality between exact and $(1+\epsilon)$-approximate HAC and showed that even moderately small $\epsilon = 0.1$
maintains the quality of exact HAC. 
Our results on the quality of \parhac{} are consistent with these findings.

Efficient parallelizations of HAC are known in the case of single-linkage (essentially equivalent to maximum spanning forest), and centroid linkage, if one allows $O(\log^2 n)$-approximation~\cite{48657}.
In other cases, the existing parallelizations of exact HAC either use linear depth~\cite{DBLP:journals/pc/Olson95}, much larger work~\cite{DBLP:journals/dke/DashPS07, rajasekaran2005efficient} or do not come with any bounds on the running time~\cite{DBLP:journals/tpds/JeonY15, sumengen2021scaling, parchain}.
The ParChain framework~\cite{parchain} for parallel exact HAC on pointsets recently showed that average-linkage HAC can be solved in hours for million-point datasets. 
Very recently, it was shown that a \emph{distributed} implementation of the \rac{} algorithm can exactly cluster a billion-point dataset in a few hours using 200 machines and 3200 CPUs~\cite{sumengen2021scaling}.
However, the depth of the algorithm is linear in the worst case, and can be large in practice.

In order to scale hierarchical clustering to large datasets, several HAC-inspired algorithm
have been proposed, including Affinity clustering~\cite{bateni2017affinity} and SCC~\cite{monath2020scalable}.
Both algorithms are designed for a distributed setting, in which the number of computation rounds 
that one can afford is highly constrained.
In particular, SCC can be seen as a best-effort approximation of HAC, given a fixed 
(usually small) number of rounds to run.
We implemented both algorithms (using the framework that we built to develop \parhac{}) and 
included them in our empirical evaluation.

The theoretical foundations of HAC has been developed in recent years~\cite{chatziafratis2020hierarchical, Dasgupta2016, moseley-wang}, and 
have motivated using HAC in real-world settings~\cite{CharikarChatziafratis2017,ChChNi2019, CoKaMa2017, hac-reward, RoyPokutta2016}.
The version of HAC that takes a graph as input has also been studied before, especially in the context of graphs derived from point sets~\cite{ knn-hac, GuRaSh1999, KaHaKu1999}, although without strong theoretical guarantees.
Another line of work by Abboud et al.~\cite{abboud19hac} showed that if 
the input points are in the Euclidean space and the Ward linkage method is 
used~\cite{Ward1963}, approximate HAC can be solved in subquadratic time.

\section{Parallel Approximate HAC}\label{sec:alg}
In this section we describe our parallel HAC algorithm, which we call \parhac{}.
We assume that the input is a weighted graph $G = (V, E, w)$, where $w : E \rightarrow \mathbb{R}^{+}$ gives the edge weights.

Let us now provide some background for the main ideas behind \parhac{}.
As observed in~\cite{sumengen2021scaling, parchain}, a simple exact parallel HAC algorithm 
can be obtained almost directly from the 40-year-old nearest-neighbor chain algorithm~\cite{nn-chain}.
The parallel algorithm finds all edges $xy$ such that $xy$ is the highest-weight 
incident edge to each endpoint, $x$ and $y$. 
For simplicity, we assume here that all edge weights are distinct.
One can see that these edges form a matching (which does not necessarily match 
all vertices), and the correctness of the nearest-neighbor chain algorithm implies 
that if the endpoints of these edges are merged in parallel, the output is equivalent to what a sequential HAC algorithm produces.
Following~\cite{sumengen2021scaling}, we call this algorithm reciprocal
agglomerative clustering (\rac{}).
Clearly, the amount of parallelism in \rac{} is data-dependent. 
In particular, it can take a linear number of steps in the worst
case~\cite{sumengen2021scaling}, and, as we find in our experiments,
up to 21,081 steps on the YouTube (YT) real-world graph with
just 1.1M vertices and 5.9M edges.

Once we consider $(1+\epsilon)$-approximate HAC, the set of edges that 
we can choose to merge in the first step \emph{grows}  to
include all edges whose weight is within 
$(1+\epsilon)$ factor of $\mathcal{W}_{\max}$, the largest edge 
weight in the graph.
Let us call these edges $(1+\epsilon)$-\emph{heavy}.
A major challenge is that the $(1+\epsilon)$-heavy edges no longer 
form a matching, and thus cannot be all merged in parallel.
To see this, consider an example where all $(1+\epsilon)$-heavy edges
have a common endpoint $x$.
Once one vertex merges with $x$, the size of the cluster represented
by $x$ increases, which decreases the weights of all edges incident
to $x$, and as a result some of these edges may cease to be $(1+\epsilon)$-heavy. 

Considering the $(1+\epsilon)$-heavy edges motivates the use 
of \defn{geometric layering}, a technique where we group the edges
into layers based on their weights and process edges in the same layer in parallel (e.g., see ~\cite{Berger1994, BPT11}).
In more detail, let $\mathcal{W}_{\max}$ and $\mathcal{W}_{\min}$ be 
the maximum-weight and minimum-weight in the graph, respectively.
The $i$-th layer contains all edges with
weight between $((1+\epsilon)^{-(i+1)}\cdot \wmax, (1+\epsilon)^{-i} \cdot \wmax]$.
\parhac{} processes the layers one at a time; to compute the next layer it
computes the maximum
weight edge in the graph $W_{\max}$, and processes all edges between $((1+\epsilon)^{-1}W_{\max}, W_{\max}]$. 
We refer to an iteration of this loop as a \defn{layer-contraction phase}.

\newcommand{\gbuc}{G_c}
\newcommand{\ebuc}{E_c}
\definecolor{mygray}{gray}{0.55}
\renewcommand{\algorithmicensure}{\textbf{Ensure:}}
\begin{algorithm}[!t]\caption{\uwalbucketmerge{}($G=(V, E, w), T_{L}, \epsilon, D$)}\label{alg:bucketmerge}
    \begin{algorithmic}[1]
    \Require{Similarity graph, $G$, threshold $T_{L}$, $\epsilon > 0$, dendrogram $D$.}
    \Ensure{All edges in $G$ have weight $< T_{L}$.}
  \State \parcolor{Let $W_{\max}$ be the current maximum-weight edge in $G$.}
  \While{$W_{\max} \geq T_{L}$} \Comment{{\color{mygray} Outer round}}
    \State \parcolor{Randomly color active vertices of $G$ either red or blue.} \label{bkt:colorredblue}
       
    \State \parcolor{Let $R,B$ be the sets of red and blue vertices respectively.}
    
    \State \parcolor{Let $\gbuc = (V, \ebuc, w)$, where $\ebuc$ contains all edges in $G$ that have weight $\geq T_L$, and connect a vertex $x\in B$ with a vertex $y \in R$, where the size of $y$ is not smaller than the size of $x$.\label{bkt:constructgb}}
    \While{$|\ebuc| > 0$}\label{bkt:innerstart} \Comment{{\color{mygray} Inner round}}

       \State \parcolor{Select a random priority $\pi_{b}$ for $b \in B$. \label{bkt:randompri}}
     
       \State \parcolor{Let $C_b$ be a random red neighbor for each $b \in B$.} \label{bkt:randomngh}
       
       \State \parcolor{Let $T = \{(C_b, \pi_b, b)\ |\ b \in B\}$. Sort $T$ lexicographically.} \label{bkt:buildtriples}
       
       \State \parcolor{Let $T_{r}$ be triples from $T$ with first component equal to $r$.}  \label{bkt:tr}
       
       \State \parcolor{For each $r \in R$, select the first prefix of $T_{r}$, in which the total size of blue vertices exceeds $\epsilon|r|$} \label{bkt:selecttr}
       
       \State \parcolor{Let $M$ be the set of (red, blue) vertex pairs selected.} \label{bkt:gathermerges}
       
       \State \parcolor{Merge vertices in $G$ and $\gbuc$ based on the pairs from $M$, updating edge weights in $\gbuc$.} \label{bkt:updategraphs}
       
       \State \parcolor{Remove edges of $\gbuc$ that have two red endpoints or weight below $T_L$}
       
       \State \parcolor{Update $D$ based on $M$. If multiple $b \in B$ merge to a single $r \in R$, merge them into $r$ in the sorted order.}\label{bkt:updatedendrogram}
       
       \State \parcolor{Remove $r \in R$ from $\gbuc$ whose cluster size grew by more than a $(1+\epsilon)$ factor since the start of the outer round.} \label{bkt:removelarger}
       
    \EndWhile\label{bkt:innerend}
    \State \parcolor{Recompute $W_{\max}$ based on the current state of $G$.}
    \EndWhile
    \end{algorithmic}
\end{algorithm}

Let us now describe how to implement a layer-contraction phase.
The pseudocode for the procedure is shown in Algorithm~\ref{alg:bucketmerge}.
The goal is to merge $(1+\epsilon)$-heavy edges in parallel until none are left.
The main challenge is in ensuring that the algorithm does 
not violate the approximation requirements.
An \defn{outer-round} of Algorithm~\ref{alg:bucketmerge} begins by randomly coloring active (i.e., non-isolated) vertices either red or blue (Line~\ref{bkt:colorredblue}), assigning each color with probability $1/2$.
Then, it constructs a graph $\gbuc$ which consists of edges of $G$ of weight belonging to the current layer (Line~\ref{bkt:constructgb}).
Moreover, $\gbuc$ only contains edges whose endpoints have different colors, and whose red endpoint has larger size than the blue endpoint.
Observe that each edge of $G$ is added to $\gbuc$ with probability at least $1/4$.
The algorithm then performs inner-rounds while the number of edges in $\gbuc$ is non-zero.

Let us now describe a single \defn{inner-round} (Lines~\ref{bkt:innerstart}--\ref{bkt:innerend}).
The goal of an inner round is for many blue vertices to merge into red vertices.
Here we allow multiple blue vertices to merge with a single red vertex.
A key property (also exploited in~\cite{dhulipala2021hierarchical}) is that as long as the size of the cluster represented by $x$ does 
not grow too much within a single round, the weights of the edges incident to 
$x$ are very close to what they were in the beginning of the round.
Specifically, assume that (in the beginning of a round) $x$ represents a 
cluster of size $c$.
Then, until the size of this cluster exceeds $(1+\epsilon) \cdot c$, the edges 
incident to $x$ that were $(1+\epsilon)$-heavy at the beginning of the round
remain (at least) $(1+\epsilon)^2$-heavy.
This allows us to merge multiple vertices with $x$ in a single round, at the 
cost of increasing the approximation ratio to $(1+\epsilon)^2$ (which can be 
reduced to $(1+\epsilon)$ by scaling $\epsilon$ by a constant factor).

More specifically, an inner round is implemented as follows.
For each blue vertex $b \in B$, the algorithm selects a uniformly random priority
$\pi_b \in [0, 1]$ (Line~\ref{bkt:randompri}) which is used to perform symmetry breaking when merging vertices.
It then chooses a random red neighbor, $C_b$, for each $b \in B$ (Line~\ref{bkt:randomngh}).
If a blue vertex has no red neighbors it will not participate in the
subsequent steps, but for simplicity when describing the algorithm we assume that
each $b \in B$ has a valid $C_b$.
The algorithm then builds $T$, a set of triples for each $b \in B$ containing the
value of its candidate red neighbor $C_b$, its priority $\pi_b$, and its id, $b$ (Line~\ref{bkt:buildtriples}), and sorts this set lexicographically.
We call the elements of $T$ \emph{proposals}.
At this point, a red vertex may have proposals to merge from a large number
of blue neighbors, and so the algorithm selects for each $r \in R$ the first
prefix of $T_r$ (the merges proposing to $r$) whose total cluster size exceeds
$\epsilon|r|$ (Lines~\ref{bkt:tr}--\ref{bkt:selecttr}). If no prefix of $T_r$ has this property, the algorithm selects all of $T_r$.

Finally, the algorithm gathers all of the $(r, b)$ vertex pairs that were 
selected (Line~\ref{bkt:gathermerges}) and applies these merges to update
both $G$ and $\gbuc$ (Line~\ref{bkt:updategraphs}).
Note that after this update we remove all edges of $\gbuc$ that have two red endpoints,
and as a result we maintain the invariant that each edge of $\gbuc$ has two endpoints of distinct colors, and the size of the red endpoint is at least the size of the blue one.

\subsection{Theoretical Analysis}
We show that our algorithm performs nearly-linear total work, has
poly-logarithmic depth, and thus polynomial parallelism, and has good
approximation guarantees.
For the purpose of this analysis we assume that $\epsilon > 0$ is a constant.
We provide proofs in the Appendix and outline the main ideas here.

We start by analyzing a layer-contraction phase, and bound the total work 
and number of rounds required.
We start by showing that within each inner round of Algorithm~\ref{alg:bucketmerge}, each blue vertex makes progress in expectation either by being merged, or by having many edges incident to it be deleted.

\begin{restatable}{lemma}{lemblueprogress}\label{lem:blueprogress}
Consider an arbitrary blue vertex $b$ within an inner round.
Within this round either (a) a constant factor of edges incident to $b$ are deleted, or (b) with constant probability $b$ is merged into one of its red neighbors.
\end{restatable}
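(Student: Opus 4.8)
The plan is to fix a blue vertex $b$ and let $r_1,\dots,r_d$ be its neighbors in $\gbuc$ at the start of the inner round; all of them are red, since $\gbuc$ contains only blue--red edges, and we may assume $d\ge 1$ (otherwise the claim is vacuous). For each $j$ I would let $B_j$ be the set of blue neighbors of $r_j$ in $\gbuc$ and define its \emph{load} $\mu_j := \sum_{b'\in B_j\setminus\{b\}} |b'|/\deg_{\gbuc}(b')$, which is exactly the expected total size of blue vertices other than $b$ that propose to $r_j$ this round (each such $b'$ proposes to $r_j$ with probability $1/\deg_{\gbuc}(b')$, i.e.\ the reciprocal of its number of red neighbors). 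I would call $r_j$ \emph{light} if $\mu_j < 4\epsilon|r_j|$ and \emph{heavy} otherwise, and split on whether more than $d/2$ of the $r_j$ are light, or at least $d/2$ are heavy.

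First, suppose more than $d/2$ of the neighbors are light; I would show this forces (b). Conditioning on $C_b=r_j$ and on $b$'s priority $\pi_b=t$, the prefix-selection rule (Line~\ref{bkt:selecttr}) says $b$ merges into $r_j$ exactly when the total size of proposers to $r_j$ of priority below $t$ is at most $\epsilon|r_j|$; this random quantity has expectation $t\mu_j$, so Markov's inequality gives a merge probability of at least $1 - t\mu_j/(\epsilon|r_j|)$. Integrating over $t\in[0,1]$ yields $\Pr[b\text{ merges}\mid C_b=r_j]\ge \tfrac12\min(1,\,\epsilon|r_j|/\mu_j)$, which exceeds $1/8$ whenever $r_j$ is light. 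Since $C_b$ is uniform over the $d$ neighbors and more than half are light, $\Pr[b\text{ merges}]\ge \tfrac1{16}$, a constant.

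Next, suppose at least $d/2$ of the neighbors are heavy; I would show this forces (a) in expectation. The key point is that whenever the total size of proposers to some $r_j$ exceeds $\epsilon|r_j|$, the selected prefix has total size $>\epsilon|r_j|$, so the cluster of $r_j$ grows past the $(1+\epsilon)$ factor and $r_j$ is removed in Line~\ref{bkt:removelarger}, which deletes the edge $br_j$; hence it suffices to show every heavy $r_j$ is removed with at least constant probability. I would partition $B_j\setminus\{b\}$ into \emph{big} blue vertices ($|b'|>\epsilon|r_j|$) and \emph{small} ones ($|b'|\le\epsilon|r_j|$); since $\mu_j\ge 4\epsilon|r_j|$, one of the two groups contributes at least $2\epsilon|r_j|$ to $\mu_j$. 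If it is the big group, then using $|b'|\le|r_j|$ (red endpoints are the larger ones in $\gbuc$) the big vertices' degree reciprocals sum to at least $2\epsilon$, so with probability at least $1-e^{-2\epsilon}$ some big vertex proposes to $r_j$ and by itself pushes the proposer-size past $\epsilon|r_j|$. If it is the small group, the proposer-size restricted to small vertices is a sum of independent $[0,\epsilon|r_j|]$-valued variables of mean at least $2\epsilon|r_j|$, and a Chernoff lower-tail bound makes it exceed $\epsilon|r_j|$ with probability at least $1-e^{-1/4}$. Either way $r_j$ is removed with probability at least $c_0:=\min(1-e^{-2\epsilon},\,1-e^{-1/4})>0$, so the expected number of deleted edges incident to $b$ is at least $(d/2)c_0=\Omega(d)$, i.e.\ a constant fraction of $b$'s edges.

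The hard part is the heavy case: $\mu_j\ge 4\epsilon|r_j|$ only says the proposer-size has mean a constant factor above the threshold $\epsilon|r_j|$, and a mean-above-threshold statement does not by itself lower-bound the probability of exceeding that threshold (the mass could sit in one rarely-proposing vertex). The big/small split is exactly what circumvents this --- a big proposer overshoots single-handedly, while small proposers are bounded enough for one-sided concentration --- so choosing the constants ($4\epsilon$ for the light/heavy cutoff, $2\epsilon$ after the split) so that the \emph{same} cutoff still yields the merge-probability bound in the light case is the technical heart. A secondary point to nail down is the precise meaning of $|r_j|$ in Line~\ref{bkt:selecttr} (cluster size at the start of the inner versus the outer round), and to check that in either reading a single-round increase of more than $\epsilon|r_j|$ triggers the $(1+\epsilon)$-growth removal of Line~\ref{bkt:removelarger}.
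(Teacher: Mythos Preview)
Your argument is essentially correct, but the paper takes a much shorter and conceptually different route. Instead of defining a load $\mu_j$, splitting neighbors into light/heavy, and then further into big/small proposers with a Chernoff bound, the paper simply \emph{fixes all priorities} $\pi_{b'}$ and observes that the algorithm is then equivalent to processing blue vertices sequentially in priority order: when a blue vertex's turn arrives, it merges into its random choice $C_{b'}$ if and only if that red vertex has not yet become saturated (grown by a $(1+\epsilon)$ factor). Conditioning further on the choices $C_{b'}$ of all blue vertices with priority higher than $b$ determines exactly which of $b$'s red neighbors are saturated at $b$'s turn. If more than half are saturated, those edges are all removed in Line~\ref{bkt:removelarger}, giving~(a) deterministically under this conditioning; otherwise $C_b$ hits a non-saturated neighbor with probability at least $1/2$ and $b$ merges, giving~(b). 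No expectations of loads, no concentration, and the whole proof is four sentences.

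The trade-off: your light/heavy split is a \emph{deterministic} property of $\gbuc$ at the start of the inner round, whereas the paper's saturated/unsaturated split depends on the realized higher-priority randomness, so the paper's dichotomy is conditional rather than absolute. On the other hand, your heavy case only establishes~(a) \emph{in expectation} (each heavy $r_j$ is removed with probability $\ge c_0$, hence $\Omega(d)$ edges deleted on average), which is strictly weaker than the lemma's phrasing and would require an additional reverse-Markov step before feeding into Lemma~\ref{lem:innerrounds}. The paper's deferred-decisions view sidesteps this entirely: once you condition down to the randomness in $C_b$ alone, case~(a) becomes a sure event rather than an expected one. Your concern about the meaning of $|r|$ in Line~\ref{bkt:selecttr} versus Line~\ref{bkt:removelarger} is legitimate but resolves as you anticipated: a single-round growth exceeding $\epsilon$ times the \emph{current} size certainly exceeds $\epsilon$ times the size at the start of the outer round.
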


Using this property, we can bound the number of inner rounds within an outer round to be $O(\log n)$ \whp{}.
Next, we show that the number of outer rounds is also $O(\log n)$ \whp{}.

\begin{restatable}{lemma}{lemmaouterrounds}\label{lem:outerrounds}
The number of outer rounds in a call to Algorithm~\ref{alg:bucketmerge} is
$O(\log n)$ with high probability.
\end{restatable}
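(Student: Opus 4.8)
The plan is to fix a single \emph{layer-contraction phase} — one execution of the outer \texttt{while} loop of Algorithm~\ref{alg:bucketmerge} with a fixed threshold $T_L$ — and show it terminates within $O(\log n)$ outer rounds with high probability. Call an edge of $G$ \emph{heavy} if its weight is $\ge T_L$; the phase ends exactly when no heavy edge remains (i.e. $W_{\max} < T_L$).

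First I would set up bookkeeping based on the fact that average-linkage weights combine as size-weighted averages: when clusters $X$ and $B$ merge, $\mathcal{W}(X\cup B, Y) = \tfrac{|X|\,\mathcal{W}(X,Y) + |B|\,\mathcal{W}(B,Y)}{|X|+|B|}$, so a merged edge's weight never exceeds the maximum weight among its constituent edges. Consequently every heavy edge present at the start of an outer round is a ``continuation'' of a heavy edge present at the start of the previous round: its two endpoint clusters only grow, and one can define an injective map from heavy edges of round $i{+}1$ to heavy edges of round $i$. Hence there are at most $m$ \emph{heavy-edge lineages} over the whole phase, each endpoint size along a lineage is non-decreasing across rounds, and the phase ends precisely when all lineages have terminated (been contracted, turned light, or been absorbed into another lineage).

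Next I would prove a per-round progress dichotomy. Since each outer round recolors from scratch, for any fixed heavy edge $e=xy$, independently of earlier rounds, $e$ enters $\gbuc$ with probability $\ge 1/4$ (the observation already noted after Line~\ref{bkt:constructgb}). I claim: if $e\in\gbuc$ at the start of an outer round, then once the inner loop finishes — at which point $\gbuc$ is empty, so $e$ has left it — the lineage of $e$ has either terminated or one of its two endpoint cluster sizes has grown by a factor $\ge 1+\epsilon$. Indeed, $e$ can leave $\gbuc$ only by (a) being contracted, (b) its weight dropping below $T_L$, (c) its red endpoint being removed by Line~\ref{bkt:removelarger} because that cluster grew by more than a $1+\epsilon$ factor, or (d) its blue endpoint $x$ merging into a red vertex $r'\neq y$ so that $e$ becomes a red--red edge. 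Cases (a),(b) terminate the lineage; in (c) the red endpoint grew by $\ge 1+\epsilon$; in (d), because every $\gbuc$-edge joins a blue vertex to a red vertex of at least equal size, $x$ merged into a cluster $r'$ with $|r'|\ge|x|$, so the endpoint containing $x$ at least doubled. Since endpoint sizes lie in $[1,n]$ and are non-decreasing along a lineage, a lineage can sustain at most $2\log_{1+\epsilon}n = O(\log n)$ such growth events before terminating, so \emph{deterministically} a lineage's current edge lies in $\gbuc$ during at most $O(\log n)$ outer rounds.

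Finally I would combine these: for any fixed lineage that is still active, each outer round independently places its edge into $\gbuc$ with probability $\ge 1/4$, so by a Chernoff bound (plus a standard stopping-time argument to handle conditioning on ``still active'') within $O(\log n)$ rounds this happens more than the $O(\log n)$ times the lineage can afford, hence the lineage terminates within $O(\log n)$ rounds with probability $\ge 1-n^{-c}$ for any desired constant $c$; a union bound over the at most $m=\mathrm{poly}(n)$ lineages then finishes the proof. Lemma~\ref{lem:blueprogress} is used only to ensure the inner loop actually drains $\gbuc$ within an outer round so that the case analysis applies; it is the endpoint-growth budget that controls the number of outer rounds. The hard part will be the dichotomy in the third step — in particular case (d): ruling out that a heavy edge ``treads water'' by shuffling its blue endpoint into other red clusters without any cluster growing by a constant factor; this is exactly where the invariant that $\gbuc$-edges point from a smaller (blue) to a larger-or-equal (red) cluster is essential, and matching the constant $\sqrt 2$ (or $2$) there against the $1+\epsilon$ of Line~\ref{bkt:removelarger} may require scaling $\epsilon$ down by an absolute constant.
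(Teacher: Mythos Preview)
Your proposal is correct and follows essentially the same approach as the paper: the same four-case dichotomy on how an edge leaves $\gbuc$, the same key observation that in the red--red case the (former) blue endpoint at least doubles because $\gbuc$-edges go from smaller blue to larger-or-equal red, the same $O(\log n)$ growth budget per edge, and the same ``constant inclusion probability $+$ Chernoff $+$ union bound'' wrap-up. Your lineage bookkeeping is slightly more elaborate than the paper's (which simply tracks a fixed edge $e$ of $G$), and your closing worry about matching $2$ against $1+\epsilon$ is unnecessary---both factors yield an $O(\log_{1+\epsilon} n)=O(\log n)$ bound on growth events, so no rescaling of $\epsilon$ is needed here.
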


The proof works by showing that for each edge $e$ of the graph $\gbuc$ at 
the beginning of the loop, either (a) the endpoints of $e$ are merged together, or 
(b) the weight of $e$ drops below $T_L$, or (c) an endpoint of $e$ increases its 
size by a factor of $(1+\epsilon)$.

Finally, assuming the aspect-ratio $\mathcal{A} = \wmax / \wmin = O(\mathsf{poly}(n))$, 
the number of layer-contraction phases is $O(\mathsf{polylog}(n))$.\footnote{All 
existing HAC approximation algorithms make similar assumptions on the aspect ratio~\cite{abboud19hac, dhulipala2021hierarchical}.}
Putting all of the previous results together and implementing the algorithm using
standard parallel primitives in the work-depth model, we obtain the following result:
\begin{theorem}\label{thm:overall}
\parhac{} is a $(1+\epsilon)$-approximate algorithm for average-linkage HAC that runs in $\tilde{O}(m + n)$ work in expectation and has $O(\log^{4}(n))$ depth with high probability.
\end{theorem}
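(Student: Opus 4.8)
The plan is to assemble the three guarantees from the structural lemmas together with two monotonicity facts about average-linkage; throughout we take $\epsilon$ to be a constant and assume the aspect ratio $\mathcal{A}=\wmax/\wmin$ is polynomial in $n$. First I would record that over the whole run of \parhac{} the maximum edge weight of the working graph is non-increasing: a merge of clusters $X,Y$ replaces the weights $w(X,W)/(|X|\,|W|)$ and $w(Y,W)/(|Y|\,|W|)$ by their convex combination $(w(X,W)+w(Y,W))/((|X|+|Y|)\,|W|)$, so no weight ever exceeds the previous maximum. Hence the layer-contraction phases of \parhac{} are invoked on thresholds that shrink by a $(1+\epsilon)$ factor each time, and there are $O(\log_{1+\epsilon}\mathcal{A})=O(\log n)$ of them. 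Second, within a single outer round of Algorithm~\ref{alg:bucketmerge} the colors are fixed, blue vertices only ever merge into reds (so a blue cluster's size is frozen until the blue vertex disappears), and Line~\ref{bkt:removelarger} guarantees that a red cluster grows by at most a $(1+\epsilon)$ factor; moreover $\gbuc$ only shrinks during an outer round, so any edge $(b,r)\in\gbuc$ present when $b$ merges was present, with weight $\geq T_L$, at the start of the round. For that edge the numerator (the cut weight between the two clusters) only grew since the start of the round, because the red side merely absorbed further blue clusters and all weights are positive, while the denominator grew by at most $(1+\epsilon)$; so at the moment of merging its weight is at least $T_L/(1+\epsilon)\geq W_{\max}^{\mathrm{cur}}/(1+\epsilon)^2$ using $T_L=(1+\epsilon)^{-1}W_{\max}$ and the monotonicity of $W_{\max}$. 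Thus every merge contracts a $(1+\epsilon)^2$-heavy edge, and running the algorithm with $\epsilon_0$ satisfying $(1+\epsilon_0)^2\leq 1+\epsilon$ turns this into a genuine $(1+\epsilon)$-approximate algorithm; validity of the produced dendrogram follows since each inner round contracts a set of edges and Line~\ref{bkt:updatedendrogram} serializes a multi-merge into a red in a fixed (sorted) order.

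\textbf{Depth.} By Lemma~\ref{lem:outerrounds} there are $O(\log n)$ outer rounds per layer-contraction phase \whp{}, and Lemma~\ref{lem:blueprogress} (each blue vertex either loses a constant fraction of its incident $\gbuc$-edges or is merged with constant probability) gives, by a standard amplification, $O(\log n)$ inner rounds per outer round \whp{}; combined with the $O(\log n)$ phases this is $O(\log^3 n)$ inner rounds in total. Each inner round is a constant number of calls to standard parallel primitives — a random color/priority assignment, a (semi)sort of the triple set $T$, segmented prefix sums to choose the prefix of each $T_r$, a batched contraction of $G$ and $\gbuc$ using parallel hashing to consolidate parallel edges, and a few parallel filters — each running in $O(\log n)$ depth \whp{}. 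Multiplying the four logarithmic factors and taking a union bound over the $\mathsf{polylog}(n)$ rounds gives $O(\log^4 n)$ depth \whp{}.

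\textbf{Work.} I would bound the work of an inner round by $\tilde{O}(|\ebuc|)$ (dominated by the sort and the contraction; every vertex touched is incident to a $\gbuc$-edge), plus $O(m+n)$ per outer round to (re)build $\gbuc$ and recolor. The key point, implicit in Lemma~\ref{lem:blueprogress}, is that when a blue vertex $b$ merges into a red its $\gbuc$-edges all become red--red edges and are deleted, so in each inner round $|\ebuc|$ shrinks by a constant factor in expectation, whence $\sum_{\text{inner rounds of an outer round}}\mathbb{E}[|\ebuc|]=O(|\ebuc^{(0)}|)\leq O(m)$ by linearity of expectation. Summing over the $O(\log n)$ outer rounds per phase and $O(\log n)$ phases gives $\tilde{O}(m+n)$ expected work for the per-round overhead. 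The cost of the merges themselves I would amortize with the ``merge smaller into larger'' rule enforced when $\gbuc$ is built (Line~\ref{bkt:constructgb}): a cluster on the blue side of a merge at least doubles, so any fixed vertex lies on the moving side $O(\log n)$ times, and charging the $O(\deg(c))\leq\sum_{v\in c}\deg_0(v)$ cost of each move yields $O(m\log n)$ total. Adding everything, the expected work is $\tilde{O}(m+n)$.

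\textbf{Main obstacle.} I expect the delicate part to be the approximation argument under concurrent merges: one must control the numerator and denominator of an edge's average-linkage weight simultaneously as a single red cluster absorbs several blue clusters in one outer round, and relate this to the globally non-increasing $W_{\max}$ — the size cap of Line~\ref{bkt:removelarger} and the color-orientation of $\gbuc$ are exactly what make this work, but phrasing the invariants so that they compose cleanly across inner rounds, outer rounds, and phases (and survive the constant-factor rescaling of $\epsilon$) requires care. A secondary point is that the round-count bounds hold \whp{} while Lemma~\ref{lem:blueprogress} only controls $|\ebuc|$ in expectation, which is why the theorem separately claims expected work and \whp{} depth rather than a single bound.
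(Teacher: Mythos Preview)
Your proposal is correct and follows essentially the same route as the paper: the approximation argument (monotone $W_{\max}$ via reducibility, the $(1+\epsilon)$ size cap on reds bounding the merge-weight drop to $T_L/(1+\epsilon)\geq W_{\max}/(1+\epsilon)^2$, then rescaling $\epsilon$) and the four-factor depth decomposition ($O(\log n)$ phases $\times$ outer rounds $\times$ inner rounds $\times$ per-round depth) are exactly what the paper does.

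Your work analysis is slightly more refined than the paper's. The paper takes the blunt route: charge $O(m)$ work per inner round for updating both $G$ and $\gbuc$, then multiply by the $O(\log^3 n)$ total rounds. You instead (i) bound the $\gbuc$-side work by a geometric sum over inner rounds using the constant-factor edge shrinkage implicit in Lemma~\ref{lem:blueprogress}, and (ii) amortize the $G$-side merges across the entire run via the smaller-into-larger rule enforced by the blue-$\leq$-red orientation on Line~\ref{bkt:constructgb}. Both refinements are valid (for (i), note that every $\gbuc$-edge has a unique blue endpoint, so summing the per-blue guarantee of Lemma~\ref{lem:blueprogress} over all blues does control $\mathbb{E}[|\ebuc|]$), and they shave a $\log$ factor off the paper's bound, but since the theorem only claims $\tilde{O}(m+n)$ the difference is immaterial.
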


\subsection{Lower Bound}
We complement our upper bound with a parallel hardness result for \emph{exact} graph-based average-linkage HAC. 
Since the problem is in $\mathsf{P}$, we show $\mathsf{P}$-hardness
to show the \pcomplete{}ness result. 
Specifically, we give an \nc{} reduction from the \pcomplete{} 
monotone circuit-value problem (monotone CVP) by transforming a
monotone circuit into a graph-based HAC instance such that two vertices are
merged into the same cluster with a given merge similarity if and only if
a target output gate evaluates to true. With minor modifications,
our reduction extends to the show the $\mathsf{P}$-completeness of a
variant of average-linkage called WPGMA-linkage; we provide
our constructions and proofs in the Appendix.

\subsection{Algorithm Implementation}
We implemented \parhac{} in C++ using the 
\emph{CPAM} (Compressed Parallel Augmented Maps) framework~\cite{dhulipala21cpam},
which provides compressed and highly space-efficient ordered maps and sets.
We build on CPAM's implementation of the Aspen 
framework~\cite{dhulipala19aspen, dhulipala21cpam}, which provides
a compressed dynamic graph representation that supports efficient
parallel updates (batch edge insertions and deletions).

\myparagraph{Compressed Clustered Graph Representation}
In practice \parhac{} can perform a large number of rounds per-layer 
in the case where $\epsilon$ is small (e.g., $\epsilon = 0.01$). 
Although updating the entire graph on each of these
rounds is theoretically-efficient (the algorithm will
only perform $\tilde{O}(m + n)$ work), many of these rounds only merge
a small number of vertices, and leave the majority of the edges
unaffected, and so updating the entire graph each round
can be highly wasteful.

Instead, we designed an efficient \emph{compressed clustered graph
representation} using the CPAM framework~\cite{dhulipala21cpam} which
enables us to update the underlying similarity graph in work
proportional to the number of merged vertices and their incident neighbors,
rather than proportional to the total number of edges in the graph. Importantly,
using CPAM enables lossless compression for integer-keyed maps to store the
cluster adjacency information using just a few bytes per edge.\footnote{\revised{We obtain a 2.9x space savings using our CPAM-based implementation over an optimized hashtable-based implementation of a clustered graph; the running times of both implementations are essentially the same.}}
We provide more details about the representation and the supported operations in the Appendix.

\myparagraph{Other Hierarchical Graph Clustering Algorithms}
Our new clustered graph representation makes it very easy to implement other
parallel graph clustering algorithms.
In particular, we developed a faithful
version of the Affinity clustering algorithm~\cite{bateni2017affinity} and 
the recently proposed SCC algorithm~\cite{monath2020scalable}
(which is essentially a thresholded version of Affinity) using a few
dozens of lines of additional code.
Both algorithms are essentially heuristics that are designed to mimic the
behavior of HAC, while running in very few rounds (an important constraint
for the distributed environments these algorithms are designed for).
We note that most of the work done by these algorithms is the work required to merge
clusters in the underlying graph, and so by using the same primitives for merging
graphs, we eliminate a significant source of differences when comparing algorithms.

\section{Empirical Evaluation}\label{sec:empirical}

\myparagraph{Experimental Setup} We ran all of our experiments on a 72-core
Dell PowerEdge R930 (with two-way hyper-threading) with $4\times 2.4\mbox{GHz}$
Intel 18-core E7-8867 v4 Xeon processors (with a 4800MHz bus and 45MB L3 cache)
and 1\mbox{TB} of main memory. 
Our programs use a lightweight work-stealing parallel scheduler~\cite{arora2001thread,blelloch2020parlay}.
Further details about our setup and input data can be found in the Appendix.

\myparagraph{HAC Algorithms Evaluated}
We compare \parhac{} with several HAC baselines. 
\seqhac{} is the approximate sequential average-linkage algorithm
that was recently introduced~\cite{dhulipala2021hierarchical}.
\parhac{}$_{\mathcal{E}}$ and \seqhac{}$_{\mathcal{E}}$
are exact versions of the \parhac{} and \seqhac{} algorithms, where
the \parhac{} code takes $T_L = W_{\max}$ in each layer-contraction phase
and sets $\epsilon=0$
(i.e., the layer only consists of equal-weight edges).
\parhac{}$_{0.1}$ and \seqhac{}$_{0.1}$ both use $\epsilon=0.1$.
We also evaluate our implementation of
Affinity clustering~\cite{bateni2017affinity}
and the recently developed SCC algorithm~\cite{monath2020scalable}, which
we refer to as $\mathsf{ParAffinity}$ and $\mathsf{ParSCC}_{\emph{sim}}$.
Both these algorithms can be thought of as heuristic parallelizations of the HAC algorithm, which are designed with speed and good parallelization properties in mind.
We describe these in more detail in the Appendix.
Lastly, we also compare the graph-based implementations 
of HAC to pointset-based HAC implementations from the
scipy package using the single-, complete-, average-, and Ward-linkage
measures.
These are exact HAC algorithms, which look at the complete similarity matrix 
and thus require time which is at least quadratic in the input size.

\myparagraph{Building Similarity Graphs from Pointsets}
Some of our experiments generate graphs from a pointset by 
computing the approximate nearest neighbors (ANN) of
each point, and converting the distances to similarities.
We convert distances to similarities
using the formula $\mathsf{sim}(u,v) = \frac{1}{1 + \mathsf{dist}(u,v)}$.
We then reweight the similarities by dividing each similarity by the maximum similarity.
We compute the $k$-approximate nearest neighbors using a shared-memory parallel implementation
of the \emph{Vamana} approximate nearest neighbors (ANN) algorithm~\cite{vamana} 
with parameters $R=75, L=100, Q=\max(L, k)$.
We discuss more details about the process in the Appendix.

\subsection{Quality Evaluation}\label{subsec:quality}
We start by investigating the quality of \parhac{} with 
respect to ground-truth clusterings.
Our goal is to understand (1) whether
\parhac{} preserves the clustering quality of exact average-linkage HAC (using a complete similarity matrix), and
(2) what value of $\epsilon$ to use in practice. The results in this sub-section affirmatively answer (1), and show that a value of $\epsilon=0.1$ achieves comparable quality to exact average-linkage HAC, which prior works have identified as a state-of-the-art hierarchical clustering method and use as their primary quality baseline~\cite{bateni2017affinity, dhulipala2021hierarchical,monath2020scalable}.

We evaluate our algorithms on the
\emph{iris}, \emph{wine}, \emph{digits},
and \emph{cancer}, and \emph{faces} classification datasets
from the UCI dataset repository (found in the sklearn.datasets package).
We run all of the graph-based clustering algorithms on similarity
graphs generated from the input pointsets using $k=10$ in the approximate
$k$-NN construction. 
We run the scipy pointset clustering algorithms directly on the input pointsets.
To measure quality, we use the
\emph{Adjusted Rand-Index (ARI)} and 
\emph{Normalized Mutual Information (NMI)} scores, as well as the
\emph{Dendrogram Purity} measure of a hierarchical clustering~\cite{heller05bayesian}
which we define in the Appendix.
We also study the unsupervised \emph{Dasgupta Cost}~\cite{Dasgupta2016} measure.
We give definitions of the measures in the Appendix.

\begin{figure*}[!t]
\vspace{-3em}
\begin{minipage}{.48\columnwidth}
\includegraphics[scale=0.42]{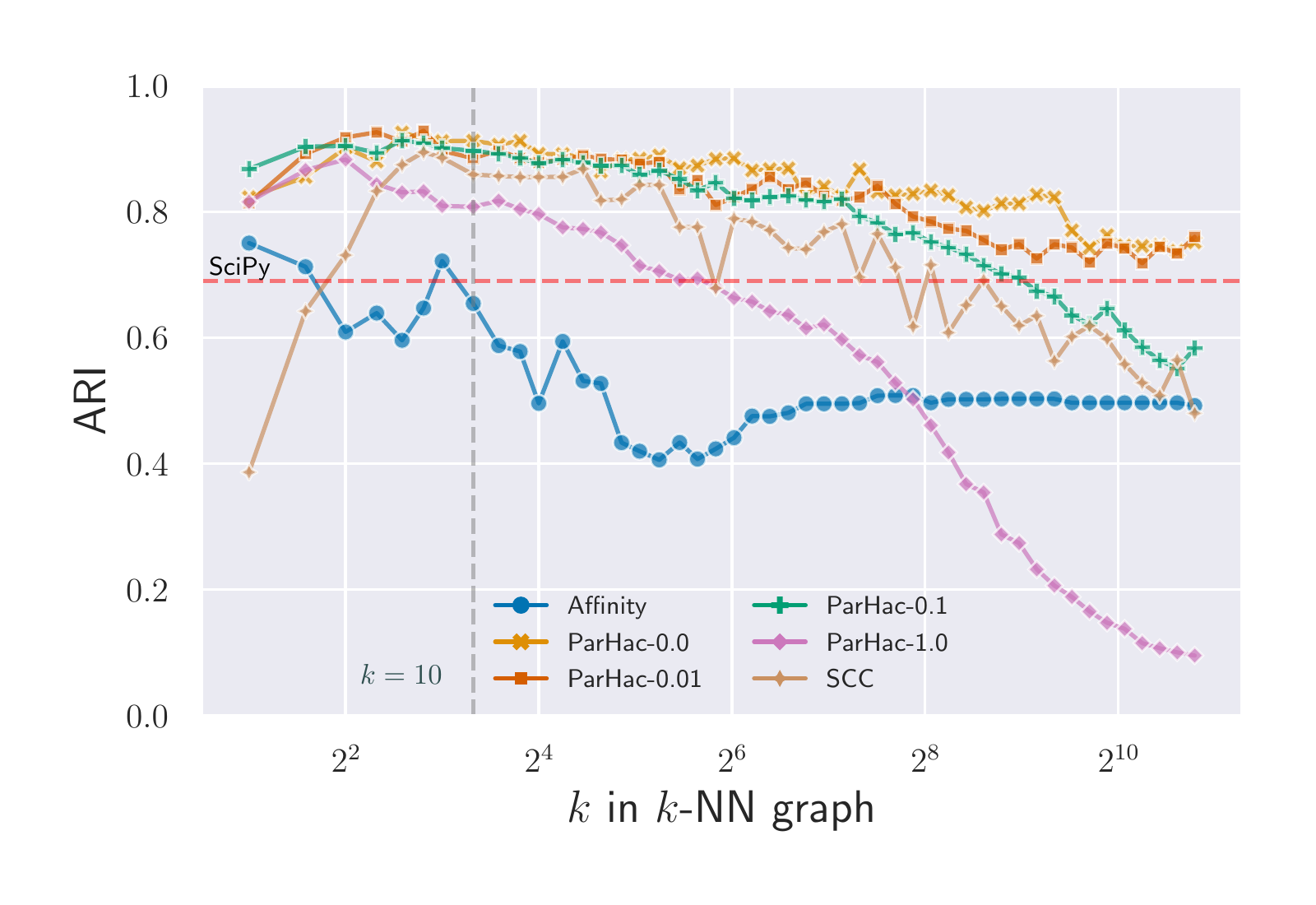}
\end{minipage}\hfill
\begin{minipage}{0.48\columnwidth}
\includegraphics[scale=0.39]{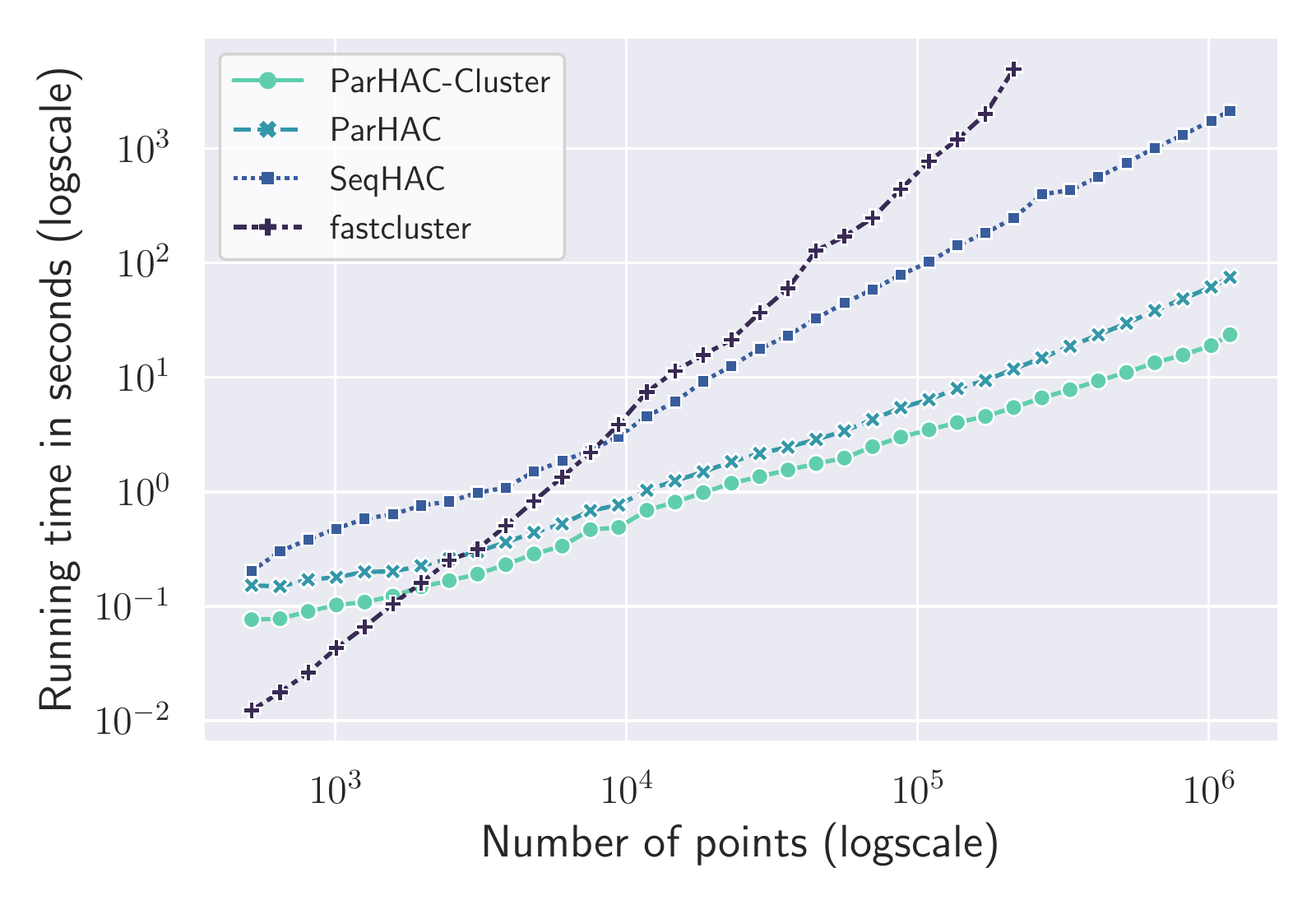}
\end{minipage}\\

\vspace{-0.3in}
\begin{minipage}[t]{.48\columnwidth}
\caption{\small
\revised{ARI score on the \emph{digits} point dataset as a 
function of the $k$ used in graph building. \parhac{} using
$\epsilon=0.1$ is the green line using $\mathsf{+}$ markers. 
The gray vertical line highlights the values for $k=10$, which
is the value of $k$ we use in our detailed quality evaluation
in the Appendix. The red horizontal line is the ARI of 
SciPy's average-linkage.
\label{fig:digits_ari_vs_k}}}
\end{minipage}
\hfill
\begin{minipage}[t]{.48\columnwidth}
\caption{\small
\revised{End-to-end running times of fastcluster's
unweighted average-linkage, \seqhac{} using $\epsilon=0.1$,
and \parhac{} using $\epsilon=0.1$ and 144 hyper-threads
on varying-size slices of the glove-100 dataset.
The time for \seqhac{} and \parhac{} includes
the cost of solving ANN and generating the input
similarity graph; \parhac{}-Cluster shows only the clustering time.
We terminated methods that run for more than 3 hours.
\label{fig:end_to_end}}}
\end{minipage}
\vspace{-0.1in}
\end{figure*}



\myparagraph{Results}
We present a table of our results in the Appendix and summarize
our findings here. Our main finding is that \parhacappx{} achieves 
consistently high-quality results across all of the quality 
measures.
For instance, for the ARI measure, \parhacappx{} is on average within 1.5\% of the best
ARI score for each graph (and achieves the 
best score for one of the graphs).
For the NMI measure, \parhacappx{} is on average within 1.3\% of the
best NMI score for each graph (and again achieves the 
best score for two of the graphs).
\parhacappx{} also achieves good results for the dendrogram purity and
Dasgupta cost measures. For purity, it is on average within 1.9\%
of the best purity score for each graph, achieving the best score for one
of the graphs, and for the unsupervised Dasgupta cost measure
it is on average within 1.03\% of the smallest Dasgupta cost score
for each graph.

Compared with the SciPy average-linkage which is an exact HAC algorithm running on the underlying pointset, \parhacappx{} achieves
14.4\% better ARI score on average,
3.6\% better NMI score on average,
4.7\% better dendrogram purity on average, and
1.02\% larger Dasgupta cost on average.
Compared to the best quality result obtained by either \sccsim{} or Affinity, \parhacappx{} consistently
obtains better quality results, achieving
35.6\% better ARI score on average,
12.1\% better NMI score on average,
6.7\% better dendrogram purity on average, and
3.1\% better Dasgupta cost on average.

Overall, we find that \parhac{} achieves
consistently high quality results across the four quality
measures that we evaluate. Our results show that being more 
faithful to the HAC algorithm allows \parhac{} to obtain 
meaningful quality gains over Affinity and \sccsim{}.

\myparagraph{Results with Varying $k$}
When converting a pointset input to the similarity setting using the
$k$-NN approach as in this paper, what value of $k$ is required to 
achieve high quality?
We studied each of the quality measures for the studied algorithms as
a function of the $k$ used in the $k$-NN construction, and present our 
full results for each measure and each dataset in the Appendix. Here,
Figure~\ref{fig:digits_ari_vs_k} shows a representative result for 
the ARI measure on the digits point dataset.

We find that even modest values of $k$ yield very high quality 
results and can \emph{significantly outperform exact metric HAC algorithms}
on the original pointset. Furthermore, even tripling, or increasing $k$ an order
of magnitude either yields negligible improvement for most quality measures, or
in fact degrades the quality. For example, for \parhac{} with $\epsilon=0.1$,
using $k=100$ is 10\% worse than using $k=10$, and using $k=1000$ is
47\% worse than $k=10$.
Our results suggest a twofold benefit from using a graph-based approach: (1) since
small values of $k$ are sufficient for high quality results, the inputs to the clustering 
algorithm can be smaller, resulting in faster running times, and (2) the overall clustering 
quality is higher with small inputs and can lead to significant improvements over using more
similarities (or using the full dissimilarity matrix, as in pointset clustering algorithms).

\begin{figure*}[t]
\begin{center}
\includegraphics
[trim=0em 0em 0em 1em, scale=0.6]{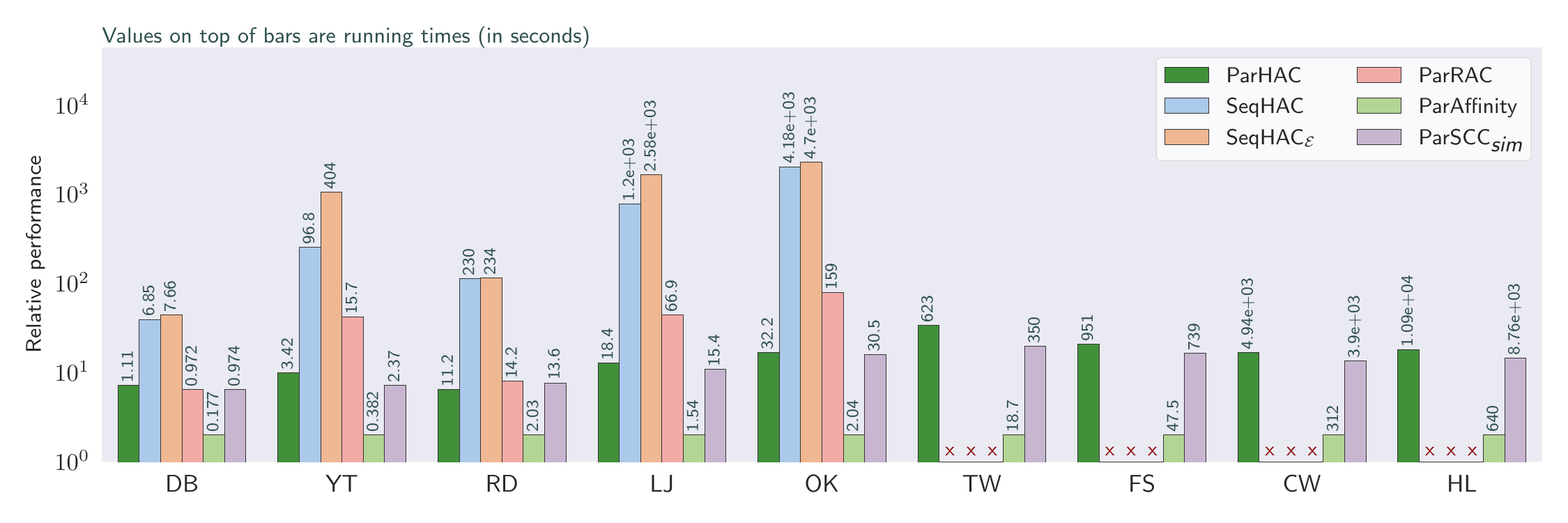}
\vspace{-0.1in}
\caption{\small
\revised{Relative performance of our \parhac{} algorithm compared to
other graph-HAC and HAC-inspired algorithms.
We prefix new implementations developed in this paper with the $\mathsf{Par}$ prefix.
The values on top of each bar show the running time of each algorithm in
seconds. The times shown for the parallel algorithms are 144 hyper-thread times.
We run the \parhac{} and \seqhac{} algorithm using $\epsilon=0.1$.
We terminated algorithms that ran longer than 6 hours and mark them with a red x.}
\label{fig:parallel_times_bar}}
\end{center}
\vspace{-0.1in}
\end{figure*}

\subsection{Evaluation on Large Real-World Graphs}
Next, we evaluated \parhac{}'s scalability on large real-world graphs,
including both social network graphs, as well as sparse, high-diameter
networks. We note that the Hyperlink (HL) graph is one of the largest publicly-available
graphs and contains 1.7B vertices and 125B edges.
Additional details and experiments can be found in the supplementary material; we summarize them:
\begin{enumerate}[label=(\arabic*),topsep=0pt,itemsep=0pt,parsep=0pt,leftmargin=25pt]
  \item \parhac{} achieves betewen 25.7--61.3x speedup over its running time on a single thread. 
  \item  We evaluated the running time of \parhac{} when the edge weights in the graph are selected
  based on structural properties of the input, e.g., the degree, or the number of triangles closed by the edge.
  Overall, we evaluated 6 highly-different weighting schemes, and find that the gap between the fastest and slowest schemes per-graph is at most 5x.
  \item Running \parhac{} with larger values of $\epsilon$ consistently results in lower running times, due to
  the fewer number of rounds used for large $\epsilon$.
\end{enumerate}
Figure~\ref{fig:parallel_times_bar} shows the relative performance
and of \parhac{} compared with 
the SeqHAC and SeqHAC$_{\mathcal{E}}$ algorithms, and our
implementations of the \rac{}, Affinity, and \sccsim{} algorithms.
We see that our Affinity implementation is always the fastest on our graph inputs. 
This is due to the low theoretical and empirical round-complexity of Affinity.
Compared with Affinity, \sccsim{}, which runs Affinity with weight thresholding
using 100 iterations is an average of 11.5x slower than Affinity due to the cost 
of the additional iterations.
Compared to Affinity and \sccsim{}, \parhac{} is an average of 14.8x slower
than our Affinity implementation and 1.24x slower than our \sccsim{} implementation.
The speed difference between \parhac{} and Affinity is due to the larger number
of fine-grained rounds required by \parhac{}.
However, as we showed in Section~\ref{subsec:quality}, the faster running time of affinity clustering and \sccsim{} comes at the cost of significantly lower quality.

Compared with methods that implement exact or $1+\epsilon$-approximate HAC, \parhac{} is significantly faster. Compared with \seqhac{} and
\seqhac{}$_{\mathcal{E}}$, \parhac{} obtains 50.1x and 86.4x speedup 
on average respectively. Compared with \rac{}, \parhac{} achieves an average
speedup of 7.1x on the graphs that \rac{} can successfully complete within
the time limit. Although \rac{} is faster than \parhac{} on two of our 
small graph inputs, it requires a very large number of rounds on the remaining graphs. 

To the best of our knowledge, our results are the first to show that
graphs with tens to hundreds of billions of edges can be clustered in a
matter of tens of minutes (using heuristics like Affinity and \sccsim{} with fewer iterations)
to hours (using \sccsim{} with many iterations, or methods with approximation guarantees such as \parhac{}).
Recently, results on trillion-edge similarity graphs have been
reported~\cite{monath2020scalable, sumengen2021scaling}. 
Due to the memory requirements of storing such large datasets in memory, 
\parhac{} may not be directly applicable; however, it may be possible to
design a distributed $(1+\epsilon)$-approximate HAC algorithm with low round-complexity by
building on the ideas in this paper.

\ifx\confversion\undefined
\myparagraph{Discussion}
To the best of our knowledge, our results are the first to show that
graphs with tens to hundreds of billions of edges can be clustered in a
matter of tens of minutes (using heuristic methods like Affinity and \sccsim{} with fewer iterations)
to hours (using \sccsim{} with many iterations or methods with approximation guarantees such as \parhac{}).
We are not aware of other shared-memory clustering results 
that work at this large scale. 
Our theoretically-efficient implementations 
can be viewed as part of a recent line of work showing
that theoretically-efficient shared-memory parallel graph algorithms can scale
to the largest publicly available graphs using a modest 
amount of resources~\cite{dhulipala2017julienne, dhulipala18scalable, dhulipala19aspen, dhulipala2020semi, shi2021scalable, shi2021parallel}.
\fi

\subsection{Pointset Clustering: End-to-End Evaluation}
We conclude by evaluating the performance of \parhac{} in the case when the input is a pointset.
In this setting, in order to use a graph-based HAC algorithm, one needs to construct a similarity graph before running the clustering algorithm.
As a baseline, we use the average-linkage HAC implementation from
fastcluster~\cite{fastcluster}, which takes a pointset input.
We compare with fastcluster as we found it yields more consistent (and slightly faster) 
performance for larger numbers of points than the implementations in sklearn and SciPy.\footnote{We performed a similar
comparison with the recently proposed ParChain framework for exact euclidean HAC~\cite{parchain};
here, \parhac{}  obtained an average of 39.3x speedup (end-to-end) in parallel over ParChain.}
We also compare \parhac{} to the \seqhac{} algorithm in this setting, where
\seqhac{} uses the same similarity-graph building method as \parhac{} described earlier
in this section, and both
algorithms are run using $\epsilon=0.1$.

We run our end-to-end experiment on the Glove-100 dataset, which is a 
100-dimensional dataset containing vector-embeddings for 1.18 million
words. Figure~\ref{fig:end_to_end} shows the results of our experiment.
We stress that the running times of the graph-based algorithms (\seqhac{} and \parhac{}) include the time spent building the graph.
First, we found that for $n > 3000$, the end-to-end times of
\parhac{} are always faster than the time taken by fastcluster, and
for $n > 9400$, the end-to-end times of \seqhac{} are always faster
than the time taken by fastcluster.
Comparing fastcluster on the largest slice of the Glove-100 dataset it can solve in under 
three hours with the graph-based methods, \seqhac{} is 20x faster, and
\parhac{} is 417x faster.

\section{Conclusion}
In this paper we have introduced \parhac{}, the first
parallel algorithm for hierarchical agglomerative graph clustering
using the average-linkage measure that has strong
theoretical-bounds on its work and depth, as well as 
provable approximation guarantees.
We have shown that \parhac{} scales to massive
real-world graphs with tens to hundreds of billions of edges on a 
single machine and achieves high-quality results compared to
existing graph-based and pointset-based hierarchical clustering methods.

An interesting question for future work is whether we can reduce the
round complexity of our algorithm to make it more suitable for distributed
settings.
Another interesting question is to compare the running time
and quality of bottom-up HAC methods for size-constrained
clustering and balanced partitioning problems, which
typically rely on top-down methods~\cite{aydin2019distributed, , bulucc2016recent, dhulipala16compressing, socialhash}.
Although it has recently been shown that it may not be possible to design algorithms for 
incremental and dynamic HAC with good worst-case guarantees~\cite{tseng2022parallel}, it would be interesting to
design practical $(1+\epsilon)$-approximations for these
settings, potentially building on ideas from \parhac{}.

\bibliographystyle{plainnat}
\bibliography{references}

\clearpage
\appendix
\section{Missing Details and Proofs}\label{apx:proofs}

We denote the degree of vertex $v$ by $\degree{v}$ and use $\cut{X,Y}$ to denote 
the set of edges between two sets of vertices $X$ and $Y$.

In this paper we focus on the \emph{average-linkage measure} (sometimes also called \emph{unweighted average-linkage} or \emph{UPGMA-linkage}), which assumes that the similarity between two clusters
$(X,Y)$ is equal to $\sum_{(x,y) \in \cut{X,Y}} w(x,y)/(|X| \cdot |Y|)$,
that is the total weight of edges between $X$ and $Y$, divided by the maximum number of possible edges between the clusters.
A less common variant of average-linkage is \emph{\weightedavglink{} (WPGMA-linkage)}.
For this measure, the similarity between two clusters depends not only on the
current clustering, but also on the sequence of merges that created it.
In particular, if a cluster $Z$ is created by merging clusters
$X$ and $Y$, the similarity of the edge between $Z$ and a neighboring cluster
$U$ is $\frac{\hacweight{X}{U} + \hacweight{Y}{U}}{2}$ if both edges $(X,U)$ and $(Y,U)$ exist before the merge, and otherwise just the weight of the single existing edge.
We stress that \emph{unweighted} and \emph{weighted} in the linkage measure 
names refer to the linkage methods. In both cases, throughout this paper we
consider graphs with arbitrary non-negative edge weights.

A linkage measure is \emph{reducible}~\cite{nn-chain}, if
for any three clusters $X,Y,Z$,
it holds that $\mathcal{W}(X \cup Y, Z) \leq \max(\mathcal{W}(X,Z),
\mathcal{W}(Y,Z))$.
We note that both average-linkage and weighted average-linkage are reducible~\cite{nn-chain}.

\subsection{Sequential HAC}\label{sec:seqalg}
We start by reviewing the
existing exact and $(1+\epsilon)$-approximate sequential HAC algorithms
which are two baselines that we compare against, and also the
starting point for our theoretical results.

\myparagraph{Exact Average-Linkage}
A challenge in implementing the average-linkage HAC algorithm is to 
efficiently maintain edge weights in the graph as vertices (clusters) are merged.
Since the average-linkage formula includes the cluster sizes of both
endpoints of a $(u,v)$ edge when computing the weight of the edge, an algorithm which
eagerly maintains the correct weights of all edges in the graph must update \emph{all} of the
edge weights incident to a newly merged cluster.
Unfortunately, even for sparse graphs with $O(n)$ edges, one can show that such an
eager approach requires $\Omega(n^2)$ time.

It was recently shown~\cite{dhulipala2021hierarchical} that the exact average-linkage HAC 
algorithm can be implemented in $\tilde{O}(n\sqrt{m})$ time, using the classic 
nearest-neighbor chain technique~\cite{nn-chain,murtagh2012algorithms} in conjunction with a 
low-outdegree orientation data structure. 

\myparagraph{Approximate Average-Linkage (\seqhac{})}
Our parallel algorithm is inspired by a recent sequential approximate average-linkage HAC algorithm which we refer to as \seqhac{}~\cite{dhulipala2021hierarchical}, which runs in near-linear time.
The sequential algorithm uses a lazy strategy to avoid updating
the weights of all edges incident to each merged vertex.
Instead of exactly maintaining the weight of each edge, the algorithm uses
the observation that
the weights of edges incident to a vertex do not need to be updated every time the cluster represented by this vertex grows.
Namely, if we allow an $(1+\epsilon)^2$-approximate algorithm, the weights of edges incident to a vertex do not need to be updated until the cluster-size of this vertex grows by a multiplicative $(1+\epsilon)$ factor. (The approximation ratio of the algorithm is squared, since the clusters represented by \emph{both} endpoints of an edge may grow by a $(1+\epsilon)$ factor before its weight is updated.)
As this can happen at most $O(\log_{1+\epsilon} n)$ times, for a constant  $\epsilon$, the overall number of edge weight updates is $\tilde{O}(m)$.

The approximate algorithm adapts the folklore \emph{heap-based} approach~\cite{murtagh2012algorithms},
originally designed for an exact setting to the approximate setting.
Specifically, the approximate algorithm maintains a heap keyed by the vertices, where the value assigned to a vertex is its current highest-weight incident edge (whose weight is correct up to a $(1+\epsilon)^2$ factor). In each iteration, the highest-weight edge is chosen from the heap and its endpoints are merged.
By setting parameters appropriately one can ensure that the resulting algorithm is  $(1+\epsilon)$-approximate.

\subsection{Motivating \parhac{}}
Recall that our approach is based on \emph{geometric layering},
where we group the edges based on their weights and process all edges within
the same layer in parallel.
Let $\mathcal{W}_{\max}$ and $\mathcal{W}_{\min}$ be the maximum-weight and minimum-weight in the 
graph, respectively.
In the geometric layering scheme, the $i$-th layer contains all edges with
weight between $((1+\epsilon)^{-(i+1)}\cdot \wmax, (1+\epsilon)^{-i} \cdot \wmax]$.
The layering algorithm then processes these layers one-by-one, starting
with the heaviest-weight layer. 
An \defn{active vertex} is a vertex that
has edges in the layer currently being processed by the algorithm.
Under the standard assumption that the \defn{aspect-ratio} of the graph 
is polynomially bounded, that is $\mathcal{A} = \wmax / \wmin = O(\mathsf{poly}(n))$,
we can show that the maximum number of layers processed by the 
algorithm is $O(\log n)$ for any constant $\epsilon$.
We note that on the real-world datasets evaluated in this paper, which include
both graphs derived from real-world pointsets, and real-world graphs, the aspect 
ratio is at most $1000$, and that all
existing HAC approximation algorithms make similar assumptions on 
the aspect ratio~\cite{abboud19hac, dhulipala2021hierarchical}.
The key challenge for a parallel approximation algorithm based on this
paradigm is that it must process all of the elements within each
layer in $\mathsf{polylog}(n)$ rounds, while ensuring that it does not 
violate the $(1+\epsilon)$ approximation requirements.

\myparagraph{Natural Approaches: Spanning Forest and Affinity}
A natural idea is to compute a spanning forest
induced by the edges within the current layer, and to merge together all vertices
in each tree in the forest.
Using this approach, all edges within the current layer can be processed 
in a single spanning-forest step. 
Furthermore, after applying these merges all remaining edges in the graph 
fall into layers of smaller weight.
Using a work-efficient spanning forest algorithm, the overall work and 
depth of this approach will be $\tilde{O}(m)$ and $O(\mathsf{polylog}(n))$,
respectively~\cite{shun2014practical}.

A similar idea is used in the \emph{Affinity Clustering} algorithm of Bateni et al.~\cite{bateni2017affinity}.
In this algorithm, each vertex marks its highest-weight incident edge, and the sets of vertices to be merged are the connected components of the graph formed by the marked edges.
It is easy to see
that this approach yields a $\tilde{O}(m)$ work and $O(\mathsf{polylog}(n))$
depth algorithm in the geometric layering scheme.

Unfortunately, both of these natural approaches fail to yield $(1+\epsilon)$-approximate
algorithms because they both choose ``locally good'' edges to merge without
taking into account how the similarities of these edges change as they are mapped to
binary merges in the dendrogram.
For example, if the input graph is a path consisting of edges of the same weight, the spanning forest approach would merge all of the vertices together in one step.
At the same time it is easy to see that for $\epsilon < 1/2$ an $(1+\epsilon)$-approximate algorithm, within a single layer, must merge sets of vertices of size at most $2$.

\myparagraph{Our Approach: Capacitated Random Mate}
Our approach is inspired by the classic random-mate technique~\cite{BM10} when processing each layer, but requires a careful modification to yield good approximation guarantees.
Our algorithm starts by first randomly coloring the active vertices
red and blue with equal probability.
Directly applying the random-mate approach (e.g., as applied
in the parallel connectivity algorithms of Reif~\cite{reif1985cc} and
Phillips~\cite{phillips1989contraction}) would suggest merging
all blue vertices into an arbitrary red neighbor.
The number of edges in the layer decreases by a 
constant factor in expectation, thus yielding the desired work and depth
bounds.
Unfortunately, this approach does not yield good approximation guarantees
for the same reason that the spanning forest and affinity approaches fail
to do so.
The issue is that the size of a red cluster can become too
large, causing the similarity of a merged edge to be much smaller than
$W_{\max} / (1+\epsilon)$, where $W_{\max}$ is the \emph{current largest edge weight}.

Our \emph{capacitated random-mate} approach fixes this issue with 
the classic random-mate strategy
by processing each layer in multiple \emph{rounds} and enforcing vertex-capacities for the red vertices in each round.
These capacities ensure that the sizes of  clusters represented by red vertices increase by at most a $(1+\epsilon)$ factor within a round.
As a result, the weight of each edge incident to a red vertex may only decrease by a $(1+\epsilon)$ factor.

Specifically, consider a red vertex $v$ representing a cluster $C$.
Assume that at the beginning of a round all edge weights in the graph are computed exactly and the size of the cluster $C$ is $c_0$.
Within the round, vertex $v$ only accepts a merge proposal from a neighboring blue 
vertex if the total size of the cluster $C$, including the growth incurred by previous proposals accepted within this round, is smaller than
$(1+\epsilon)c_0$.
This approach ensures that the weight of any merged edge is close to $W_{\max}$, and thus lets us argue that our algorithm is $(1+\epsilon)$-approximate.
However, using this capacitated approach complicates the analysis of the
round-complexity of our algorithm, making it more challenging to bound the
overall work and depth.
Our main theoretical contribution is to show that the capacitated approach
results in a near-linear work and poly-logarithmic depth algorithm (\parhac{})
that achieves a $(1+\epsilon)$-approximation for any $\epsilon > 0$.

\subsection{Missing Proofs}

Algorithm~\ref{alg:parhac} shows the layer-based \parhac{} algorithm. Next, we present
the missing proofs for the work-depth and approximation analysis of the algorithm.

\begin{restatable}{lemma}{lemmaaspectratio}\label{lem:aspectratio}
Assuming that the aspect-ratio of the graph, $\mathcal{A} = O(\mathsf{poly}(n))$, the number of layer-contraction phases is $O(\log{n})$.
\end{restatable}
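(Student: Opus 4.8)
The plan is to track how the current maximum edge weight decreases from one layer-contraction phase to the next, and to pair this with a lower bound on the smallest edge weight that can ever occur during the execution; the ratio of these two quantities bounds the number of phases.

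First I would record that average-linkage weights are non-increasing over the whole execution: when clusters $X,Y$ merge into $Z = X\cup Y$, the new weight of the edge to a neighbor $U$ equals $\frac{W(X,U)\,|X| + W(Y,U)\,|Y|}{|X|+|Y|}$ (reading a missing edge as $0$), a convex combination of $W(X,U)$ and $W(Y,U)$, hence at most $\max(W(X,U),W(Y,U))$ — this is exactly reducibility of average-linkage. Consequently no merge ever produces an edge heavier than the current maximum, so once a phase with threshold $T_L$ finishes (its specification guarantees that every surviving edge has weight $< T_L$) the maximum weight stays below $T_L$ for the remainder of the algorithm. Since each phase picks $T_L = (1+\epsilon)^{-1} W_{\max}$ for the current maximum $W_{\max}$, an easy induction gives that after $j$ completed phases every edge of the graph has weight strictly below $(1+\epsilon)^{-j}\cdot\wmax$.

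Next I would lower-bound the weight of any edge present at any point: the weight of the edge between clusters $Z$ and $U$ is $\bigl(\sum_{(x,y)\in\cut{Z,U}} w(x,y)\bigr)/(|Z|\cdot|U|)$, whose numerator is at least one original edge weight (so $\geq \wmin$) whenever the edge exists, and whose denominator is at most $n^2$; hence every edge ever present has weight at least $\wmin/n^2$. This is the one step that needs care, and the main (mild) obstacle: average-linkage weights can shrink by a large multiplicative factor in a single merge — a small cluster absorbed into a much larger one — so this bound does not follow merely from the original edge weights and really requires the explicit numerator/denominator estimate.

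Finally I would combine the two facts. If the graph is still nonempty after $j$ completed phases, some surviving edge has weight at least $\wmin/n^2$ and strictly below $(1+\epsilon)^{-j}\wmax$, so $(1+\epsilon)^{j} < n^2\cdot \wmax/\wmin = n^2\mathcal{A}$, i.e. $j < \log_{1+\epsilon}(n^2\mathcal{A})$. Using $\mathcal{A} = O(\mathsf{poly}(n))$, so $\log\mathcal{A} = O(\log n)$, together with $\epsilon = \Theta(1)$, so $\log(1+\epsilon) = \Theta(1)$, this gives $j = O(\log n)$; one further (possibly edge-free) phase terminates the algorithm, so the total number of layer-contraction phases is $O(\log n)$. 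Everything apart from the $\wmin/n^2$ lower bound is a short geometric-series count.
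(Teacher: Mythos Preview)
Your proposal is correct and follows essentially the same approach as the paper's proof: use reducibility of average-linkage to show the maximum weight is non-increasing, observe that any edge weight during execution is at least $\wmin/n^2$ (numerator at least one original edge weight, denominator at most $n^2$), and conclude that the number of geometric layers spanning this range is $O(\log n)$ for constant $\epsilon$. The paper's proof is considerably terser (it simply asserts reducibility and the $\wmin/n^2$ lower bound without deriving them), but your more explicit derivation matches it point for point.
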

\begin{proof}
Since the unweighted average-linkage function is reducible, the largest weight can only decrease
over the course of the algorithm.
Furthermore, the smallest weight that the algorithm can encounter using the
unweighted average-linkage function is proportional to $O(\mathcal{W}_{\min} / n^2)$.
Thus the weight range that the algorithm runs over is
$\mathcal{W}_{\max} / (\mathcal{W}_{\min}/n^2) = O(\mathsf{poly}(n))$,
and only $O(\log n)$ layers are required to represent every
weight in this weight range.
\end{proof}

\lemblueprogress*
\begin{proof}
Fix the random priority $\pi_b$ for each $b \in B$.
We show that the lemma follows for \emph{any} set of distinct priorities, not necessarily random ones.
It is easy to see that our algorithm is equivalent to processing the blue vertices in the order of their priorities and deciding for each of them whether it merges to a red neighbor.
We say that a red vertex is \emph{saturated} if its size has increased by more than $(1+\epsilon)$ factor since the beginning of the outer round.
Note that each saturated red vertex is removed at the end of the inner iteration.
Now, consider a blue vertex $b$.
If more than half of red neighbors of $b$ are saturated, we are in case (a).
Otherwise, with constant probability $b$ chooses a non-saturated red neighbor and can merge with it.
\end{proof}

\begin{restatable}{lemma}{lemmainnerrounds}\label{lem:innerrounds}
There are at most $O(\log n)$ inner rounds within each outer round with high probability.
\end{restatable}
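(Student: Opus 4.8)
The plan is to use the number of edges of $\gbuc$ as a potential and show that it contracts by a constant factor \emph{in expectation} in every inner round, so that after $O(\log n)$ rounds it has dropped below $1$ with high probability and the inner loop of Algorithm~\ref{alg:bucketmerge} must have terminated. Write $\Phi_t$ for the number of edges of $\gbuc$ after $t$ inner rounds of the current outer round; since $\gbuc = (V,\ebuc,w)$ with $\ebuc \subseteq E$ at the start of the outer round, $\Phi_0 \le m < n^2$. The first observation I would record is that $\Phi_t$ is non-increasing along every execution: throughout the outer round $\gbuc$ stays bipartite with all edges joining a red and a blue vertex, so applying the merges from $M$ and then deleting red--red edges, edges of weight below $T_L$, and edges at removed saturated red vertices can only delete edges of $\gbuc$, never create new ones (contracting a blue vertex into a red one turns its other incident edges into red--red edges, which are immediately removed).

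Next I would establish the per-round expected contraction. Because every edge of $\gbuc$ has exactly one blue endpoint, $\Phi_t = \sum_{b \in B} d_b$, where $d_b$ is the number of edges of $\gbuc$ incident to $b$ at the start of the $(t{+}1)$-st inner round, and each deleted edge is charged to a unique blue endpoint, so the number of edges deleted in the round is $\sum_{b \in B} (\text{edges of } b \text{ deleted})$. Now fix an arbitrary realization of the priorities $\pi = (\pi_b)_{b \in B}$. By Lemma~\ref{lem:blueprogress}, for each blue vertex $b$ either (a) a constant fraction, say at least $c_1 d_b$, of the edges incident to $b$ are deleted (this is a deterministic consequence of $\pi$), or (b) over the remaining randomness of the round $b$ merges into a red neighbor with probability at least $c_2$, in which case all $d_b$ of its incident edges are deleted. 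Hence in both cases $\mathbb{E}\!\left[\#\{\text{edges of } b \text{ deleted}\}\mid \pi\right] \ge c\, d_b$ with $c := \min(c_1,c_2) > 0$. Averaging over $\pi$ and summing over $b$ gives that the expected number of edges deleted in the round is at least $c\,\Phi_t$, so $\mathbb{E}[\Phi_{t+1} \mid \text{state at start of round } t{+}1] \le (1-c)\Phi_t$, and by iterated expectation $\mathbb{E}[\Phi_t] \le (1-c)^t\,\Phi_0 \le (1-c)^t n^2$.

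Finally I would conclude the high-probability bound. The inner loop runs exactly while $\Phi_t > 0$, and $\Phi_t$ is a nonnegative integer, so ``the loop runs more than $t$ rounds'' is the event $\{\Phi_t \ge 1\}$, whose probability is at most $\mathbb{E}[\Phi_t] \le (1-c)^t n^2$ by Markov's inequality. Choosing $t = \lceil (k+2)\ln n / \ln\tfrac{1}{1-c}\rceil = O(k\log n)$ makes this at most $n^{-k}$, which gives the claimed $O(\log n)$ bound with high probability. The main obstacle is the middle step: turning the per-blue-vertex statement of Lemma~\ref{lem:blueprogress} into a global one-round contraction. The care needed is that case (a) is deterministic given the priorities while case (b) is probabilistic, so one must condition on $\pi$ first, derive the uniform bound $\mathbb{E}[\cdot \mid \pi] \ge c\,d_b$ for every $\pi$, and only then take expectations; and one must invoke the bipartiteness of $\gbuc$ so that the deleted edges are counted once per blue endpoint and the per-vertex bounds add up to exactly $c\,\Phi_t$.
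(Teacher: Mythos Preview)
Your approach is correct and establishes the lemma, but it takes a genuinely different route from the paper. The paper argues \emph{per blue vertex}: for a fixed blue $b$, the number of inner rounds in which $b$ falls into case~(a) of Lemma~\ref{lem:blueprogress} is at most $O(\log n)$ (its degree in $\gbuc$ drops by a constant factor each such round), and the number of rounds in which it falls into case~(b) before merging is $O(\log n)$ with high probability (a geometric-tail bound, since each such round gives an independent constant chance of merging). Hence every blue vertex disappears from $\gbuc$ after $O(\log n)$ inner rounds \whp{}, and a union bound over the at most $n$ blue vertices finishes. Your argument instead aggregates: you track the total edge count $\Phi_t$ of $\gbuc$, show it contracts by a constant factor in expectation each round via linearity over blue endpoints, and apply Markov. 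Both are standard techniques; the paper's is slightly more direct here (no need to sum contributions across vertices), while yours gives a clean one-line recursion $\mathbb{E}[\Phi_{t+1}] \le (1-c)\Phi_t$ that would generalize more readily if per-vertex progress were harder to isolate.

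One small imprecision in your write-up: case~(a) is \emph{not} a deterministic consequence of $\pi$ alone. Whether more than half of $b$'s red neighbors are saturated (closed) by the time $b$ is processed also depends on the random neighbor choices $C_{b'}$ of the higher-priority blue vertices. The fix is to condition on $\pi$ together with $(C_{b'})_{\pi_{b'} < \pi_b}$: given this, the set of open red neighbors of $b$ is determined, the dichotomy holds, and you get $\mathbb{E}[X_b \mid \pi, (C_{b'})_{\pi_{b'}<\pi_b}] \ge c\,d_b$ in both cases; the tower property then gives $\mathbb{E}[X_b \mid \pi] \ge c\,d_b$, and the rest of your argument goes through unchanged.
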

\begin{proof}
Fix a blue vertex $b$.
There can be at most $O(\log n)$ inner rounds when the vertex satisfies case (a) of Lemma~\ref{lem:blueprogress}, and at most $O(\log n)$ times (with high probability) when it satisfies case (b).
After that, the vertex is either merged into its red neighbor or becomes isolated. In both cases, all incident edges are removed.
\end{proof}

\begin{algorithm}[!t]\caption{\parhac{}($G=(V, E, w), \epsilon$)}\label{alg:parhac}
    \begin{algorithmic}[1]
    \Require{Similarity graph $G$, $\epsilon > 0$.}
    \Ensure{$(1+\epsilon)$-approximate dendrogram for unweighted average-linkage HAC.}
    \State \parcolor{Let $D$ be initialized to the identity clustering.}
    \While{$|E| > 0$}\label{parhac:loopstart}
      \State \parcolor{Let $W_{\max}$ be the current maximum-weight edge in the graph.}\label{parhac:wmax}
      \State Call \textsc{\uwalbucketmerge}{$\left(G, (1+\epsilon)^{-1} \cdot W_{\max}, 
                    \epsilon, D \right)$}.\label{parhac:bucketmerge}
    \EndWhile\label{parhac:loopend}
    \State \algorithmicreturn{} $D$
    \end{algorithmic}
\end{algorithm}

\lemmaouterrounds*
\begin{proof}
Consider the \textbf{while} loop in Line~\ref{bkt:innerstart}.
We first prove the following claim: for each edge $e$ of the graph $\gbuc$ at the beginning of the loop, either (a) the endpoints of $e$ are merged together, or (b) the weight of $e$ drops below $T_L$, or (c) an endpoint of $e$ increases its size by a factor of $(1+\epsilon)$.

To prove this fact, we observe that when the loop terminates, all edges of $\gbuc$ have been removed.
Clearly, an edge can be removed when its endpoints are merged together or when its weight drops below $T_L$, which corresponds to cases (a) and (b).
There are two more ways of removing an edge.
First, an edge can be removed on Line~\ref{bkt:removelarger} when its red endpoint grows by a factor of $(1+\epsilon)$, which leads to case (c).
Second, the edge can be removed when it connects two red vertices.
This can only happen as a result of a blue vertex merging into a red vertex.
However, since for each edge in $\gbuc$ the red endpoint has size not smaller than the blue one, whenever a blue vertex merges into a red one, the size of its cluster doubles.
This finishes the proof of the claim.

From this claim, we immediately conclude that for a fixed layer, any edge $e$ of $G$ can participate in at most $O(\log n)$ inner rounds, as each endpoint of $e$ can increase its size at most $O(\log n)$ times.
Since within an outer round, $e$ is included in $\gbuc$ with constant probability, we conclude there can be at most $O(\log n)$ outer rounds with high probability.
\end{proof}

\begin{restatable}{lemma}{lemmaworkdepth}\label{lemworkdepth}
Each outer round can be implemented in $O(m \log n + n \log^2 n)$ expected work and $O(\log^2 n)$ depth with high probability.
\end{restatable}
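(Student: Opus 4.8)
The plan is to combine the high-probability bound of $O(\log n)$ inner rounds per outer round (Lemma~\ref{lem:innerrounds}) with a line-by-line cost analysis of a single inner round, amortizing the expensive merge step over the \emph{entire} outer round rather than charging it naively. Throughout, the two structural facts I would lean on are that each original blue vertex is consumed by at most one merge during an outer round, and that $|\ebuc|$ shrinks by a constant factor in expectation per inner round; the latter is immediate from Lemma~\ref{lem:blueprogress}, since in case (a) a constant fraction of a blue vertex's incident edges vanish, and in case (b) all of them do with constant probability.

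First I would bound the cost of one inner round \emph{outside} of the merge step. Sampling the priorities $\pi_b$ and a random red neighbor $C_b$ for every $b \in B$ costs $O(\log n)$ work and depth per vertex using the balanced-tree clustered-graph representation --- and note that in $\gbuc$ every neighbor of a blue vertex is red, so a uniform neighbor sample in $\gbuc$ is exactly a uniform red-neighbor sample. Building and lexicographically sorting the triple set $T$, forming the groups $T_r$, computing the selected prefixes by parallel prefix sums within each group, gathering $M$, and removing the stale edges of $\gbuc$ are all implementable with parallel sorting and prefix sums in $O((|\ebuc| + |B|)\log n)$ work and $O(\log n)$ depth; since the triples involve only blue vertices with an incident $\gbuc$-edge, the $|\ebuc|$-dependent part here is $O(|\ebuc|\log n)$.

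The delicate part, which I expect to be the main obstacle, is the merge step (Line~\ref{bkt:updategraphs} together with the dendrogram update on Line~\ref{bkt:updatedendrogram}): a high-degree red vertex may receive accepted proposals in several inner rounds, so we must not re-pay a cost proportional to its whole adjacency list each time. I would handle this by (i) storing \emph{unnormalized} cut weights and cluster sizes, so that changing a red vertex's size does not require touching any incident edges --- the average-linkage weight of an edge is recovered on the fly in $O(1)$; and (ii) when blue vertices of total degree $\beta_r$ merge into a red vertex $r$, folding their sorted adjacency into $r$ with a single parallel map union, costing only $O(\beta_r \log n)$ work and $O(\log n)$ depth, independent of $r$'s degree. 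The chain of dendrogram nodes created by a batch of merges into one red vertex is written out in parallel, with each internal node's similarity obtained from a prefix sum over the sorted batch. Because each blue vertex is merged at most once per outer round, $\sum \beta_r$ over all red vertices and all inner rounds is $O(m)$, so the total merge work over the outer round is $O(m\log n)$ and its total depth over the $O(\log n)$ inner rounds is $O(\log^2 n)$.

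It then remains to sum the other contributions over the $O(\log n)$ inner rounds. The $|\ebuc|$-dependent sorting and selection work telescopes: since $|\ebuc|$ decreases geometrically in expectation, $\sum_i |\ebuc^{(i)}| = O(m)$ in expectation, contributing $O(m\log n)$. The remaining per-round overhead --- touching all of $B$ and $R$ with $O(\log n)$-work clustered-graph operations (random-neighbor draws, size lookups, bookkeeping) --- is $O(n\log n)$ per round and hence $O(n\log^2 n)$ over all inner rounds. Adding in the $O(m\log n)$ merge work gives $O(m\log n + n\log^2 n)$ expected work. For the depth, every inner round runs in $O(\log n)$ depth with high probability (the parallel sorts/semisorts, prefix sums, parallel map unions, and tree queries are each $O(\log n)$ depth whp), and there are $O(\log n)$ inner rounds whp, for $O(\log^2 n)$ depth whp.
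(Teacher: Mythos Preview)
Your argument is correct, but it is considerably more elaborate than what the paper does, and your main concern turns out to be a non-issue at this level of granularity. The paper's proof simply charges each inner round $O(m)$ work for constructing/updating $\gbuc$ and $G$ (trivially valid, since the graph has at most $m$ edges at all times), $O(n\log n)$ work for the comparison sort on Line~\ref{bkt:buildtriples}, and $O(n)$ for the remaining steps; multiplying by the $O(\log n)$ inner rounds from Lemma~\ref{lem:innerrounds} gives the stated $O(m\log n + n\log^2 n)$ work and $O(\log^2 n)$ depth directly. In particular, the ``delicate part'' you anticipated---avoiding re-paying for a high-degree red vertex's adjacency list across multiple inner rounds---never arises, because the naive $O(m)$-per-round charge already yields the target bound.

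Your route, by contrast, amortizes the merge step over the whole outer round (via ``each blue vertex merges at most once'' and small-into-large tree unions) and uses expected geometric decay of $|\ebuc|$ to telescope the edge-processing cost. These refinements are sound and would be the right tools if one were aiming for a tighter work bound, but here they buy nothing beyond what the paper's two-line counting argument already gives. The paper's approach is shorter and avoids having to justify implementation details such as storing unnormalized weights or the cost asymmetry of map-union; your approach makes those details explicit and would generalize better to settings where the per-round $O(m)$ charge is too loose.
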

\begin{proof}
Computing $\gbuc$ (Line~\ref{bkt:constructgb}) and updating $G$ and $\gbuc$
within (Line~\ref{bkt:updategraphs})
can be done in $O(m)$ work and $O(\log n)$ depth
using standard parallel primitives such as prefix-sum and 
filter~\cite{blelloch18notes}.
Using a parallel comparison sort,
Line~\ref{bkt:buildtriples} costs 
$O(n \log n)$ work and $O(\log n)$ depth~\cite{blelloch18notes}.
The remaining steps can be implemented in $O(n)$ work and $O(\log n)$
depth using standard parallel primitives.
Combining the work-depth analysis with the fact that there are $O(\log n)$
inner rounds \whp{} by Lemma~\ref{lem:innerrounds} completes the proof.
\end{proof}

The work-and depth component of Theorem~\ref{thm:overall} follows immediately from Lemma~\ref{lemworkdepth},
and due to the fact that there are $O(\log n)$ outer rounds (Lemma~\ref{lem:outerrounds}), and $O(\log n)$ layers overall (Lemma~\ref{lem:aspectratio}).
We prove the final part of Theorem~\ref{thm:overall} by showing that \parhac{} is a good approximation 
to sequential average-linkage HAC.

\begin{theorem}
\parhac{} is an $(1+\epsilon)$-approximate average-linkage HAC algorithm.
\end{theorem}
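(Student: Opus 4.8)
The plan is to prove the slightly stronger claim that if \parhac{} is run with parameter $\epsilon' := \epsilon/5$ (any $\epsilon'$ with $(1+\epsilon')^{3}\le 1+\epsilon$ works), then every merge it commits joins two clusters whose similarity, evaluated with the true current cluster sizes, is at least $\mathcal{W}_{\max}/(1+\epsilon)$, where $\mathcal{W}_{\max}$ is the current maximum cluster similarity; since $\epsilon'=\Theta(\epsilon)$ this leaves the asymptotic work/depth bounds of Theorem~\ref{thm:overall} untouched. To make ``step'' precise for a parallel algorithm I would serialize the merges in the order \parhac{} conceptually commits them: layer-contraction phases in order, then outer rounds, then inner rounds, and within an inner round the blue-to-red merges in the sorted priority order of line~\ref{bkt:updatedendrogram}. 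Two facts frame the argument. First, the global maximum similarity is non-increasing: for a merge of $X,Y$ into $Z$ and any $U$, $\mathcal{W}(Z,U)\le\max(\mathcal{W}(X,U),\mathcal{W}(Y,U))$ --- by reducibility of average-linkage when $U$ is a common neighbor of $X$ and $Y$, and by a direct computation (the denominator grows) otherwise --- which is exactly the observation behind Lemma~\ref{lem:aspectratio}. Second, consequently, during the layer-contraction phase launched from maximum weight $W_{\max}$ with threshold $T_L=(1+\epsilon')^{-1}W_{\max}$, the current maximum never exceeds $W_{\max}=(1+\epsilon')T_L$. So it suffices to show that every merge committed inside such a phase joins clusters of true similarity at least $T_L/(1+\epsilon')$: then the merged similarity is at least $W_{\max}/(1+\epsilon')^{2}$, hence at least $(\text{current max})/(1+\epsilon')^{2}$.

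The heart of the proof is a per-inner-round calculation. I would maintain the invariant that at the start of every inner round each edge of $\gbuc$ has true similarity at least $T_L$; this holds when $\gbuc$ is built (line~\ref{bkt:constructgb}) and is preserved because each inner round ends by discarding edges of weight below $T_L$ and recomputing the weights of edges touched by that round's merges (line~\ref{bkt:updategraphs} and the subsequent cleanup). Fix an inner round in which blue vertices $b^{(1)},\dots,b^{(j)}$, in priority order, merge into a red vertex $r$ whose cluster at the start of the round is $r_s$. By the capacity rule of line~\ref{bkt:selecttr}, the total size of $b^{(1)},\dots,b^{(j-1)}$ is at most $\epsilon'|r_s|$ (and at most $\epsilon'|r_s|$ also when the whole of $T_r$ is selected), so just before $b^{(i)}$ is absorbed the cluster of $r$ has size at most $(1+\epsilon')|r_s|$ for every $i\le j$. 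A blue vertex never receives a merge and $b^{(i)}$ is untouched until it is absorbed, so $|b^{(i)}|$ is unchanged, while the number of graph edges between $r$'s cluster and $b^{(i)}$ can only have grown. Hence the similarity recorded for the $i$-th link of the chain is at least $\mathcal{W}(r_s,b^{(i)})/(1+\epsilon')$, and $\mathcal{W}(r_s,b^{(i)})\ge T_L$ because $(r,b^{(i)})\in\gbuc$ at the start of the round --- it is the candidate edge $b^{(i)}$ proposed along. This yields the required bound $T_L/(1+\epsilon')$, so together with the two framing facts \parhac{}$(G,\epsilon')$ is $(1+\epsilon')^{2}$-approximate --- $(1+\epsilon')^{3}$ if one prefers to reason with the weights stored at the beginning of the outer round rather than freshly recomputed ones --- and hence $(1+\epsilon)$-approximate.

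The step I expect to be the main obstacle is precisely this middle calculation: being careful about what ``the current similarity of $(r,b^{(i)})$'' is at the moment of the $i$-th chained merge, and verifying that line~\ref{bkt:selecttr} genuinely caps the relevant cluster size at $(1+\epsilon')|r_s|$ rather than roughly $2|r_s|$. The subtlety is that the prefix selected for $r$ may overshoot $\epsilon'|r_s|$ by its last element, which can be as large as $|r_s|$ by the red-$\ge$-blue size invariant; but that last element is appended only after the similarities of all earlier links in the chain have been fixed, so the overshoot never contaminates a recorded merge weight --- it merely triggers the removal of $r$ from $\gbuc$ on line~\ref{bkt:removelarger}. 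The remaining points are routine: stating the invariant that the weights stored in $\gbuc$ track the true similarities across inner rounds, and observing that blue vertices with no red neighbor, red vertices that saturate, and edges deleted for being light are simply dropped and so never trigger a bad merge.
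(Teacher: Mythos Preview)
Your proposal is correct and follows essentially the same approach as the paper: both arguments (i) use reducibility to cap the current maximum similarity by the phase's $W_{\max}$, (ii) use the capacity rule to bound the red cluster's growth so that every merged edge has true weight at least $T_L/(1+\delta)$, and (iii) conclude a $(1+\delta)^{2}$-approximation for the internal parameter $\delta$, which is then rescaled to $\epsilon$. The only cosmetic differences are that the paper anchors the size bound at the start of the \emph{outer} round (via line~\ref{bkt:removelarger}) rather than the inner round as you do, and it picks $\delta=\epsilon/3$ (or $\sqrt{\epsilon}/3$) instead of your $\epsilon/5$; your extra care about the last element of the selected prefix overshooting the cap is a detail the paper glosses over.
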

\begin{proof}
Let $1+\delta$ be the approximation parameter used internally in the
algorithm, which we will set shortly as a function of $\epsilon$.
Consider the merges performed within an inner round in the algorithm.
In the algorithm these merges only occur between blue vertices
and their red neighbors that they are connected to with an edge with weight 
at least $T_L$ (the lower layer-threshold).
At the start of a layer-contraction phase the maximum 
similarity in the graph is $W_{\max}$ and we have by construction
that $(1+\delta)^{-1} W_{\max} \leq T_L$.
Furthermore, since the average-linkage function is reducible
the maximum similarity in the graph throughout the rest of this layer-contraction
phase is always upper-bounded by $W_{\max}$.

Consider the merges done within one inner round of the algorithm.
We know that for each red vertex $r$ that participates in merges in this inner round,
its cluster size never exceeds $(1+\delta)S_I(r)$ where
$S_I(r)$ is the cluster-size of $r$ at the start of the outer round.
Furthermore, since we maintain the exact weights of all edges in both $G$ and
$\gbuc$ after every inner round finishes (in Line~\ref{bkt:updategraphs}), we know that
the weight of each $(b_i, r)$ edge for each blue vertex $b_i$ merging into $r$ in
this round is at least $T_L$. Therefore, the smallest value this weight can attain
in a merge is $T_L/(1+\delta)$, which is lower-bounded by $(1+\delta)^{-2}W_{\max}$.
Therefore, our approach yields a $(1+\delta)^{-2}$-approximate algorithm.
Now, for $\epsilon \leq 1$ we set $\delta = \epsilon/3$, and for $\epsilon > 1$, we set $\delta = \sqrt{\epsilon}/3$.
Thus, we obtain an $(1+\epsilon)$-approximate algorithm for average-linkage HAC for any $\epsilon > 0$.
\end{proof}

\section{Parallel Hardness of Graph-Based HAC}\label{sec:lowerbounds}

We show hardness results for the graph-based HAC problem 
under two fundamental linkage measures.

\subsection{Hardness for Graph-Based HAC using Weighted Average-Linkage}

We start by presenting our lower bound for weighted average-linkage 
measure as our lower bound for average-linkage builds on it.
The problem is in $\mathsf{P}$, and so we show $\mathsf{P}$-hardness
to show the \pcomplete{}ness result. 
We do this by providing  an \nc{} reduction from the \pcomplete{} 
monotone circuit-value problem (monotone CVP).

\ifx\confversion\undefined
\begin{figure}
\begin{center}
\includegraphics[trim=0em 0em 0em 0em, 
    width=0.55\textwidth]{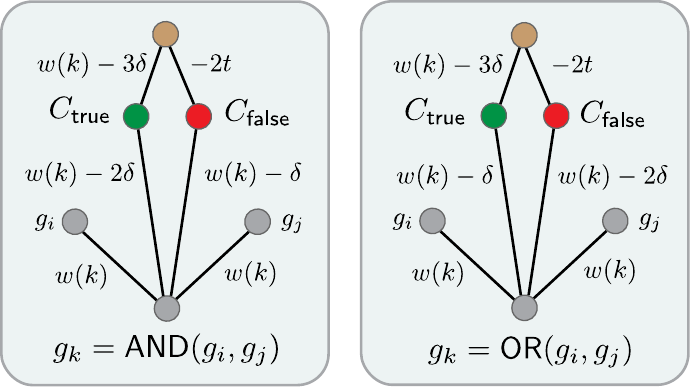}
\caption{\label{fig:hacweightedaveragelink}
This figure illustrates the gadgets used in the reduction from the monotone 
circuit value problem to show that graph-based HAC using \weightedavglink{}
is \pcomplete{}.
}
\end{center}
\end{figure}
\fi

In monotone CVP, we are given circuit consisting of gates, that has $n$ input variables
$x_1, \ldots, x_n$, along with an assignment of truth values to each
of the initial variables. The literals $x_i$ and $\bar{x_i}$ appear
as initial truth-values set based on the truth-value of $x_i$. The 
circuit is monotone, so the gates consist
of only \gand{} and \gor{} gates.

Since the topological ordering of a DAG can be computed in \ncclass{},
we can assume that the circuit consists of $t$ gates $g_1, \ldots, g_t$ where each of $g_1, \ldots, g_{2n}$ corresponds to one input literal or its negation, and for $i > 2n$, gate $g_i$ has exactly two inputs, which are gates with indices less than $i$.

\myparagraph{Construction}
The reduction builds a graph $H = (V', E')$ with $2t-2n+2$ vertices, where $V' = \{g_1, \ldots, g_t, \ctrue{}, \cfalse{}, d_{2n+1}, d_{2n+2}, \ldots, d_t\}$ illustrated in Figure~\ref{fig:hacweightedaveragelink}.
The high-level idea is to ensure that running
graph-based HAC on $H$ results in gates evaluating to true merging
into \ctrue{} and gates evaluating to false merging into \cfalse{}.
The only role of vertices $d_{2n+1}, d_{2n+2}, \ldots, d_t$ is to prevent $\ctrue{}$ and $\cfalse{}$ from merging with each other until the final merge.
To this end, each of the vertices $d_k$ will have a positive-weight edge to $\ctrue{}$ and a negative-weight edge to $\cfalse{}$.

Note that the construction can be easily modified not to use negative weights, since in the case of weighted average-linkage increasing all edge weights by the same amount does not affect the course of the algorithm.

\newcommand{\wu}[1]{w(#1)}

The construction is entirely local and can thus be easily parallelized using standard
parallel primitives such as prefix-sums.
The graph we define adds a gadget for each gate, which is defined using $\wu{i} := t-i$, and a constant $\delta \in (0, 1/4)$.

The graph $H$ is constructed by adding the following set of edges.
For each $k = 1, \ldots, t$:

\begin{enumerate}[ref={\arabic*}, topsep=0pt,itemsep=0ex,partopsep=0ex,parsep=1ex, leftmargin=*]
\item If $k \leq 2n$, we add an edge of weight $\wu{0}$ between $g_k$ and $\ctrue{}$ or $\cfalse{}$, depending of the value of the literal $g_k$.

\item Assume $k > 2n$ and $g_k = \gand{}(g_i, g_j)$. We add the following edges:\\
$\{(g_k, \ctrue{}, \wu{k} - 2\delta), 
(g_k, \cfalse{},\wu{k} - \delta),
(g_k, g_i, \wu{k}),
(g_k, g_j, \wu{k}),
(d_k, \ctrue{}, \wu{k}-3\delta),
(d_k, \cfalse{}, -2t)\}$.

\item Assume $k > 2n$ and $g_k = \gand{}(g_i, g_j)$. We add the following edges: \\
$\{(g_k, \ctrue{}, \wu{k} - \delta),
(g_k, \cfalse{},\wu{k} - 2\delta),
(g_k, g_i, \wu{k}),
(g_k, g_j, \wu{k}),
(d_k, \ctrue{}, \wu{k}-3\delta),
(d_k, \cfalse{}, -2t)\}$.
\end{enumerate}

Finally, we add an edge between \ctrue{} and \cfalse{} of weight $-t$.
\ifx\confversion\undefined
An illustration of the resulting gadgets built from $g_k$ for \gand{} 
and \gor{} gates is shown in Figure~\ref{fig:hacweightedaveragelink}.
%
We first show that the gadgets are processed in the order of their indices.
\else
\revised{Due to space constraints we provide the proofs of the following theorem in
the full version of the paper.}
\fi

\ifx\confversion\undefined
\begin{lemma}\label{lem:unw-hac-order}
Consider running the graph-based HAC algorithm with weighted average-linkage on $H$.
For any $k \geq 2n$, the algorithm begins by merging gates $g_1, \ldots, g_k$ and $d_{2n+1}, \ldots, d_{k}$ into \ctrue{} or \cfalse{}, before merging any gates with indices larger than $k$.
Moreover, once all gates of indices up to $k$ have been merged, the weight of the edge between \ctrue{} and \cfalse{} is at most $-t$.
\end{lemma}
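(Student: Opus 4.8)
\emph{Proof sketch.} The plan is to prove the statement by induction on $k$, tracking at each stage the two growing clusters $X \ni \ctrue{}$ and $Y \ni \cfalse{}$, the weights of their incident edges, and in particular the weight of the $X$--$Y$ edge. I would actually strengthen the inductive hypothesis to also assert that each $g_\ell$ with $\ell < k$ has merged into the cluster matching its Boolean value (true $\to X$, false $\to Y$); this is what makes the gadget analysis go through. Two facts about weighted average-linkage will be used repeatedly: (i) a merge never produces an edge whose weight exceeds the larger of the two weights being averaged, so every edge weight stays $\le t = w(0)$, and more generally every edge incident to $g_{k'}$ (resp.\ $d_{k'}$) keeps weight $\le w(k')$ until $g_{k'}$ (resp.\ $d_{k'}$) is merged; and (ii) under iterated averaging $x \mapsto (x + w(k))/2$, after one step the result is $(x+w(k))/2$ and after two steps it is $\tfrac14 x + \tfrac34 w(k)$, so the original value $x$ only loses weight.

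For the base case $k = 2n$: the only edges of weight $w(0) = t$ are the $2n$ literal edges, each joining some $g_i$ ($i \le 2n$) to $\ctrue{}$ or $\cfalse{}$ according to that literal's truth value; since a literal is adjacent only to $\{\ctrue{},\cfalse{}\}$ and to higher-index gates (whose incident weights are $< t$ by (i)), these edges stay at weight exactly $t$, hence are the strict global maxima until all have been contracted. As literals have no edges among themselves and a literal edge does not touch both $\ctrue{}$ and $\cfalse{}$, after these $2n$ merges we obtain $X$ containing all true literals, $Y$ containing all false literals, and the $X$--$Y$ edge still carrying its original weight $-t$; each literal ``gate'' has merged into the cluster matching its value, as required.

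For the inductive step I would assume the (strengthened) claim for $k-1 \ge 2n$, so the current state consists of $X$, $Y$ with the $X$--$Y$ weight $A \le -t$, together with singletons $g_k,\dots,g_t,d_k,\dots,d_t$. By the inductive hypothesis the two inputs $g_i,g_j$ of $g_k$ have already merged, as singletons, into the clusters matching their (correct) truth values, and each such merge averaged $g_k$'s edge to that cluster with the weight-$w(k)$ edge joining $g_k$ to that input (and nothing else can touch these two edges meanwhile, since only $\ctrue{},\cfalse{},g_i,g_j$ and not-yet-merged higher gates are adjacent to $g_k$). A short case analysis over the six cases -- gate type in $\{\gand{},\gor{}\}$ times number of true inputs in $\{0,1,2\}$ -- using fact (ii) then shows that the edge from $g_k$ to the cluster matching the Boolean value of $g_k$ has weight in $\{w(k)-\delta/2,\ w(k)-\delta/4\}$, whereas the edge to the other cluster has weight in $\{w(k)-\delta,\ w(k)-2\delta\}$; since $\delta \in (0,1/4)$ these windows are well separated. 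Combined with fact (i) (every edge at a $g_{k'}$ or $d_{k'}$ with $k' > k$ has weight $\le w(k') \le w(k) - 1 < w(k) - \delta$), with the observation that $d_k$'s edges to $X$ and $Y$ are still $w(k) - 3\delta$ and $-2t$ (only $\ctrue{}$ and $\cfalse{}$ are adjacent to $d_k$), and with $A \le -t < 0$, this forces HAC to merge $g_k$ next into the correct cluster and then, since $d_k$'s edge to $X$ of weight $w(k)-3\delta$ now dominates everything (using $3\delta < 1$), to merge $d_k$ into $X$. Finally I re-establish the edge invariant: writing $B \le w(k) < t$ for the weight of $g_k$'s edge to the cluster it did \emph{not} join, the $X$--$Y$ weight becomes $\tfrac12 A + \tfrac12 B$ after the $g_k$ merge and then $\tfrac14(A+B) - t$ after the $d_k$ merge (since $d_k$'s edge to $Y$ is $-2t$); as $A \le -t$ and $B < t$ force $A + B < 0$, the new weight is $< -t$, closing the induction.

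The crux is the six-case computation establishing the gadget's correctness -- that the averaged weights of $g_k$'s two edges end up ordered so as to encode exactly the Boolean value of $g_k$ -- together with verifying that this ``correct'' edge is the current global maximum precisely when $g_k$ should be contracted, and likewise that $d_k$'s edge is maximal immediately afterwards. The constants $w(k) = t - k$ and $\delta < 1/4$ are tuned so that all the required strict inequalities hold simultaneously, including in the degenerate regime near $k = t$ where several of the relevant weights become small or negative; there one uses that $A$ has already been driven well below $-t$ by the earlier $d_\ell$-merges.
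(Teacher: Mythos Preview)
Your proposal is correct and follows essentially the same induction on $k$ as the paper, with one structural difference worth noting: you strengthen the inductive hypothesis to also assert that each $g_\ell$ lands in the cluster matching its Boolean value, whereas the paper keeps that as a separate lemma and, in the present proof, only argues that the next merge must involve $g_k$ (and then $d_k$) without caring \emph{which} of $\ctrue{},\cfalse{}$ absorbs it. Your version therefore carries out the six-case gadget analysis here (computing the precise averaged weights $w(k)-\delta/4$, $w(k)-\delta/2$, etc.), while the paper defers that computation and, for this lemma, only needs the coarser bound that $g_k$ has some incident edge of weight in $(w(k)-2\delta,\, w(k)]$. Both packagings are fine; yours proves more in one shot at the cost of a slightly heavier invariant, and your recomputation of the $\ctrue{}$--$\cfalse{}$ weight, namely $\tfrac14(A+B)-t<-t$ since $A\le -t$ and $B<t$, matches the paper's estimate $(x-2t)/2\le -t$.
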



\begin{proof}
We prove the claim by induction on $k$.
To show the base case, we need to show that the algorithm begins by merging the gates corresponding to literals to either \ctrue{} or \cfalse{}.

Observe that $\wu{0}$, the edge weight used for connecting literals to \ctrue{} or \cfalse{}, is the largest edge weight in the graph.
Moreover, each merge of a literal gate reduces by 1 the number of edges of weight at least $\wu{0}$, so the algorithm indeed begins by merging gates $g_1, \ldots, g_{2n}$ into \ctrue{} or \cfalse{}.

Consider now $k > 2n$ and assume the algorithm has merged $g_1, \ldots, g_{k-1}$ and $d_{2n+1}, \ldots, d_{k-1}$ into \ctrue{} and \cfalse{}.
Hence, the remaining set of vertices are $g_k, \ldots, g_t$, $d_k, \ldots, d_t$, \ctrue{}, and \cfalse{}.

Let us look at the set of edge weights in the remaining graph.
\begin{itemize}
    \item For any $d_j$, where $j \geq k$, clearly the incident edges still have weights $\wu{j}-3\delta$ and $-4t$.
    \item For any $g_j$, where $j \geq k$, by induction hypothesis, $g_j$ has not yet participated in any merges. In the original graph the edges incident to $g_j$ connect $g_j$ to 
    \begin{enumerate}
        \item its inputs (weight $\wu{j}$),
        \item \ctrue{} and \cfalse{} (weight at least $\wu{j}-2\delta$),
        \item gates with indices larger than $j$ (of weight at most $\wu{j}-1$).
    \end{enumerate}
     By the induction hypothesis, the only change that may have happened to the set of edges incident to $g_j$ is that edges of type 1 above \emph{may} have been merged with edges of type 2. Still, $g_j$ has an incident edge of weight belonging to $(\wu{j}-2\delta, \wu{j}]$.
    \item In addition, there is an edge between \ctrue{} and \cfalse{} of negative weight.
\end{itemize}

From the definition of $\wu{\cdot}$ it follows that the only edges of weight more than $\wu{k}-1$ are incident to $g_k$ and $d_k$.
Moreover, the highest weight edge is incident to $g_k$.
Since both inputs to $g_k$ have been already merged into \ctrue{} or \cfalse{}, it follows that $g_k$ merges into one of them as well.
After that, $d_k$ merges into \ctrue{}.

Before these two steps, the edge between \ctrue{} and \cfalse{} had weight of at most $-t$.
Since $(\wu{k} - t)/2 \leq 0$, after the first merge, the weight between \ctrue{} and \cfalse{} remains negative.
Denote it by $x$.
After $d_k$ is merged into \ctrue{}, the weight of the edge between \ctrue{} and \cfalse{} is $(x - 2t)/2 \leq -2t / 2 = -t$.
\end{proof}
\fi

\ifx\confversion\undefined
\begin{lemma}
Consider running graph-based HAC algorithm with weighted average-linkage on $H$ until no edges of nonnegative weight remain.
Then, the algorithm produces exactly two clusters, where one of them contains \ctrue{} and vertices corresponding to gates that evaluate to $\mathsf{true}$ and the other one contains \cfalse{} and all vertices corresponding to gates that evaluate to $\mathsf{false}$.
\end{lemma}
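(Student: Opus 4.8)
The plan is to combine Lemma~\ref{lem:unw-hac-order} with an induction on the gate index. Applied with $k=t$, Lemma~\ref{lem:unw-hac-order} already guarantees that the algorithm first absorbs all of $g_1,\dots,g_t$ and all of $d_{2n+1},\dots,d_t$ into either \ctrue{} or \cfalse{}, and that once this is done the only surviving edge is the one between \ctrue{} and \cfalse{}, of weight at most $-t<0$. Hence at that moment there are exactly two clusters and no nonnegative-weight edge remains, so the algorithm halts in precisely this configuration. What remains to prove is the labeling claim: for every gate $g_k$, the vertex $g_k$ ends up in the cluster of \ctrue{} iff the gate $g_k$ evaluates to $\mathsf{true}$ (the auxiliary vertices $d_k$ all end up on the \ctrue{} side, which is harmless since they are not gate vertices).

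I would prove this by induction on $k$, following the merge order fixed by Lemma~\ref{lem:unw-hac-order}. For the base case $k\le 2n$: the literal edge of weight $\wu{0}=t$ incident to $g_k$ is a globally maximum-weight edge and is never disturbed before it is contracted, so $g_k$ joins \ctrue{} or \cfalse{} exactly according to the truth value of the literal $g_k$. For the inductive step $k>2n$, write $g_k$ as a gate over inputs $g_i,g_j$ with $i,j<k$; by Lemma~\ref{lem:unw-hac-order} both inputs have already been absorbed when $g_k$ is merged, and by the induction hypothesis they lie in the clusters determined by their evaluated values. The crucial structural observation is that the only neighbors of $g_k$ that are merged before $g_k$ are $g_i$ and $g_j$ (every other neighbor of $g_k$ is either \ctrue{}/\cfalse{} or a gate of index larger than $k$, and no $d_m$ is adjacent to $g_k$), so the weights of the edges $(g_k,\ctrue{})$ and $(g_k,\cfalse{})$ evolve very simply: each starts at its gadget value $\wu{k}-c\delta$ and, each time one of the inputs is absorbed into that side, gets replaced by the weighted average-linkage mean of $\wu{k}$ (the weight of the still-untouched input edge) and its current value.

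Given this, the step reduces to a finite check: for each gate type (\gand{}, \gor{}) and each of the three input patterns (both inputs $\mathsf{true}$, exactly one $\mathsf{true}$, none $\mathsf{true}$), compute the final weights of $(g_k,\ctrue{})$ and $(g_k,\cfalse{})$ — obtained from the initial gadget values by averaging with $\wu{k}$ once per absorbed input on each side — and verify that the heavier of the two points toward the side that matches the gate's output; this is exactly what the $-2\delta$ versus $-\delta$ offsets in the gadgets of Figure~\ref{fig:hacweightedaveragelink} are engineered for, and $\delta\in(0,1/4)$ leaves enough slack for all comparisons. Since $g_k$ is contracted along its heaviest incident edge (Lemma~\ref{lem:unw-hac-order}), it lands on the correct side, closing the induction; the $d_k$ then merges into \ctrue{} as in Lemma~\ref{lem:unw-hac-order}. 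Combining, the \ctrue{}-cluster is $\{\ctrue{}\}$ together with all $d_k$ and exactly the gate vertices evaluating to $\mathsf{true}$, the \cfalse{}-cluster is $\{\cfalse{}\}$ together with exactly the gate vertices evaluating to $\mathsf{false}$, and these are the only two clusters at termination.

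The main obstacle I expect is the bookkeeping in the inductive step: confirming that the comparison between $\mathcal{W}(g_k,\ctrue{})$ and $\mathcal{W}(g_k,\cfalse{})$ is independent of the order in which $g_i$ and $g_j$ are absorbed (when the two inputs go to opposite sides they perturb disjoint edges; when they go to the same side the two averaging operations applied to that edge commute), together with checking that the $d_k$-to-\ctrue{} edge of weight $\wu{k}-3\delta$ stays strictly lighter than $g_k$'s heaviest incident edge in every configuration so that $d_k$ never preempts $g_k$. A minor additional point is to handle (or preprocess away) the degenerate case of a gate both of whose inputs are the same gate.
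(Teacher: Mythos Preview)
Your proposal is correct and follows essentially the same approach as the paper: induction on $k$ using Lemma~\ref{lem:unw-hac-order} to fix the merge order, then a case analysis on gate type and input values to verify that the averaging of the input-edge weight $\wu{k}$ into the $(g_k,\ctrue{})$ and $(g_k,\cfalse{})$ edges makes the correct side heavier. The paper's write-up is terser (it spells out only the \gand{} case with both inputs true as a sample computation), whereas you are more explicit about the bookkeeping---order independence of the two averagings, $d_k$ not preempting $g_k$, and the repeated-input degeneracy---but these are exactly the checks the paper's argument needs and are already implicitly handled by Lemma~\ref{lem:unw-hac-order} and the gadget parameters.
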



\begin{proof}
We show that gates $g_1, \ldots, g_k$ are correctly merged into \ctrue{} or \cfalse{} using an induction on $k$.
For $k \leq 2n$ this follows directly from the construction and Lemma~\ref{lem:unw-hac-order}.

Now assume $k > 2n$.
We use Lemma~\ref{lem:unw-hac-order} and look at the point where $g_k$ is merged for the first time.
At this point, we know that the inputs to $g_k$ have been (correctly) merged into \ctrue{} or \cfalse{}.

The proof is a case-by-case analysis, depending on the gate type and the values of the inputs.
For example, if $g_k = \gand{}(g_i, g_j)$, and $g_i = g_j = \mathsf{true}$, we have that
the updated weight of $(g_k, \ctrue{}) = 1 - \delta/2$, and the weight of 
$(g_k, \cfalse{}) = 1-\delta$, and therefore $g_k$ is correctly merged
with \ctrue{}.
On the other hand, if one or both of $\{g_i, g_j\} = \mathsf{false}$, 
then the updated weight of $(g_k, \ctrue{}) \leq 1 - \delta$, and the weight of 
$(g_k, \cfalse{}) \geq 1-\delta/2$, ensuring that $g_k$ is correctly merged
with \cfalse{}. It is easy to check the other cases similarly.

It follows from Lemma~\ref{lem:unw-hac-order} that the clusters containing \ctrue{} and \cfalse{} are never merged with each other. The Lemma follows.
\end{proof}
This immediately implies the following.
\fi

\begin{theorem}
Graph-based HAC using \weightedavglink{} is
\pcomplete{}.
\end{theorem}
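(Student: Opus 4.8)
The plan is to prove both directions of the $\pcomplete{}$ness claim. Membership in $\pclass{}$ is immediate: graph-based HAC with weighted average-linkage performs $n-1$ merges, and each merge changes only the edges incident to the two merged clusters and involves only arithmetic on the current edge weights, so the whole dendrogram — and any natural decision predicate on it, e.g.\ ``do two given vertices become co-clustered via a merge of similarity at least a given threshold'' — is computable in polynomial time. The substance is $\pclass{}$-hardness, which I would obtain exactly via the $\nc$ reduction from monotone CVP set up above: given a monotone circuit with gates $g_1,\dots,g_t$ topologically sorted so that $g_1,\dots,g_{2n}$ are the input literals and every later gate has two earlier inputs, together with an assignment to the inputs, build the graph $H$ consisting of the per-gate \gand{}/\gor{} gadgets, the auxiliary vertices $d_k$, and the two special vertices $\ctrue{},\cfalse{}$; the decision instance asks whether, after running HAC on $H$ until no nonnegative-weight edge remains, the vertex $g_t$ lies in the same cluster as $\ctrue{}$ (equivalently, in the dendrogram the lowest common ancestor of $g_t$ and $\ctrue{}$ is formed by a merge of nonnegative similarity).

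Correctness of the reduction is precisely what the two preceding lemmas provide, and I would simply invoke them. Lemma~\ref{lem:unw-hac-order} shows, by induction on $k$, that HAC processes the gadgets in increasing index order: it first merges all literal vertices into $\ctrue{}$ or $\cfalse{}$, and then for each $k>2n$ it merges $g_k$ (and immediately afterward $d_k$) into one of $\ctrue{},\cfalse{}$ before touching any gate of larger index; moreover throughout this process the $\ctrue{}$--$\cfalse{}$ edge has weight at most $-t<0$, so those two clusters are never merged until the final step. The lemma immediately preceding the theorem then shows, again by induction on $k$ and a case analysis on the gate type and the already-decided Boolean values of its two inputs, that $g_k$ merges into $\ctrue{}$ exactly when gate $g_k$ evaluates to true — the small $2\delta$ and $\delta$ offsets in the gadget weights are chosen so that the WPGMA average of the two relevant incident edge weights tips toward $\ctrue{}$ or $\cfalse{}$ according to the Boolean function. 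Hence the final two clusters partition the gates into the true ones (together with $\ctrue{}$) and the false ones (together with $\cfalse{}$), so the decision question above answers monotone CVP.

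Finally I would verify that the reduction itself is an $\nc$ reduction: computing a topological order of the circuit DAG is in $\ncclass{}$, and the gadget for gate $g_k$ depends only on $k$, its two input indices, and its type, with all weights $\wu{i}=t-i$ available in parallel, so $H$ is produced with a constant number of parallel map/prefix-sum operations. I would also record the standard remark already noted in the construction: the negative weights (the $-2t$ and $-t$ edges) are inessential, since uniformly shifting all edge weights by a constant does not change which edge is the current maximum at any step under WPGMA, so the hardness persists with strictly positive weights. Combining $\pclass{}$-hardness with membership in $\pclass{}$ gives the $\pcomplete{}$ness result.

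The main obstacle, were the two lemmas not already in hand, is precisely their coupled induction: one must simultaneously maintain that the merge order tracks the gate indices and that each merge resolves to the correct Boolean value, which forces a careful choice of gadget weights — the priority gap of at least $1$ between consecutive gadgets provided by $\wu{i}=t-i$, the use of the $d_k$ vertices to keep $\ctrue{}$ and $\cfalse{}$ from coalescing prematurely, and the $\delta$-sized perturbations that implement \gand{} versus \gor{} — so that no unintended edge ever becomes the global maximum. Given those lemmas, assembling the theorem is routine.
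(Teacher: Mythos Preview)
Your proposal is correct and follows essentially the same approach as the paper: the paper's proof consists of the words ``This immediately implies the following,'' relying entirely on the construction and the two preceding lemmas (Lemma~\ref{lem:unw-hac-order} and the unnamed correctness lemma), which is exactly what you invoke. Your write-up in fact spells out more than the paper does---the explicit decision-problem formulation, the membership-in-$\pclass{}$ argument, and the verification that the reduction is in $\ncclass{}$---but these are all routine details consistent with the paper's construction.
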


\subsection{Hardness for Graph-Based HAC using Average-Linkage}
Our reduction for \unweightedavglink{} is similar to the
reduction for \weightedavglink{}, but requires additional
steps to handle how \unweightedavglink{} adjusts the weight of
an edge between two clusters based on their respective sizes.
\footnote{Note that average-linkage is \emph{not} a special 
case of weighted average-linkage measure. See the start of Appendix~\ref{apx:proofs} for the definitions.}

As before, our reduction starts from the monotone CVP problem, and the
idea is to have HAC simulate the execution of the circuit so that
gates evaluating to $\mathsf{true}$ ($\mathsf{false}$) are merged to \ctrue{} (\cfalse{}).

One of the main problems to deal with is that the weight of an edge
between two clusters $U$ and $V$ in \unweightedavglink{} is equal to 
the sum of the weights crossing the cut from $U$ to $V$ normalized by
the product $|U|\cdot |V|$. The challenge is that we do not know (without
running the algorithm) what the size of \ctrue{} or \cfalse{} will be at some
intermediate step in the algorithm, since these sizes depend 
on the number of gates that evaluate to $\mathsf{true}$ and $\mathsf{false}$ respectively.

To fix this problem, we add \emph{two vertices per gate}
corresponding to $g_i$ and $\bar{g}_i$ and ensure that if
$g_i$ merges to \ctrue{}, $\bar{g}_i$ merges to \cfalse{}. 
In addition, we need to assure that \ctrue{} and \cfalse{} do not merge with each other.
To this end, we ensure that the graph-based HAC algorithm starts by increasing the clusters of both \ctrue{} and \cfalse{} to a large size, which makes them harder to merge with each other.


\begin{figure}
\begin{center}
\vspace{-1.5em}
\includegraphics[trim=0em 0em 0em 0em, width=0.7\textwidth]{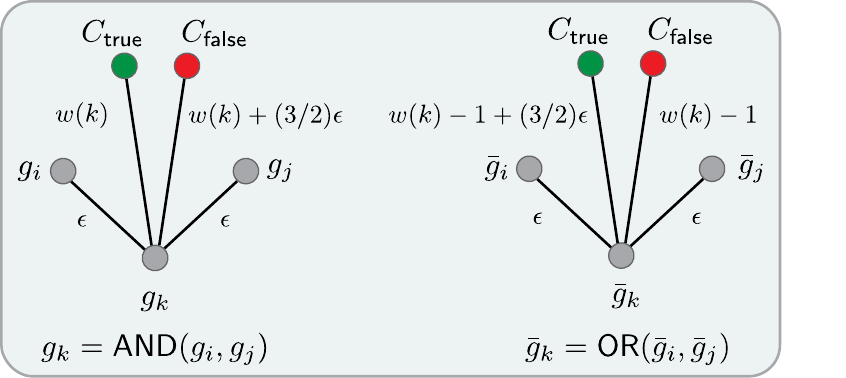}
\caption{\label{fig:hacunweightedaveragelink}
Gadgets used in the reduction from the monotone 
circuit value problem to show that graph-based HAC using \unweightedavglink{}
is \pcomplete{}.
}
\end{center}
\end{figure} 
 
 
\myparagraph{Construction}
As in the case of weighted average linkage, we can assume that the circuit consists of $t$ gates $g_1, \ldots, g_t$ where each of $g_1, \ldots, g_{2n}$ corresponds to one input literal or its negation, and for $i > 2n$, gate $g_i$ has exactly two inputs, which are gates with indices less than $i$.
The reduction builds a graph $H = (V', E')$, which initially has two special vertices, \ctrue{} and \cfalse{}.
In the description we use $\wu{k} := 4t - 2k$ and $\epsilon = 1 / 100$.

First, we add $(1 / \epsilon - 3) \cdot t$ dummy vertices, and connect half of them to \ctrue{} and half to \cfalse{} using edges of weight $t^3$.

Then, we add $2n$ vertices corresponding to the literals, and connect each of them to either \ctrue{} or \cfalse, depending on the literal value, also with an edge of weight $t^3$.
These high weight edges ensure that the graph-based HAC algorithm proceeds by merging the dummy vertices and literal vertices to \ctrue{} or \cfalse.

Finally, we add $2(t-2n)$ vertices corresponding to the gates.
For a fixed $i$ ($2n < i \leq t$), we add two vertices
$g_i$ and $\bar{g}_i$ corresponding to gate $g_i$ in the circuit.

\begin{enumerate}[ref={\arabic*}, topsep=0pt,itemsep=0ex,partopsep=0ex,parsep=1ex, leftmargin=*]
\item For $g_k = \gand{}(g_i, g_j)$,
create a vertex $g_k$ with edges: \\
$\{(g_k, \ctrue{}, \wu{k}), 
(g_k, \cfalse{}, \wu{k} + (3/2)\epsilon)\}$ and \\
$\{(g_k, g_i, \epsilon), (g_k, g_j, \epsilon))\}$.

\item For $\bar{g}_k = \gor{}(\bar{g}_i, \bar{g}_j)$,
create a vertex $\bar{g}_k$ with edges:  \\
$\{
(\bar{g}_k, \ctrue{}, \wu{k} - 1 + (3/2)\epsilon),
(\bar{g}_k, \cfalse{}, \wu{k} - 1) \}$ and \\
$\{(\bar{g}_k, \bar{g}_i, \epsilon), 
(\bar{g}_k, \bar{g}_j, \epsilon)\}$.

\end{enumerate}

The construction for $g_k = \gand{}(g_i, g_j)$ swaps $g_k$ and $\bar{g_k}$
in the descriptions above.
Induction over the merge steps, performing case analysis on the
truth values of the inputs to each gate, similar to our
proof for weighted average-linkage yields the following theorem.

\begin{theorem}
Graph-based HAC using \unweightedavglink{} is
\pcomplete{}.
\end{theorem}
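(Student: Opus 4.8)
The plan is to give an \nc{} reduction from monotone CVP to graph-based HAC under \unweightedavglink{}, mirroring the weighted-average-linkage reduction but with the extra bookkeeping forced by the fact that \unweightedavglink{} normalizes the cut weight by the product of cluster sizes. First I would argue, exactly as in Lemma~\ref{lem:unw-hac-order}, that the construction can be carried out in \nc{}: topologically sort the circuit, assign to each gate $g_k$ the base weight $\wu{k} = 4t-2k$ so that the $\wu{\cdot}$ values are strictly decreasing in $k$, and emit all gadget edges in parallel via prefix-sums. I would then set $\epsilon = 1/100$ as in the construction and establish the key size-stabilization fact: because we add $(1/\epsilon - 3)t$ dummy vertices split evenly between \ctrue{} and \cfalse{}, after the initial (highest-weight, weight-$t^3$) merges of dummies and literal vertices, both \ctrue{} and \cfalse{} have cluster size $\Theta(t/\epsilon)$, whereas every remaining $g_k,\bar g_k$ vertex is a singleton. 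Consequently, when a singleton $g_k$ of weight $O(t)$ merges into \ctrue{} (resp.\ \cfalse{}), the normalization factor $|g_k|\cdot|\ctrue{}|$ changes the effective edge weight only by a controlled amount, and the $(3/2)\epsilon$ perturbations in the gadget weights are exactly what is needed to make the comparison between the $(g_k,\ctrue{})$ and $(g_k,\cfalse{})$ edges resolve in the direction dictated by the gate's output.

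Next I would prove the analogue of Lemma~\ref{lem:unw-hac-order}: the gadgets are processed in index order. Since $\wu{k}$ is strictly decreasing and all cross-gadget edges incident to $g_k$ (and $\bar g_k$) to later gates have strictly smaller weight $\epsilon$, the maximum-weight edge at any point in the ``gate phase'' is incident to the lowest-indexed unprocessed gate, whose inputs (having smaller index) are by induction already absorbed into \ctrue{}/\cfalse{}. I would also verify that \ctrue{} and \cfalse{} never merge with each other before the final step: the only edge between them (or the lack thereof) has weight dominated by the in-progress gate merges throughout, because their huge sizes $\Theta(t/\epsilon)$ drive the normalized weight of any \ctrue{}--\cfalse{} edge far below the $\Theta(t)$ weights of the active gate edges. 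This is the mechanism that replaces the $d_k$ / negative-weight trick from the weighted case.

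The core of the argument is then the induction over merge steps establishing correctness of the simulation: assuming $g_1,\dots,g_{k-1}$ (and their bars) have been merged into the cluster matching their truth value and that $g_k,\bar g_k$ are still singletons with their two $\epsilon$-weight input edges possibly already collapsed onto the \ctrue{}/\cfalse{} edges, I would do a case analysis on the gate type and on the truth values of $g_i,g_j$. For $g_k = \gand{}(g_i,g_j)$: if both inputs are true, the $\epsilon$ from each of $(g_k,g_i)$ and $(g_k,g_j)$ is added to the $(g_k,\ctrue{})$ edge, pushing its numerator to $\wu{k} + 2\epsilon$ versus $\wu{k} + (3/2)\epsilon$ on the \cfalse{} side; after dividing by the (near-equal) size products, the \ctrue{} edge wins and $g_k$ merges correctly. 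If at least one input is false, at most one $\epsilon$ lands on the \ctrue{} side and at least one on the \cfalse{} side, so $(g_k,\cfalse{})$ numerator is $\geq \wu{k} + (5/2)\epsilon$ while $(g_k,\ctrue{})$ is $\leq \wu{k}+\epsilon$, and $g_k$ merges to \cfalse{}; simultaneously $\bar g_k = \gor{}(\bar g_i,\bar g_j)$ gets the symmetric treatment so that $g_k$ and $\bar g_k$ always land in opposite clusters, which is what keeps $|\ctrue{}|$ and $|\cfalse{}|$ equal to within $O(1)$ throughout (and hence keeps all the normalized-weight comparisons valid). The $\gor{}$ gates are handled symmetrically. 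Finally I would conclude that a designated output gate $g_t$ lands in \ctrue{} iff the circuit evaluates to true, giving \pcomplete{}ness.

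The main obstacle I anticipate is making the size-normalization estimates tight enough: one must bound how far $|\ctrue{}|$ and $|\cfalse{}|$ can drift apart during the gate phase (they should stay within an additive constant, thanks to the paired $g_k/\bar g_k$ vertices), and then show that the resulting multiplicative distortion of the edge weights — roughly a factor $1 \pm O(1/(t/\epsilon)) = 1 \pm O(\epsilon/t)$ per merge — is dominated by the deliberately engineered $\Theta(\epsilon)$ gaps between the competing gadget edge weights, uniformly over all $O(t)$ gate-phase steps. Getting the constants to line up (why $(3/2)\epsilon$, why $\epsilon = 1/100$, why $(1/\epsilon - 3)t$ dummies) is the delicate part; everything else is a routine, if lengthy, induction paralleling the weighted-average-linkage proof.
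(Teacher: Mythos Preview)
Your proposal is correct and follows essentially the same approach as the paper: the same dummy-vertex padding to stabilize $|\ctrue|$ and $|\cfalse|$ at $\Theta(t/\epsilon)$, the same paired $g_k/\bar g_k$ gadgets to keep the two cluster sizes equal (up to an additive $1$ mid-step), the same index-order processing argument via decreasing $\wu{k}$, the same case analysis on the unnormalized numerators, and the same mechanism (large cluster sizes) to prevent a premature \ctrue{}--\cfalse{} merge. You have in fact anticipated more of the fine print than the paper's own proof spells out; the paper simply asserts that the transient size-$1$ imbalance when merging $\bar g_k$ does not flip the comparison, whereas you correctly flag this as the place where the constants must be checked.
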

\ifx\confversion\undefined
\begin{proof}
Let a merge step of the graph-based HAC algorithm be two consecutive merges that it performs,
and call the internal merges sub-merges.
The proof shows by induction on the merge steps that the next merge merges the
vertices $g$ and $\bar{g}$, and that the merge correctly simulates the circuit logic
for $g$ and $\bar{g}$.

Suppose in the $k$-th step that some vertex $g_{k'}$, with $k' > k$ is merged 
instead of $g_k$ (the proof is essentially identical for $\bar{g}_k$). 
We consider two cases. 
In the first case, suppose the merge
is between $g_{k'}$ and either \ctrue{} or \cfalse{}.
Denote the weights of \ctrue{} and \cfalse{} by $r$.
The weight of
the edge between $g_k$ and \ctrue{} or \cfalse{}
is at least $\wu{k}/r$ and the weight of any edge incident to $g_{k'}$ is at most $(\wu{k'} + (5/2)\epsilon)/r < \wu{k} / r$, a contradiction.

In the second case, suppose the merge is between $g_{k'}$ and one of its 
inputs, $g_j$, which clearly uses an edge of weight at most $\epsilon$.
To argue that this cannot happen, we show that the linkage similarity of $g_k$ to both \ctrue{} and \cfalse{} is greater than $\epsilon$.
Indeed, the cluster size of both \ctrue{} and \cfalse{} is at most $1 + t(1/\epsilon - 3) + 2t < t / \epsilon$.
Hence, the weight of an edge from $g_k$ to any of these clusters is at least $(\wu{k}-1) / (t / \epsilon) \leq (\wu{t}-1) / (t / \epsilon) = (4t-2t-1) / (t / \epsilon) > \epsilon$.

Finally, we also show that \ctrue{} and \cfalse{} do not merge with each other.
Observe that the sizes of both these clusters are at least $t/(2\epsilon)$.
The total weight between them can be upper bounded by the total weight of edges connecting \ctrue{} and \cfalse{} to the vertices representing gates, which is at most $4 \cdot t \cdot \wu{0} \leq 16t^2$.
Hence, the linkage similarity is at most $16 t^2 / (t^2 / \epsilon^2) = 16\epsilon^2 < \epsilon$.

Next, we show that the gadgets correctly simulate the gates' logic. Suppose
$g_k = \gand{}(g_i, g_j)$ and $\bar{g}_k = \gor{}(\bar{g}_i, \bar{g}_j)$.
Since the edges incident to $g_k$ have higher weights than edges incident to $\bar{g}_k$,
$g_k$ are merged first. Below, we consider the \emph{unnormalized} weights, 
i.e., the weights before normalizing by
the product of the cluster weights (comparing unnormalized weights is sufficient
since clusters of \ctrue{} and \cfalse{}  have the same size).
Suppose $g_i = g_j = \mathsf{false}$, then the weight between $g_k$ and 
\ctrue{} is $\wu{k}$, the weight to \cfalse{} is $\wu{k} + (3/2)\epsilon + 2\epsilon$
and $g_k$ is merged to \cfalse{}. If only one of $g_i=\mathsf{false}$, then
the weight between $g_k$ and 
\ctrue{} is $\wu{k} + \epsilon$, the weight to \cfalse{} is $\wu{k} + (3/2)\epsilon + \epsilon$
and $g_k$ will also be merged to \cfalse{}.
Finally, if $g_i = g_j = \mathsf{true}$,
the weight between $g_k$ and 
\ctrue{} is $\wu{k} + 2\epsilon$, the weight to \cfalse{} is $\wu{k} + (3/2)\epsilon$
and $g_k$ is merged to \ctrue{}.

Note that when performing the merge for $\bar{g}_k$, the cluster that $\bar{g}_k$ must
not merge to has size one larger than the cluster it should merge to. It is easy to calculate
as in the example above that this imbalance in sizes does not affect the correctness of the
gadget for $\bar{g}_k$.

%
\end{proof}
\fi


%
%
%
%

\section{\parhac{} Implementation}\label{sec:impl}
This section completes our implementation description introduced in Section~\ref{sec:impl}.
We implemented \parhac{} in C++ using the recently developed 
\emph{CPAM} (Compressed Parallel Augmented Maps) framework~\cite{dhulipala21cpam},
which lets users construct compressed and highly space-efficient
ordered maps and sets.
We build on CPAM's implementation of the Aspen 
framework~\cite{dhulipala19aspen, dhulipala21cpam}, which provides
a lossless compressed dynamic graph representation that supports efficient
updates (batch edge insertions and deletions).
We also use the ParlayLib library for parallel primitives such
as reduction, prefix-sum, and  sorting~\cite{blelloch2020parlay}.

\myparagraph{Geometric Layering}
Instead of explicitly extracting the edges in each layer
and constructing the graph $\gbuc$ in each outer round, we opted for a 
simpler approach which works as follows:
for each layer-contraction phase in the algorithm we first
(i) compute the set of vertices that are active in this phase, i.e.,
have at least one incident edge in the current layer and then
(ii) repeatedly run the capacitated random-mate steps for an 
inner round from Algorithm~\ref{alg:bucketmerge}, recomputing 
the weights of \emph{all affected edges} in the graph at the 
end of each inner round. Our implementation modifies
Algorithm~\ref{alg:bucketmerge} by removing Line~\ref{bkt:innerstart}, which effectively fuses the outer and inner rounds.
After each round, we check whether each vertex is still active, and continue within this layer
until no further active vertices remain.

\myparagraph{Compressed Clustered Graph Representation}
We observed that in practice our \parhac{} algorithm can perform
a large number of rounds per-layer in the case where $\epsilon$
is very small (e.g., $\epsilon = 0.01$). 
Although updating the entire graph on each of these
rounds is theoretically-efficient (the algorithm will
only perform $\tilde{O}(m + n)$ work), many of these rounds only merge
a small number of vertices, and leave the majority of the edges
unaffected.
Hence, updating the weights of all edges in each round 
can be highly wasteful.

Instead, we designed an efficient \emph{compressed clustered graph
representation} using purely-functional compressed trees from the CPAM
framework~\cite{dhulipala21cpam}. The new data structure handles all merge
operations that affect the underlying similarity graph in work
proportional to the number of merged vertices and their incident neighbors 
rather than proportional to the total number of edges in the graph. Importantly,
using CPAM enables lossless compression for integer-keyed maps to store the
cluster adjacency information using just a few bytes per edge.\footnote{\revised{We find a 2.9x improvement in space-usage by using our CPAM-based implementation over an optimized hashtable-based implementation of a clustered graph, while the running times of both implementations are essentially the same.}}

Our data structure stores the vertices in an array, and stores
the current neighbors of each vertex in a weight-balanced compressed purely-functional
tree called a PaC-tree~\cite{dhulipala21cpam}. Each vertex also stores several
extra variables that store its current cluster size, and variables that help 
build the dendrogram, such as the current cluster-id of each vertex.
There are two key operations that we provide in the compressed clustered graph 
representation:
\begin{enumerate}[label=(\arabic*),topsep=0pt,itemsep=0pt,parsep=0pt,leftmargin=25pt]
  \item \emph{MultiMerge}: given a sequence $M$ of $(r, b)$ vertex merges where $r$ is a red vertex and $b$ is a blue vertex, update the graph based on all merges in $M$.
  
  \item \emph{Neighborhood Primitives}: apply a function $f$ (e.g., map, reduce, filter) in parallel over all current neighbors of a vertex.
\end{enumerate}
For example, our implementation of Algorithm~\ref{alg:bucketmerge} uses a
parallel map operation over the neighbors of all blue vertices to generate
a sequence of merges, which it then supplies to the MultiMerge primitive.
Additionally, in our implementation, all of the details of maintaining the
dendrogram, keeping track of the size of each cluster, and maintaining the
current state of the underlying weighted similarity graph are handled by
the compressed clustered graph representation. This enables us to write high-level 
code for our \parhac{} implementation while handling the more low-level
details about efficiently merging vertices within the clustered graph code.

\myparagraph{Other Parallel Graph Clustering Algorithms}
Our new graph representation makes it very easy to implement other
parallel graph clustering algorithms.
In particular, we developed a faithful
version of the Affinity clustering algorithm~\cite{bateni2017affinity} and 
the recently proposed SCC algorithm~\cite{monath2020scalable}
(which is essentially a thresholded version of Affinity) using a few
dozens of lines of additional code.
Both algorithms are essentially heuristics that are designed to mimic the
behavior of HAC, while running in very few rounds (an important constraint
for the distributed environments these algorithms are designed for).
We note that most of the work done by these algorithms is the work required to merge
clusters in the underlying graph, and so by using the same primitives for merging
graphs, we eliminate a significant source of differences when comparing algorithms.

The Affinity clustering algorithm~\cite{bateni2017affinity} 
is inspired by \boruvka{}'s MST algorithm~\cite{boruvka1926a}. 
In each round of Affinity clustering, each vertex selects its heaviest incident
edge, and all connected components induced by the chosen edges are merged
to form new clusters.
This process continues until no further edges remain in the graph.
After computing best edges, our implementation uses a highly optimized 
parallel union-find connectivity algorithm from the 
ConnectIt framework~\cite{dhulipala2020connectit, hong2020exploring} to compute a unique vertex
per Affinity tree, which is used as a `red' vertex in the MultiMerge
procedure, with all other vertices in its component being `blue' vertices.

The SCC algorithm~\cite{monath2020scalable} is closely related to 
the Affinity algorithm, and can be viewed as running Affinity
with different weight thresholds.
Specifically, in the $i$-th round, the SCC algorithm runs a round of
Affinity on the graph induced by all edges with weight at least $T_i$
where $T_i$ is the weight threshold on the $i$-th round. Given the
maximum number of rounds $R$, and the
upper threshold $U$ and lower threshold $L$, the SCC algorithm runs 
a sequence of $R$ rounds where $T_i = U \cdot (L/U)^{R - i}$.
As with Affinity, if the graph becomes empty before all $R$ iterations are
run, the algorithm terminates early.
We note that our implementation is a best-effort approximation of the SCC 
algorithm, since SCC was originally designed for the dissimilarity 
setting and there is no one-to-one mapping to the similarity setting.
We refer to our implementation of SCC as \sccsim{}.

We note that our initial implementations of Affinity and \sccsim{} 
were developed in the GBBS framework for static graph
processing~\cite{dhulipala18scalable, dhulipala20grades}, and did not 
make use of the compressed clustered graph representation. These initial
implementations recomputed the weights of \emph{all} edges in the graph in
each round.
We found that our new implementations using the compressed clustered graph,
which only recompute the weights of edges incident to a merge, are
between 7--11x faster across the graphs we evaluate.

\section{Experimental Results}
\myparagraph{Graph Data}
We list information about graphs used in our experiments in
Table~\ref{table:sizes}. 
\defn{com-DBLP (DB)} is a co-authorship network sourced from the
DBLP computer science bibliography (License: \emph{CC BY-SA}). 
\defn{YouTube (YT)} is a social-network formed by 
user-defined groups on the YouTube site (License: \emph{CC BY-SA}).
\defn{LiveJournal (LJ)} is a directed graph of the social network (License: \emph{CC BY-SA}).
\defn{com-Orkut (OK)} is an undirected
graph of the Orkut social network (License: \emph{CC BY-SA}).
\defn{Friendster (FS)} is an
undirected graph describing friendships from a gaming network (License: \emph{CC0 1.0}).
All of the aforementioned graphs are sourced from the SNAP dataset~\cite{leskovec2014snap}.\footnote{Sources: \url{https://snap.stanford.edu/data/}.}
The licenses are obtained using the data licensing information at the Network
Repository.\footnote{https://networkrepository.com/policy.php}
\defn{USA-Road (RD)} is an undirected road network from the DIMACS
challenge~\cite{road-graph} (License: \emph{CC BY-SA}).\footnote{http://www.dis.uniroma1.it/challenge9/}
\defn{Twitter (TW)} is a directed graph of the Twitter network, where 
edges represent the follower
relationship~\cite{kwak2010twitter} (License: \emph{CC BY-SA}).
\footnote{Source: \url{http://law.di.unimi.it/webdata/twitter-2010/}.}
\defn{ClueWeb (CW)} is a web graph from the Lemur project at CMU~\cite{boldi2004webgraph} (the authors use a custom open-source license that ``provide flexibility to scientists and software developers''; for more details please see \url{http://www.lemurproject.org/}).
\footnote{Source: \url{https://law.di.unimi.it/webdata/clueweb12/}.}
\defn{Hyperlink (HL)} is a hyperlink graph obtained from the
WebDataCommons dataset where nodes represent web pages~\cite{meusel15hyperlink} (License: available to anyone following the Common Crawl Terms of Use: \url{https://commoncrawl.org/terms-of-use/}).
\footnote{Source: \url{http://webdatacommons.org/hyperlinkgraph/}.}
We note that the large real-world graphs that we study are 
not weighted, and so we set the similarity of an edge $(u,v)$
to $\frac{1}{\log(\degree{u} + \degree{v})}$.
\revised{For the CW and HL graphs, which contain tens to hundreds of billions of edges,
due to memory constraints on the machine we use, we set the \emph{initial edge 
weights} to $1$ and use byte-codes to store the
weights in a number of bytes proportional to their size~\cite{shun2015ligraplus, dhulipala21cpam}.
We emphasize that the \emph{aggregate weights on edges} (the number of edges in the
original graph that each edge between clusters represents) as the algorithm 
progresses grow \emph{significantly larger}, and do not simply stay fixed at $1$.}

\revised{
We also consider graphs generated from a pointset by 
computing the approximate nearest neighbors (ANN) of
each point, and converting the distances to similarities.
Our graph building process converts distances to similarities
using the formula $\mathsf{sim}(u,v) = \frac{1}{1 + \mathsf{dist}(u,v)}$.\footnote{\revised{We also considered other distance-to-similarity schemes, e.g., $\mathsf{sim}(u,v) = 1/(1 + \log^{c}(\mathsf{dist}(u,v)))$ and $\mathsf{sim}(u,v) = e^{-\mathsf{dist}(u,v)}$. We observed that the first choice, with $c > 1$ yielded slightly better quality results for sparse graph-based methods, but chose the scheme used in this paper for its simplicity. }}
It then reweights similarities by dividing each similarity by the maximum similarity.
In our implementation, we compute the $k$-approximate nearest neighbors using a shared-memory parallel implementation of the \emph{Vamana} approximate
nearest neighbors (ANN) algorithm~\cite{vamana} with parameters $R=75, L=100, Q=\max(L, k)$.
We note that this parameter setting yields almost perfect recall on the SIFT datasets
for the $10$-nearest neighbors. More details about the quality of the Vamana algorithm can be found on
ANN-Benchmarks~\cite{aumuller2017ann}.}
The datasets that we use are sourced from the sklearn.datasets 
package\footnote{For more detailed information see 
\url{https://scikit-learn.org/stable/datasets.html}.}
These datasets are originally sourced from the UCI repository~\footnote{\url{https://archive.ics.uci.edu/ml/datasets.php}},
which does not have a clearly stated licensing policy. However, we note that all of the
datasets we test on have been widely-used in the machine learning literature, with datsets
such as Iris being used in thousands of publications to date.
and the Glove-100 dataset from the ANN-Benchmarks collection~\cite{aumuller2017ann}.
The Glove-100 dataset is licensed as Public Domain Dedication and License v1.0.\footnote{\url{https://nlp.stanford.edu/projects/glove/}}
We symmetrized all directed graph inputs studied in this paper.

\begin{table}\footnotesize
\centering
\centering
\caption{\small Graph inputs, including the number of vertices $(n)$, number of directed edges $(m)$, and the average degree $(m/n)$.}
\begin{tabular}[!t]{lrrr}   
\toprule
{Graph Dataset} & Num. Vertices & Num. Edges & Avg. Degree\\
\midrule
{\emph{com-DBLP} {\bf (DB)} }         & 425,957         & 2,099,732       & 4.92  \\
{\emph{YouTube-Sym} {\bf (YT)} }      & 1,138,499       & 5,980,886       & 5.25  \\
{\emph{USA-Road} {\bf (RD)} }         & 23,947,348      & 57,708,624      & 2.40  \\
{\emph{LiveJournal} {\bf(LJ)} }       & 4,847,571       & 85,702,474      & 17.6  \\
{\emph{com-Orkut    } {\bf(OK)} }     & 3,072,627       & 234,370,166     & 76.2  \\
{\emph{Twitter      } {\bf(TW)} }     & 41,652,231      & 2,405,026,092   & 57.7  \\
{\emph{Friendster   } {\bf(FS)} }     & 65,608,366      & 3,612,134,270   & 55.0 \\
{\emph{ClueWeb      } {\bf(CW)} }     & 978,408,098     & 74,744,358,622  & 76.3 \\
{\emph{Hyperlink} {\bf(HL)} }   & 1,724,573,718   & 124,141,874,032 & 71.9
\end{tabular}
\label{table:sizes}
\end{table}

\myparagraph{Experimental Setup} We ran all of our experiments on a 72-core
Dell PowerEdge R930 (with two-way hyper-threading) with $4\times 2.4\mbox{GHz}$
Intel 18-core E7-8867 v4 Xeon processors (with a 4800MHz bus and 45MB L3 cache)
and 1\mbox{TB} of main memory. 
Our programs use a lightweight parallel scheduler based
on the Arora-Blumofe-Plaxton deque~\cite{arora2001thread,blelloch2020parlay}.
For parallel experiments, we
use \texttt{numactl -i all} to balance the memory
allocations across the sockets.  

\subsection{Quality Evaluation}

\myparagraph{Quality Metrics}
To measure quality, we use the
\defn{Adjusted Rand-Index (ARI)} and 
\defn{Normalized Mutual Information (NMI)} scores,
which are standard measures of the quality of a clustering with respect
to a ground-truth clustering.
We also use the \defn{Dendrogram Purity} measure~\cite{heller05bayesian},
which takes on values
between $[0, 1]$ and takes on a value of $1$ if and only if the tree contains
the ground truth clustering as a tree consistent partition (i.e., each
class appears as exactly the leaves of some subtree of the tree).
Given a tree $T$ tree with leaves $V$, and a ground truth partition of
$V$ into $C=\{C_1, \ldots, C_l\}$ classes, define the purity 
of a subset $S \subseteq V$ with respect to a class $C_i$ to be $\mathcal{P}(S, C_i) = |S \cap C_i|/|S|$.
Then, the purity of $T$ is
\[
\mathcal{P}(T) = \frac{1}{|\emph{Pairs}|}\sum_{i=1}^{l} \sum_{x,y \in C_i, x\neq y} P(\mathsf{lca}_{T}(x,y), C_i)
\]
where $\emph{Pairs}=
\{(x,y)\ |\ \exists i \text{ s.t. } \{x,y\} \subseteq C_i\}$ and
$\mathsf{lca}_T(x,y)$ is the set of leaves of the least common ancestor of $x$ and $y$ in $T$.
Lastly, we also study the unsupervised \defn{Dasgupta Cost}~\cite{Dasgupta2016} measure of our dendrograms,
which is measured with respect to an underlying similarity graph $G(V, E, w)$
and is defined as:
\[
  \sum_{(u,v) \in E} |\mathsf{lca}_{T}(u,v)| \cdot w(u,v)
\]

\myparagraph{Quality Study}
Table~\ref{table:qualitytable} shows the results of our quality study.
We observe that our \parhacappx{} algorithm achieves consistent high-quality
results across all of the quality measures that we evaluate.
For the ARI measure, \parhacappx{} is on average within 1.5\% of the best
ARI score for each graph (and achieves the 
best score for one of the graphs).
For the NMI measure, \parhacappx{} is on average within 1.3\% of the
best NMI score for each graph (and again achieves the 
best score for two of the graphs).
\parhacappx{} also achieves good results for the dendrogram purity and
Dasgupta cost measures. For purity, it is on average within 1.9\%
of the best purity score for each graph, achieving the best score for one
of the graphs, and for the unsupervised Dasgupta cost measure
it is on average within 1.03\% of the smallest Dasgupta cost score
for each graph.
Compared with the SciPy unweighted average-linkage which runs on
the underlying pointset, \parhacappx{} achieves
14.4\% better ARI score on average,
3.6\% better NMI score on average,
4.7\% better dendrogram purity on average, and
1.02\% larger Dasgupta cost on average.
Lastly, we observe that out of Affinity and \sccsim{}, \sccsim{}
nearly always outperforms Affinity on all quality measures.
Compared to \sccsim{}, \parhacappx{} consistently obtains
better quality results, achieving
35.6\% better ARI score on average,
12.1\% better NMI score on average,
6.7\% better dendrogram purity on average, and
3.1\% better Dasgupta cost on average.

\newcommand{\STAB}[1]{\begin{tabular}{@{}c@{}}#1\end{tabular}}
\begin{sidewaystable}
\small
\vspace{-2.3em}
\centering
\caption{\revised{\small Adjusted Rand-Index (ARI), Normalized Mutual
Information (NMI), Dendrogram Purity, and Dasgupta cost
of our new \parhac{} implementations (columns 2--3) versus 
the \rac{} and \seqhac{} implementations (columns 4--5), our Affinity and
\sccsim{} implementations (columns 6--7), and the HAC implementations
from SciPy (columns 8--11).
Both \rac{} and \seqhac{}$_{\mathcal{E}}$ are exact HAC algorithms,
and thus compute the same dendrogram.
The scores are calculated by evaluating the clustering generated by each cut of the 
dendrogram against ground-truth labels. All graph-based implementations are run
over the similarity graph given by an approximate $k$-NN graph with $k = 10$. 
The Dasgupta cost is computed over the complete similarity graph generated from
the all-pairs distance graph.
The best quality score for each graph is in green and underlined.}}

\smallskip{}
\begin{tabular}{@{}cl cc|cc|cc|cccc }
\toprule
& {Dataset} & \parhac{}$_{\mathcal{E}}$ & \parhac{}$_{0.1}$ & \rac{} and \seqhac{}$_{\mathcal{E}}$ & \seqhac{}$_{0.1}$ & Affinity & \sccsim{} & Sci-Single & Sci-Complete & Sci-Avg & Sci-Ward \\
\midrule
\multirow{5}{*}{\STAB{\rotatebox[origin=c]{90}{{ARI}}}}

& \emph{iris}   &  0.892 & \best{0.911} & 0.873 & 0.873 & 0.599 & 0.786 & 0.715 & 0.642 & 0.759 & 0.731  \\
& \emph{wine}   &  0.401 & 0.401 & 0.401 & 0.401 & \best{0.416} & 0.374 & 0.298 & 0.371 & 0.352 & 0.368  \\
& \emph{digits} &  \best{0.912} & 0.896 & 0.895 & 0.895 & 0.625 & 0.851 & 0.661 & 0.479 & 0.690 & 0.813  \\
& \emph{cancer} &  0.440 & 0.491 & 0.447 & 0.447 & 0.375 & 0.197 & \best{0.561} & 0.465 & 0.537 & 0.406  \\
& \emph{faces}  &  \best{0.621} & 0.618 & 0.610 & 0.610 & 0.460 & 0.607 & 0.468 & 0.472 & 0.529 & 0.608  \\

\midrule
\multirow{5}{*}{\STAB{\rotatebox[origin=c]{90}{{NMI}}}}
& \emph{iris}   & 0.858 & \best{0.876} & 0.842 & 0.842 & 0.692 & 0.780 & 0.734 & 0.722 & 0.806 & 0.770  \\
& \emph{wine}   & 0.409 & 0.409 & 0.394 & 0.394 & 0.426 & 0.400 & 0.410 & \best{0.442} & 0.428 & 0.428  \\
& \emph{digits} & \best{0.928} & 0.916 & 0.913 & 0.931 & 0.768 & 0.859 & 0.562 & 0.711 & 0.830 & 0.869  \\
& \emph{cancer} & 0.445 & \best{0.464} & 0.445 & 0.445 & 0.412 & 0.325 & 0.316 & 0.428 & 0.456 & 0.423  \\
& \emph{faces}  & \best{0.879} & 0.874 & 0.873 & 0.873 & 0.858 & 0.867 & 0.848 & 0.849 & 0.861 & 0.869  \\

\midrule
\multirow{5}{*}{\STAB{\rotatebox[origin=c]{90}{{Purity}}}}
& \emph{iris}   &  0.931 & \best{0.943} & 0.920 & 0.920 & 0.764 & 0.884 & 0.843 & 0.791 & 0.869 & 0.850  \\
& \emph{wine}   &  0.619 & 0.623 & 0.605 & 0.605 & 0.616 & \best{0.630} & 0.584 & 0.607 & 0.620 & 0.616  \\
& \emph{digits} &  \best{0.904} & 0.883 & 0.891 & 0.891 & 0.729 & 0.811 & 0.737 & 0.562 & 0.755 & 0.851  \\
& \emph{cancer} &  0.812 & 0.823 & 0.797 & 0.797 & 0.764 & 0.703 & 0.798 & 0.804 & \best{0.829} & 0.783  \\
& \emph{faces}  &  \best{0.640} & 0.613 & 0.618 & 0.621 & 0.538 & 0.601 & 0.566 & 0.502 & 0.623 & 0.614  \\

\midrule
\multirow{5}{*}{\STAB{\rotatebox[origin=c]{90}{{Dasgupta}}}}

& \emph{iris}   & 320665 & 320883 & 320665 & 320665 & 362177 & 322308 & 314445 & 323384 & \best{310957} & 311267 \\
& \emph{wine}   & 29114  & 29093  & 29114 & 29114 & 29468 & 27095 & 27891 & 27745 & 27324 & \best{26983} \\
& \emph{digits} & 243841216 & 243166244 & 243840641 & 243837090 & 244130381 & 244983907 & 244836138 & 243726701 & \best{240476750} & 241239871 \\
& \emph{cancer} & 789808 & 751107 & 789808 & 789808 & 794858 & 952966 & 841428 & 742226 & \best{737071} & 752549 \\
& \emph{faces}  & 4629934 & 4632156 & 4629934 & 4622371 & 4674997 & 4669787 & 4640884 & 4600388 & \best{4569916} & 4619691

\end{tabular}
\label{table:qualitytable}
\end{sidewaystable}

\myparagraph{Quality vs. $k$}
We also show additional results for the quality of different algorithms
studied in this paper versus the value of $k$ used in the $k$-NN graph 
construction. For clarity, the following figures are shown at the end
of the appendix.
\begin{itemize}
    \item Figures~\ref{fig:iris-ARI}--\ref{fig:iris-Dasgupta} show quality
    measures for the \emph{iris} dataset.
    \item Figures~\ref{fig:wine-ARI}--\ref{fig:wine-Dasgupta} show quality
    measures for the \emph{wine} dataset.
    \item Figures~\ref{fig:digits-ARI}--\ref{fig:digits-Dasgupta} show quality
    measures for the \emph{digits} dataset.
    \item Figures~\ref{fig:faces-ARI}--\ref{fig:faces-Dasgupta} show quality
    measures for the \emph{faces} dataset.
    \item Figures~\ref{fig:cancer-ARI}--\ref{fig:cancer-Dasgupta} show quality
    measures for the \emph{cancer} dataset.
\end{itemize}

There are several interesting trends that we observe across all of these
results:

\begin{itemize}
\item First, \emph{sparse similarity graphs}, i.e.
graphs constructed using very small $k$ relative to the total number of data points
yield high quality results. In particular, using $k=10$ is almost always an ideal choice
for all quality measures except for Dasgupta cost. For the unsupervised Dasgupta cost
objective, we observe that the cost actually improves slightly using very dense
(almost complete) graphs, which could be due to the fact that the Dasgupta cost measures
is computed over the \emph{complete} version of the graph.

\item Second, we observe that Affinity and SCC yield significantly more noisy results 
compared with the \parhac{} results (note that Affinity and SCC are deterministic algorithms;
by noise we are referring to the clustering quality as a function of $k$). 
It would be interesting to better understand why this is the case. We conjecture that 
this is due to overmerging within each round, and plan to investigate this further in
future work.

\item Third, \parhac{} using $\epsilon=\{0, 0.01, 0.1\}$ yield similar
results in terms of quality up until very large $k$. For $\epsilon=1$, we see a 
sharp divergence and loss of quality, which suggests that this value of $\epsilon$ may 
be impractical to use in practice. We find that our suggested choice of $\epsilon=0.1$
yields almost exact quality results for very sparse graphs (see Table~\ref{table:qualitytable} for the results with $k=10$) and degrades gradually, with poor performance only once the graph becomes close to complete.
\end{itemize}

\begin{table}\footnotesize
\centering
\centering
\caption{\small Variability in each cost measure when running \parhac{} using
varying values of $\epsilon$ on the Iris dataset with $k=10$. Each entry is the standard
deviation of the metric when running 100 repeated trials.}
\begin{tabular}[!t]{lrrrr}   
\toprule
{Epsilon} & ARI & NMI & Purity & Dasgupta \\
\midrule
{$0$}         & \num{2e-16} & \num{3e-16} & \num{2e-16} & \num{5e-11} \\
{$0.01$}      & \num{1e-16} & \num{3e-16} & \num{2e-16} & \num{4e-16} \\
{$0.1$}       & 0.058 & 0.044 & 0.024 & 633 \\
{$1.0$}       & 0.063 & 0.048 & 0.031 & 1592 \\
\end{tabular}
\label{table:errorquality}
\end{table}

\myparagraph{Variability for $k=10$}
As \parhac{} is a randomized algorithm, the clustering results obtained by
the algorithm can vary from run to run. We studied the variability in the clustering
outputs obtained by \parhac{} for different values of $\epsilon$, and for a fixed
$k=10$ in the graph-building process and show the standard deviation in Table~\ref{table:errorquality}.
We observe that for $\epsilon=0$ and very small values, e.g., $\epsilon=0.01$, the 
clustering outputs are essentially identical across runs. For $\epsilon=0.1$, we observe
some variability across all cost measures, but note that the difference is at most
\num{5.8}\% for the ARI, NMI, and Purity metrics, and is just \num{0.19}\% of 
the overall Dasgupta cost for the Dasgupta objective. We also emphasize that the higher
variability seems to be responsible for some of the best results that we observe
in Table~\ref{table:qualitytable}, e.g., \parhac{} using $\epsilon=0.1$ achieves the 
best ARI score across all algorithms that we evaluated. When going from $\epsilon=0.1$ to 
$\epsilon=1$, as expected, we observe
that the variability increases with increasing $\epsilon$, as the algorithm has greater
choice in which edges to merge in a given layer.

\begin{figure*}[!t]
\begin{minipage}{.48\columnwidth}
\hspace{-1em}
    \includegraphics[width=0.9\columnwidth]{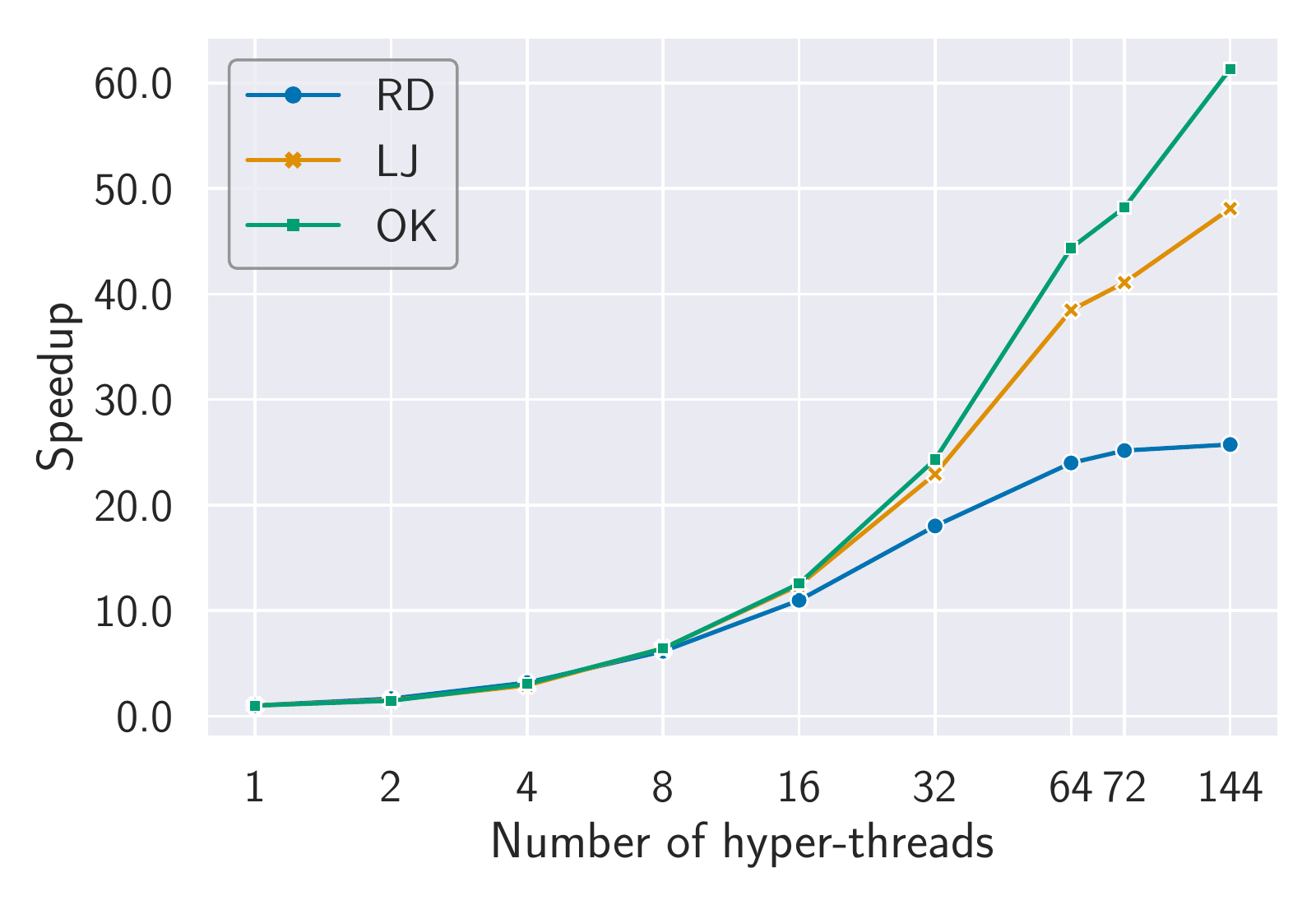}
\end{minipage}\hfill
\begin{minipage}{0.48\columnwidth}
  \includegraphics[width=0.98\columnwidth]{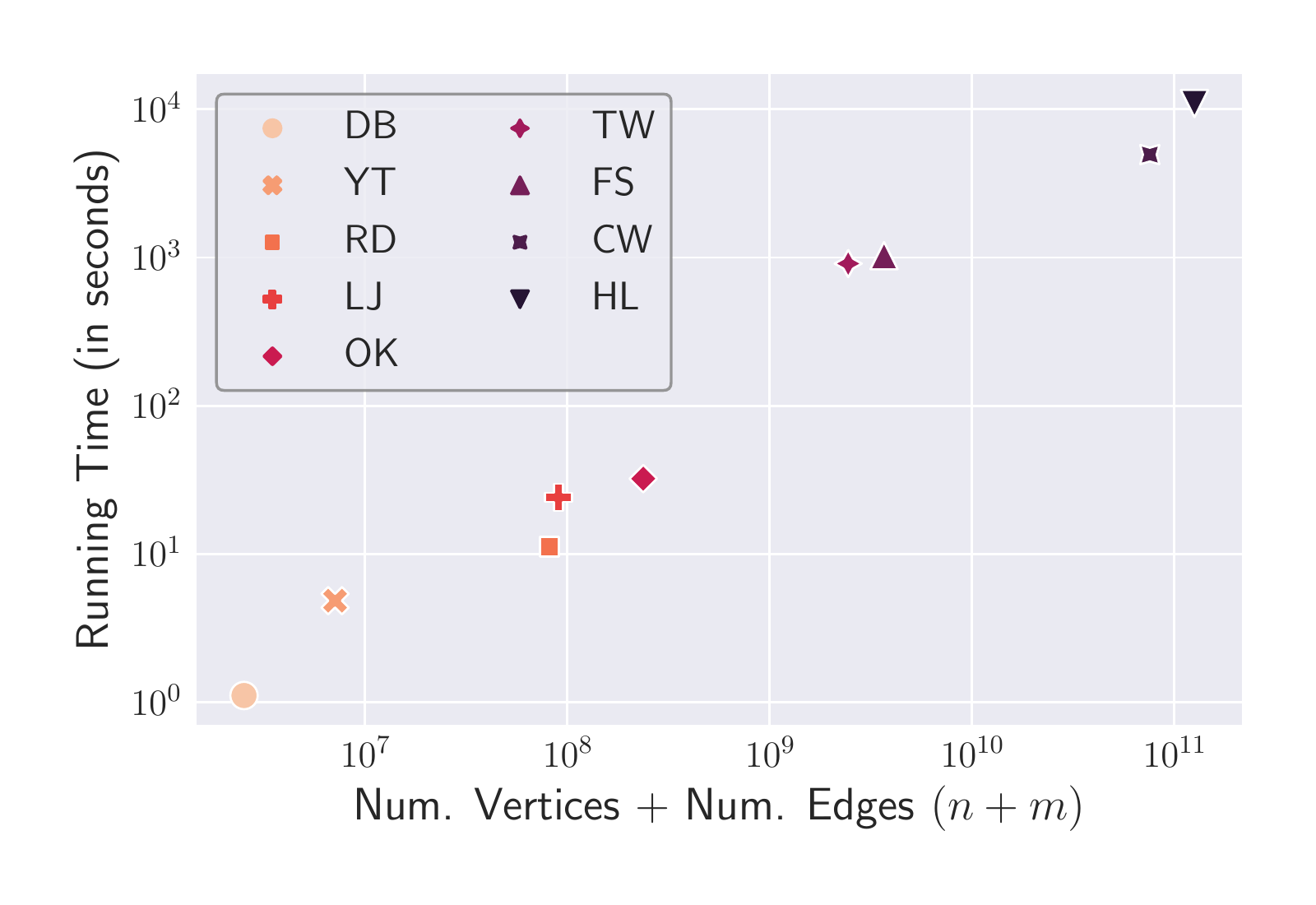}
\end{minipage}\\
\begin{minipage}[t]{.48\columnwidth}
    \caption{\small Speedups for three of our large real-world graphs
on the $y$-axis versus the number of hyper-threads used on the $x$-axis.
\label{fig:speedup_vs_fixed_eps}}
\end{minipage}\hfill
\begin{minipage}[t]{.48\columnwidth}
    \caption{\small
    Parallel running time of the \parhac{} algorithm in seconds on the $y$-axis 
    in log-scale versus the size of each graph in terms of the total number of vertices and  edges on the $x$-axis. \label{fig:size_vs_time}}
\end{minipage}\hfill
\end{figure*}

\subsection{Scalability Evaluation}

In this sub-section, we present additional experimental results on the
scalability of our algorithms, both on real-world graph datasets and
real-world pointsets (both described earlier).

\myparagraph{Speedup Results}
In Figure~\ref{fig:speedup_vs_fixed_eps} we show the speedup of our
\parhac{} implementation on the RD, LJ, and OK graphs. On LJ and OK,
our \parhac{} implementation achieves 48.0x and 61.3x self-relative
speedup respectively. For the RD graph, our \parhac{} implementation
achieves 25.7x self-relative speedup. The lower self-relative 
speedups for the RD graph are since \parhac{} performs significantly less work on RD than on the LJ graph.
In particular, the time
spent merging vertices in the MultiMerge implementation is 3.8x lower
than the time spent in the LJ graph, although both graphs have nearly the same
total number of vertices and edges. The reason is
that the average number of neighbors per cluster at the time of
its merge is significantly lower on RD than on the other graphs.

\myparagraph{Scalability with Increasing Graph Sizes}
Figure~\ref{fig:size_vs_time} shows the parallel running times
of \parhac{} as a function of
the graph size in terms of the total number of vertices and edges in
the graph. We observe that the running time of our algorithms grows
essentially linearly as a function of the graph size. 
We noticed that although the total number of vertices and edges for the
RD and LJ graphs are similar (LJ has 10\% more total vertices and edges),
the running time for the LJ graph is 33\% larger. The reason is that
the RD graph has significantly lower average-degree than the LJ graph,
which results in much less work being performed on average when merging
vertices.

\begin{figure*}[!t]
\begin{minipage}{.48\columnwidth}
\hspace{-1em}
    \includegraphics[width=0.93\columnwidth]{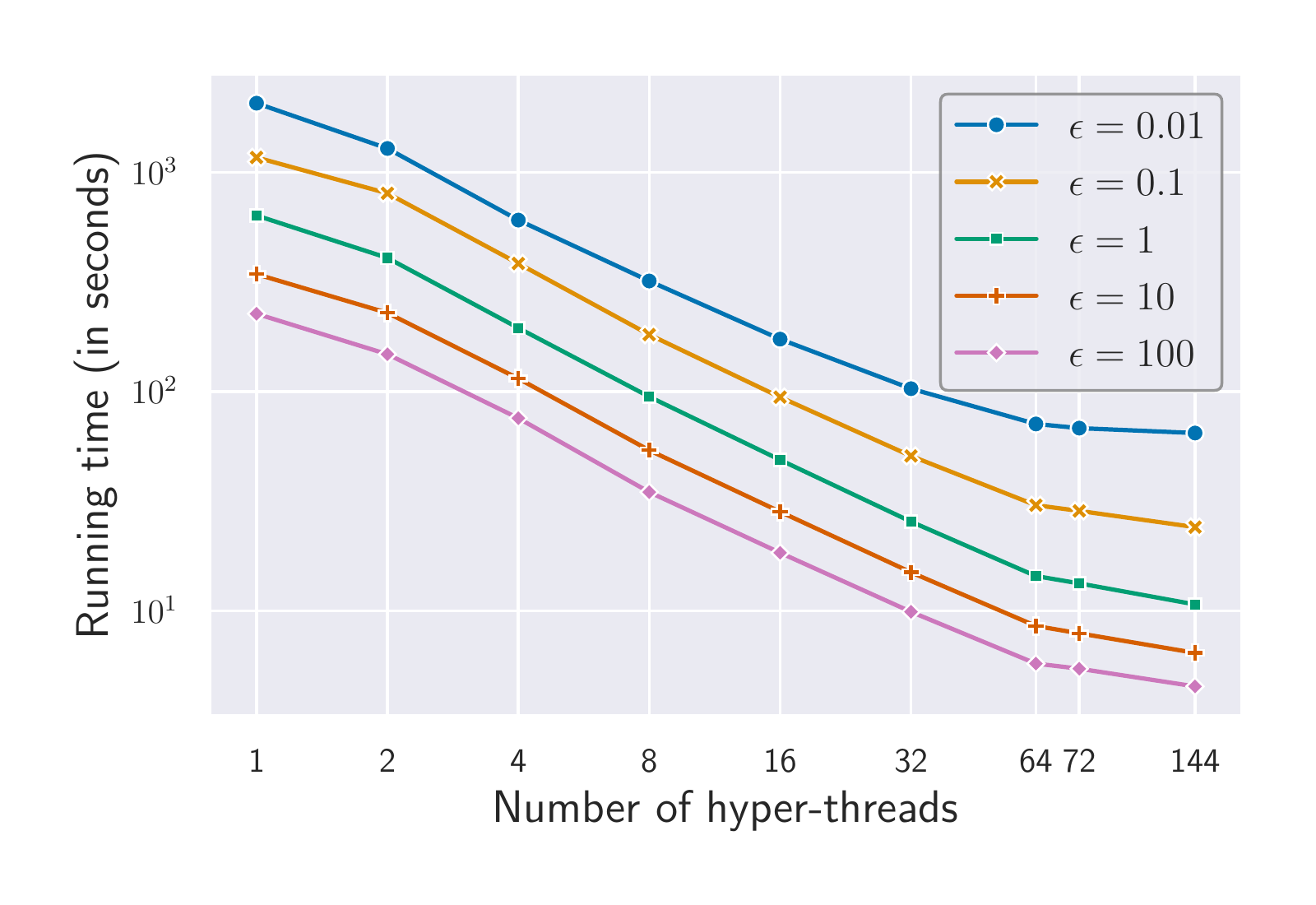}
\end{minipage}\hfill
\begin{minipage}{0.48\columnwidth}
  \includegraphics[width=0.93\columnwidth]{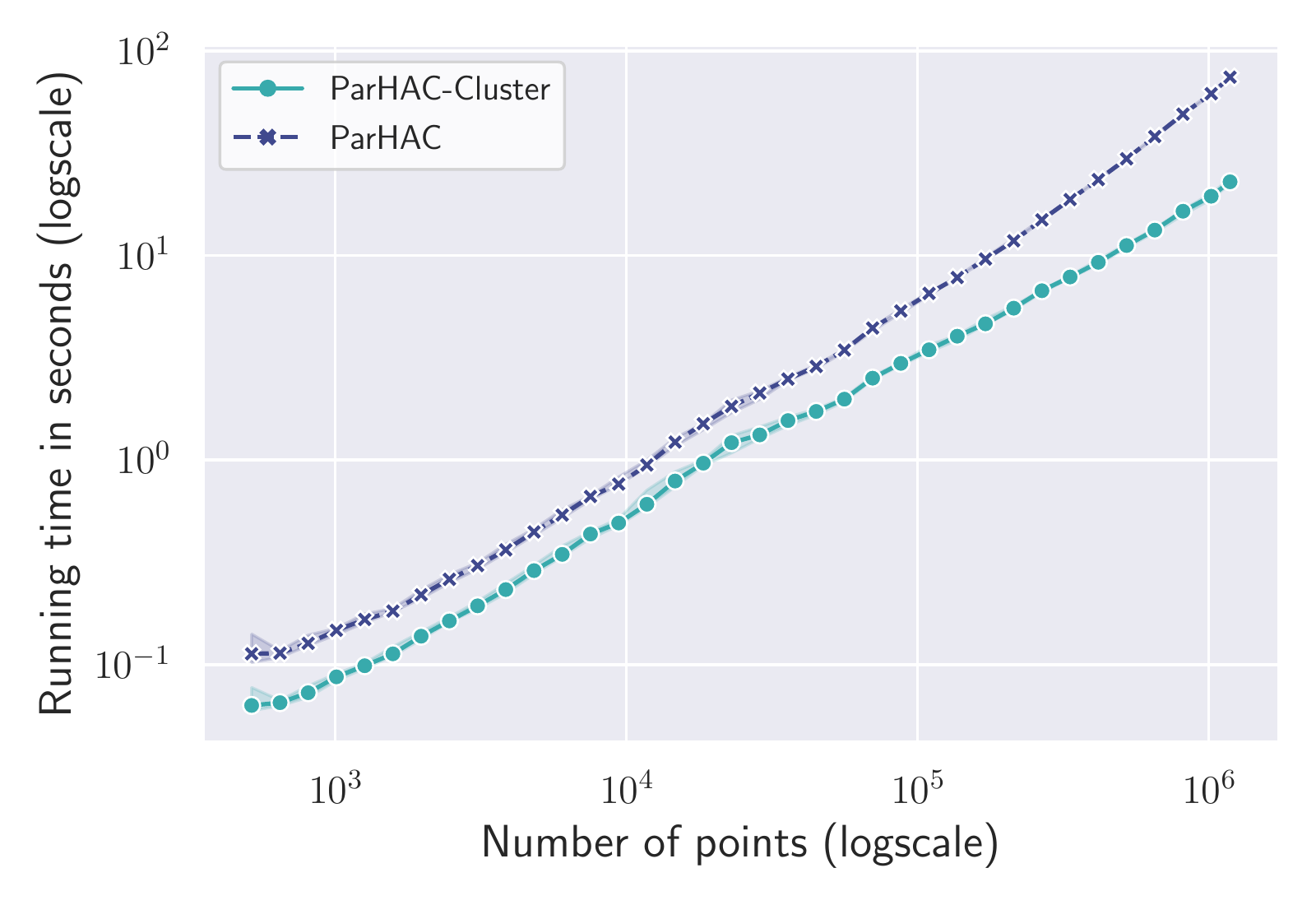}
\end{minipage}\\
\begin{minipage}[t]{.48\columnwidth}
\caption{
  Parallel running times of the \parhac{} algorithm on the LJ graph in log-scale as a function of the number of threads for varying values of the accuracy parameter, $\epsilon$. \label{fig:lj_vs_eps}}
\end{minipage}\hfill
\begin{minipage}[t]{.48\columnwidth}
\caption{Variability in the end-to-end running times
of \parhac{} using $\epsilon=0.1$ and 144 hyper-threads
on varying-size slices of the glove-100 dataset. The error lines
show the 95\% confidence interval for 10 independent runs per point.
The running times shown for \parhac{} include
the cost of computing approximate $k$-NN and generating the input
similarity graph. \parhac{}-Cluster reports just the time taken
to cluster the generated similarity graph.
\label{fig:end_to_end_error}}

\end{minipage}\hfill
\end{figure*}


\myparagraph{Scalability for Varying Epsilon}
In the next experiment, we study the absolute performance and speedup
achieved by our \parhac{} implementation on a fixed graph,
as $\epsilon$ is varied.
Figure~\ref{fig:lj_vs_eps} shows the results on
the LJ graph for $\epsilon \in \{0.01, 0.1, 1, 10, 100\}$. 
We observe that larger $\epsilon$ consistently results in lower running times,
and that the value of $\epsilon=0.1$ which we use in our quality
and other scalability experiments requires the second highest running times.
We observed similar results on our other graph inputs.
On 144 hyper-threads using a value of $\epsilon=1$ provides a 2.25x speedup, 
and $\epsilon=100$ yields a 5.32x speedup over the parallel running 
time of $\epsilon=0.1$.

\myparagraph{Performance Variability in End-to-End Experiments}
Figure~\ref{fig:end_to_end_error} shows the performance variability
of our end-to-end experiment from Section~\ref{sec:empirical} (Figure~\ref{fig:end_to_end}).
We observe that the performance variability across multiple runs
is extremely small, with the exception of the first few data points, which 
experience slightly more variability due to scheduling variability when 
running on very small datasets. Over all slices of the dataset that we evaluate, the 
largest ratios between the slowest and fastest times for a point is at most
\num{1.31}$\times$ for the time to run \parhac{}, and \num{1.17}$\times$ for the end-to-end
time, and considering only inputs with $n > 30000$ points, the largest ratios are
\num{9.34}$\times$ and \num{4.1}$\times$ respectively.

\myparagraph{Different Weight Schemes}
To understand the effect of weights on the performance of \parhac{},
we generated five different
versions of the RD, LJ, and OK graphs using different weighting
schemes where we set the weight of a $(u,v)$ edge as follows:
\begin{enumerate}[label=(\arabic*),topsep=0pt,itemsep=0pt,parsep=0pt,leftmargin=25pt]
\item \emph{DegWeight}: $w(u,v) = \emph{deg}(u) + \emph{deg}(v)$
\item \emph{TriangleWeight}: $w(u,v) = |N(u) \cup N(v)|$
\item \emph{RandWeight}: $w(u,v) = \mathsf{Uniform}[1, 2^{32}]$
\item \emph{LogDegWeight}: $w(u,v) = \log(\emph{deg}(u) + \emph{deg}(v))$
\item \emph{LogRandWeight}: $w(u,v) = \log(\mathsf{Uniform}[1, 2^{32}])$
\item \emph{UnitWeight}: $w(u,v) = 1$
\item \emph{InvLogDeg}: $w(u,v) = 1/\log(\emph{deg}(u) + \emph{deg}(v))$
\end{enumerate}
Note that LogInvWeight is the default weighting scheme used when weighting our unweighted
graph inputs.

Figure~\ref{fig:different_weight_schemes} shows the results of the experiment
We found that different weighting schemes has an impact on the running time, 
but only up to a small constant factor (this is despite the very large difference
in the aspect ratios across the different weighting schemes being in some cases). 
Different weight schemes, especially those based on Degree such as \emph{DegWeight}
and \emph{TriangleWeight} require larger overheads on power-law degree distributed
graphs like YT, LJ, and OK. In contrast, on RD, all the schemes perform essentially
the same, which is due to the fact that the maximum and average-degrees on this graph
are a small constant, and since there are very few triangles incident to each edge.
Furthermore, we observe that across all graphs, even costly schemes such as \emph{DegWeight}
which encourage high-degree vertices to merge with each other can be accurately
solved with at most a 5x overhead over schemes like \emph{Unit} or \emph{InvLogDeg}.

\begin{figure*}[t]
\begin{center}
\vspace{-3em}
\includegraphics[scale=0.8]{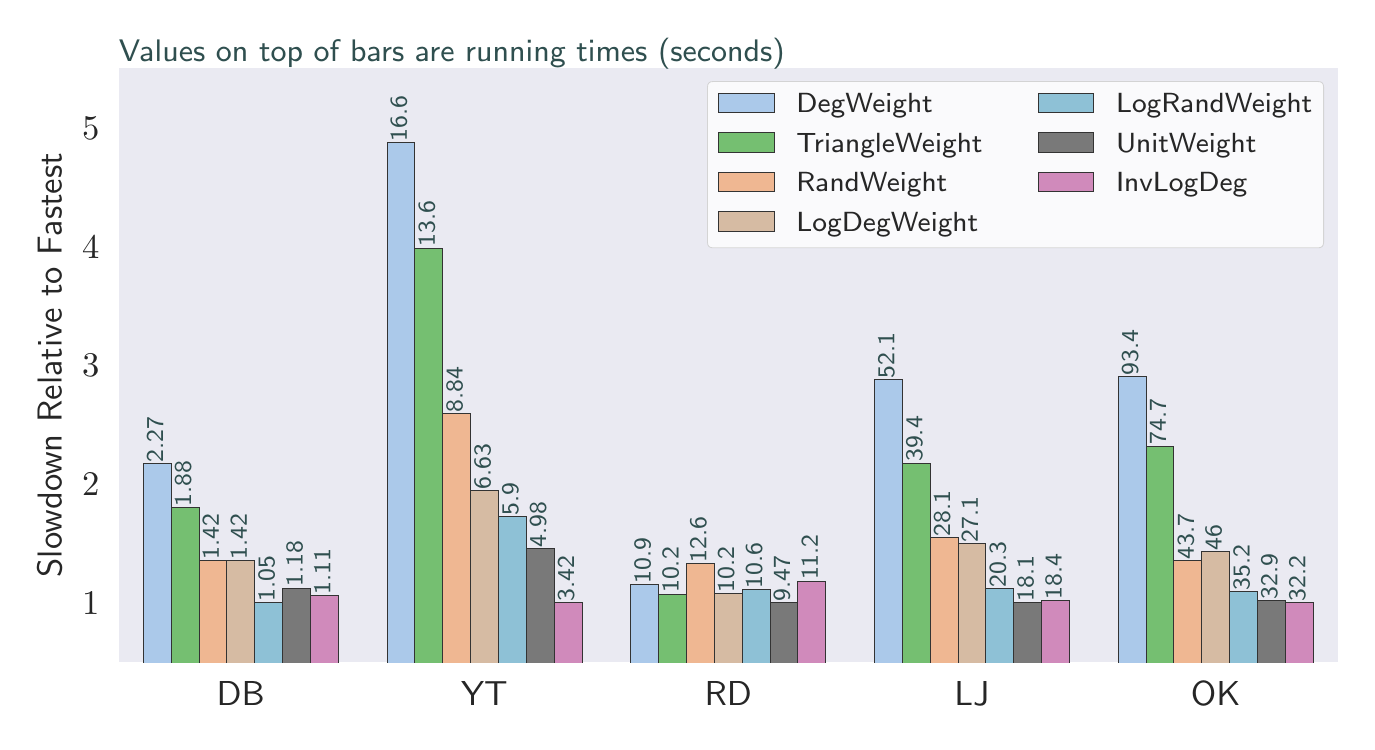}
\vspace{-1em}
\caption{\small Effect of different weight schemes on the parallel 144-thread running time of \parhac{} using $\epsilon=0.1$.
\label{fig:different_weight_schemes}}
\end{center}
\end{figure*}

\myparagraph{Comparison with Other Algorithms}
Figure~\ref{fig:parallel_times_bar} shows the relative performance
and parallel running times of \parhac{} compared with 
the SeqHAC and SeqHAC$_{\mathcal{E}}$ algorithms, and our
implementations of the \rac{}, Affinity, and \sccsim{} algorithms.
We prefix the names of new implementations (where no shared-memory
parallel implementation was previously available) with $\mathsf{Par}$.

We observe that our Affinity implementation is always the fastest on
our graph inputs. The reason is that on 
average Affinity requires just 8.7 iterations to complete on these inputs. 
On the other hand, \sccsim{}, which
runs Affinity with weight thresholding, uses all 100 iterations,
and is an average of 11.5x slower than Affinity due to the cost 
of the additional iterations. Our results show that when carefully
implemented, Affinity is a highly scalable algorithms that
can cluster graphs with billions of edges in a matter of just tens of minutes,
although unlike \parhac{} and the other approximate and exact HAC algorithms,
they do not provide strong theoretical guarantees.

Compared to Affinity and \sccsim{}, \parhac{} is an average of 14.8x slower
than our Affinity implementation and 1.24x slower than our \sccsim{} implementation.
The main reason for the relative speed of \parhac{} compared with Affinity
is the much larger number of rounds required by \parhac{} on our
graph inputs (see Figure~\ref{fig:rounds}). We note that running \sccsim{} with
fewer rounds yields faster results, but used 100 iterations since this setting
yielded the highest quality results in our evaluation in Section~\ref{subsec:quality}.
Furthermore, for all of our graph inputs, Affinity has one
(or a few) rounds where the graph shrinks by a massive amount, suggesting the formation
of giant cluster(s) through chaining, a known issue with clustering methods
based on Boruvka's algorithm~\cite{irbook}. 
For example, the first round of Affinity on the CW graph drops the number of edges
from 74.7B to 5.31B (14x lower) and the number of vertices from 955M to 118M.
Importantly, the very first round forms a cluster containing 261M vertices.
To conclude, our results show that \parhac{} achieves a good compromise
between running time and quality, as our study in 
Section~\ref{subsec:quality} shows that both Affinity 
and \sccsim{} can produce sub-optimal dendrograms compared to the 
greedy exact HAC baseline.

\begin{figure}[t]
\begin{center}
\vspace{-0.1in}
\includegraphics[scale=0.4]{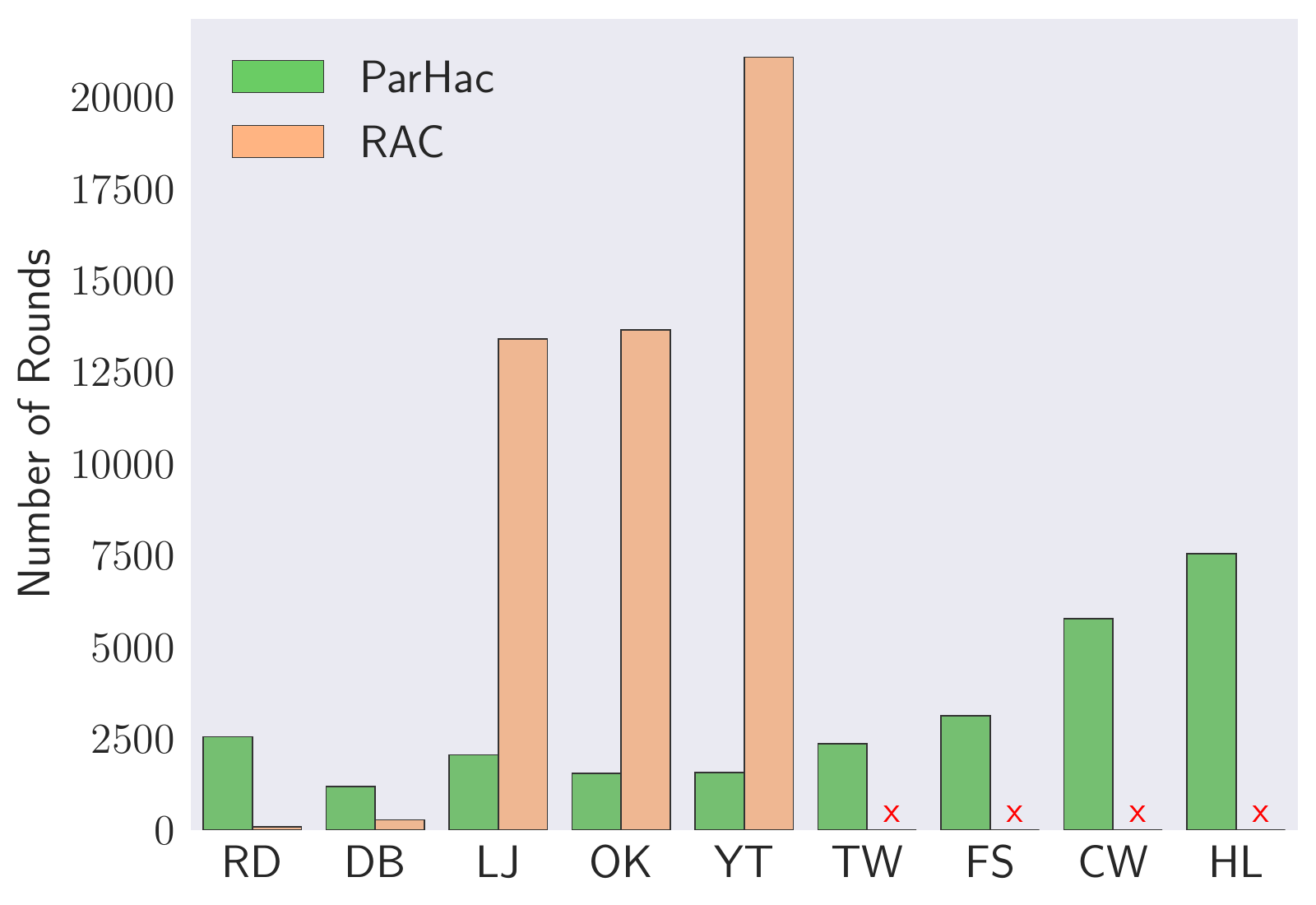}
\caption{\small
Rounds required by the \parhac{} (using $\epsilon=0.1$) and 
reciprocal agglomerative 
clustering (\rac{}) algorithms
for six large real-world graphs. The ClueWeb (CW) and Hyperlink (HL)
graphs are two of the largest publicly available graphs, with 978M and
1.72B vertices, and 74B and 124B edges, respectively. We mark experiments where \rac{} 
does not finish after 6 hours with a red \texttt{x}.
\label{fig:rounds}}
\vspace{-0.1in}
\end{center}
\end{figure}

Compared with other methods that provide provable guarantees compared with
HAC, \parhac{} is significantly faster. Compared with \seqhac{} and
\seqhac{}$_{\mathcal{E}}$, \parhac{} obtains 45.2x and 72.7x speedup 
on average respectively. Compared with \rac{}, \parhac{} achieves an average
speedup of 7.1x. Although \rac{} is faster than \parhac{} on DB and RD, 
two of our small graph inputs, it seems to require a very large number of rounds
on the remaining graphs, as we show in Figure~\ref{fig:rounds}.
In particular, it can take a linear number of steps in the worst case~\cite{sumengen2021scaling}, and, as we show in Figure~\ref{fig:rounds}, 
up to 21,081 steps on the YouTube (YT) real-world graph with
just 1.1M vertices and 5.9M edges, and an even larger number of rounds on our
larger datasets, where \rac{} does not terminate within 6 hours.
Since each round computes the best edge for each
vertex for a total of approximately $O(m)$ work, and the number of rounds 
for our larger graphs is close to $O(\sqrt{m})$ based on 
Figure~\ref{fig:rounds}, the super-linear total work of this algorithm prevents
it from achieving good scalability.
\revised{
Similarly, we note that \parhac{}$_{\mathcal{E}}$ (i.e. using $\epsilon=0$) only runs within the time limit for DB, YT, and RD, and therefore it is not shown in Figure~\ref{fig:parallel_times_bar}.
The reason is that it performs $O(mn)$ work when $\epsilon=0$, since each iteration
costs $O(m)$ work and the number of outer-rounds is $O(n)$.}

\ifx\confversion\undefined
\myparagraph{Discussion}
To the best of our knowledge, our results are the first to show that
graphs with tens to hundreds of billions of edges can be clustered in a
matter of tens of minutes (using heuristic methods like Affinity and \sccsim{} with fewer iterations)
to hours (using \sccsim{} with many iterations or methods with approximation guarantees such as \parhac{}).
We are not aware of other shared-memory clustering results 
that work at this large scale. 
Our theoretically-efficient implementations 
can be viewed as part of a line of work showing
that theoretically-efficient shared-memory parallel graph algorithms can scale
to the largest publicly available graphs using a modest 
amount of resources~\cite{dhulipala2017julienne, dhulipala18scalable, dhulipala19aspen}.
\fi

\subsection{Ablation: Comparing CPAM with Hash-based Representations}

\begin{figure*}
\centering
\begin{minipage}{.42\textwidth}
\hspace{-1em}
    \includegraphics[width=\textwidth]{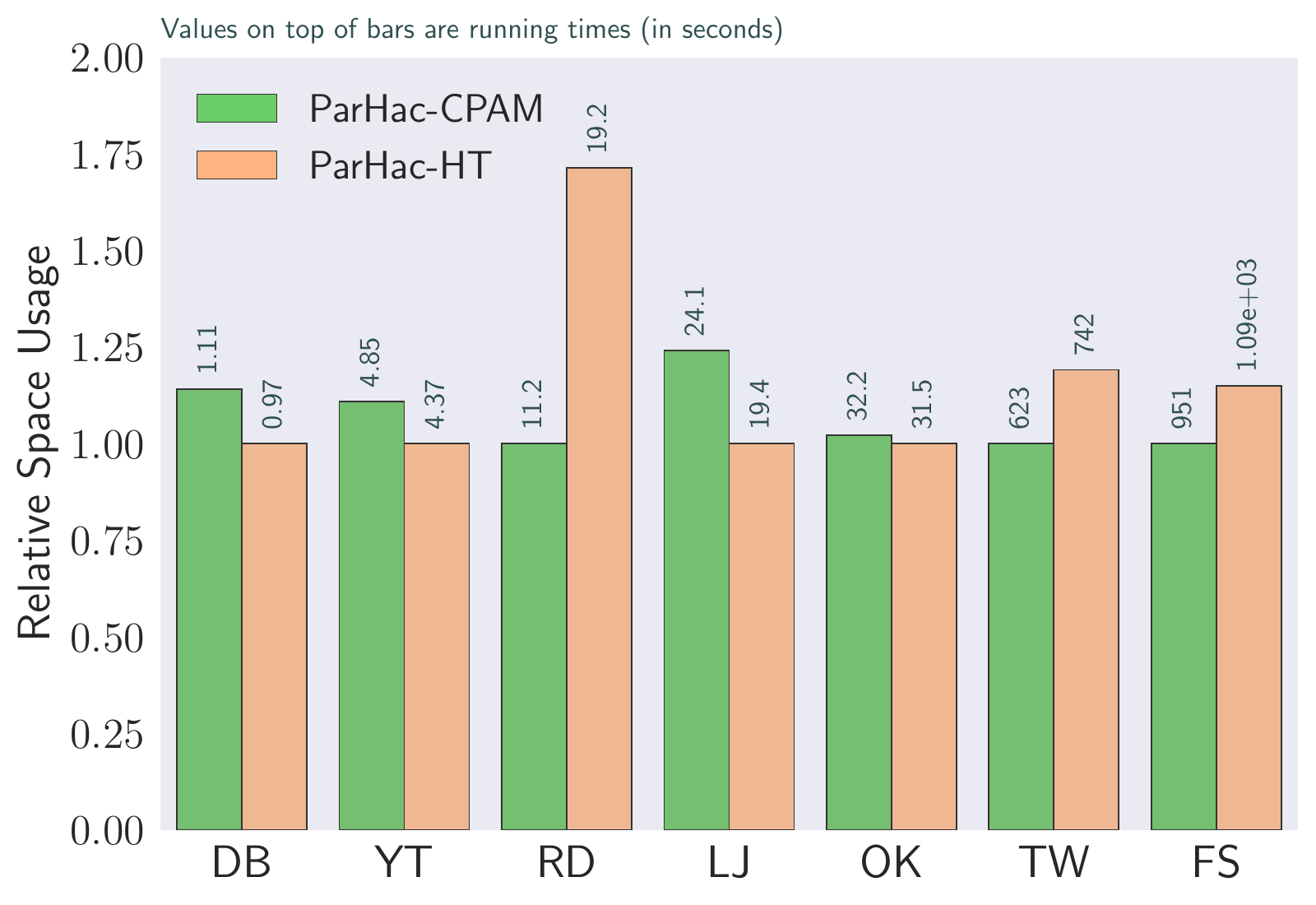}
\end{minipage}
\hspace{3.5em}
\begin{minipage}{0.42\textwidth}
  \includegraphics[width=\textwidth]{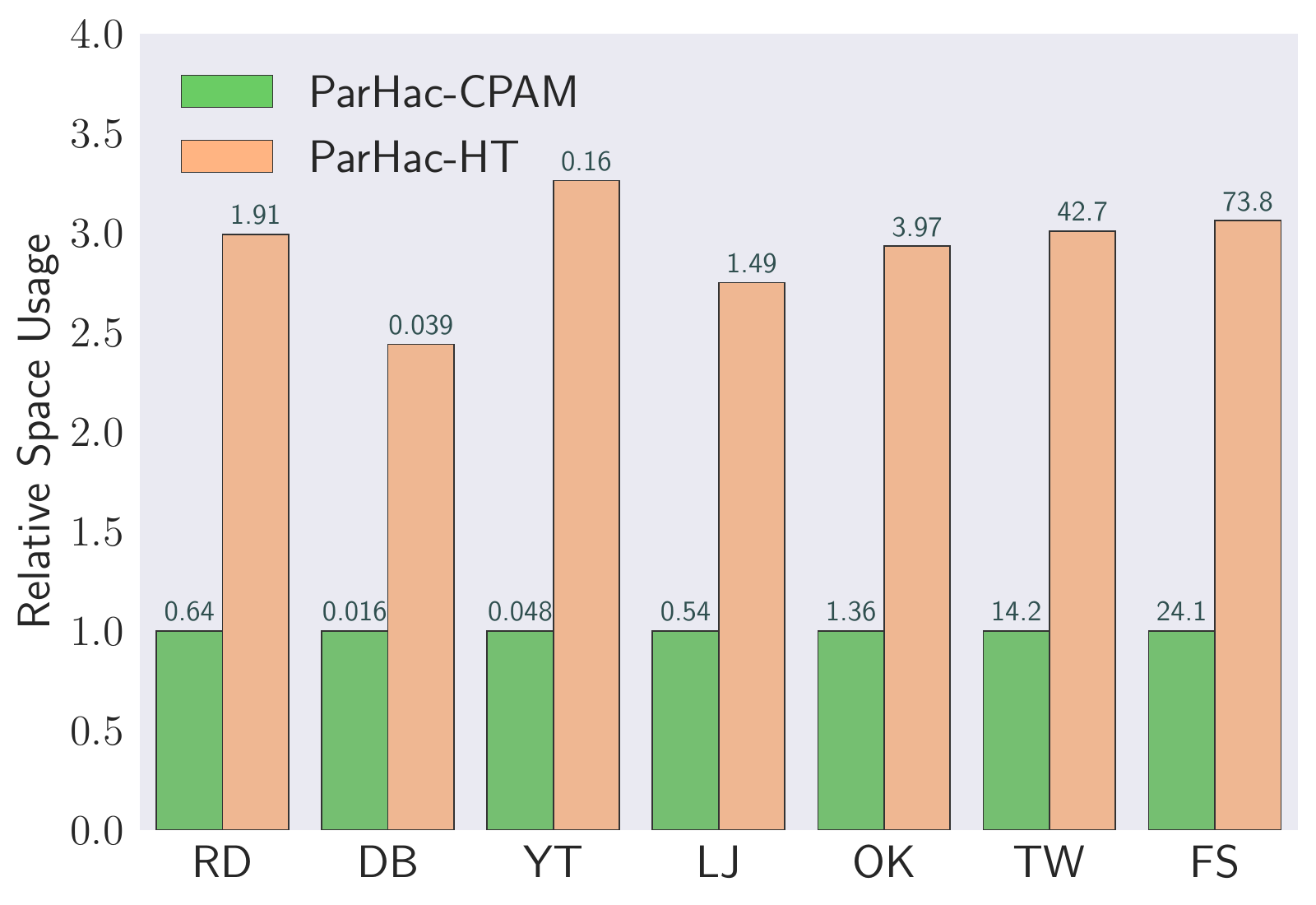}
\end{minipage}\\
\begin{minipage}[t]{.45\textwidth}
  \caption{\small Relative running times of \parhac{}-CPAM (cluster
  neighborhoods represented using compressed purely-functional trees
  from the CPAM framework) and
  \parhac{}-HT (cluster neighborhoods represented using parallel
  hashtable). The values shown on top of each bar are the running time
  in seconds using 144 hyper-threads.
  \label{fig:representation_times}}
\end{minipage}
\hspace{2em}
\begin{minipage}[t]{.45\textwidth}
  \caption{\small Relative sizes of the memory representation in bytes
  of \parhac{}-CPAM (cluster neighborhoods represented using
  compressed purely-functional trees from the CPAM framework) and
  \parhac{}-HT (cluster neighborhoods represented using parallel
  hashtables). Weights are stored as uncompressed floats. The values
  shown on top of each bar are the space usage of each representation
  in gigabytes (GiB).
\label{fig:representation_sizes}}
\end{minipage}
\end{figure*}

When designing our implementation of \parhac{}, and specifically the clustered
graph representation, we also considered other implementations of a clustered graph.
Specifically, we considered other ways of representing the out-neighborhood of a vertex (cluster).
To better understand the impact of using CPAM to represent this weighted adjacency information, we
implemented a new \emph{hashtable-based} implementation of the clustered
graph object where each cluster stores its neighbors in a hashtable
(table). Our hashtable representation uses the linear-probing concurrent 
hashtable described by Shun et al.~\cite{shun2014phase} to represent the weights.
In our implementation of a hash-based clustered graph, each hashtable 
is keyed by the id of the neighbor and the value
stored is the weight of the edge. 
We call the CPAM-based implementation \parhac{}-CPAM
and the hashtable-based implementation \parhac{}-HT.

To understand the impact of using CPAM, we studied the parallel running time of our
algorithms using \parhac{}-CPAM and \parhac{}-HT, as well as the space usage of both
types of representations. Figure~\ref{fig:representation_times} and 
Figure~\ref{fig:representation_sizes} show the relative running times and relative sizes,
respectively. We find that the two implementations achieve very similar running times
across all graphs (the one notable exception is the RD graph, which has very low average
degree, and therefore results suffers some overhead due to hashing in the hashtable-based
implementation). On the other hand, the space usage of \parhac{}-CPAM is consistently much
better than that of \parhac{}-HT, with an average space improvement of 2.9x.

\subsection{Comparing \parhac{} with ParChain}

\begin{figure}[t]
\begin{center}
\includegraphics[scale=0.5]{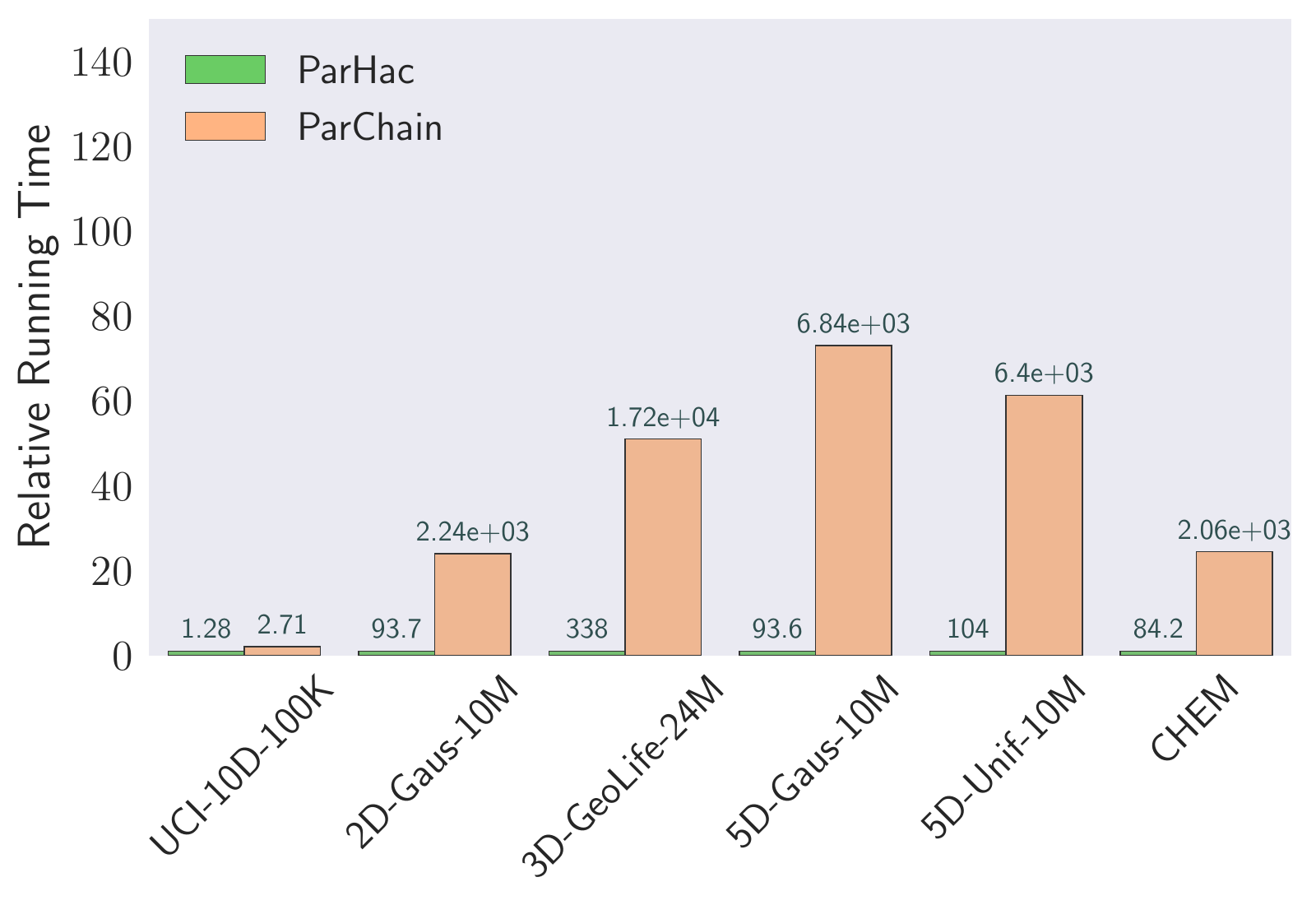}
\caption{\small Performance of \parhac{} compared with ParChain on the large low-dimensional pointset inputs used in the ParChain paper. The running times for \parhac{} include the similarity graph construction time.
\label{fig:parchain}}
\end{center}
\end{figure}

In addition to our comparison with Fastcluster in the main text of the paper, 
we wanted to understand how \parhac{} compares with state-of-the-art multicore
parallel metric HAC implementations. We selected ParChain~\cite{parchain}, since this recent algorithm
demonstrates performance improvements over other existing parallel implementations of metric HAC.

To understand how \parhac{} and ParChain compare, we ran \parhac{} on a subset of the largest pointset datasets used in the ParChain paper~\cite{parchain}, using the same end-to-end approach described earlier (graph building using an implementation of Vamana, followed by running \parhac{}).
Figure~\ref{fig:parchain} shows the results of the experiment. We find that on average, \parhac{} is 39.3x faster, and up to 73x faster than ParChain for the average-linkage measure.
We note that when we tried to run ParChain on the same high-dimensional inputs used in this paper, e.g., Glove-100, the algorithm crashed, which is likely due to the fact that ParChain and its optimizations were designed for low-dimensional (approximately $d \leq 16$) HAC.
We also ran ParChain on the UCI datasets, and found that it produced identical dendrograms to those computed by Sklearn, FastCluster, and SciPy, confirming that ParChain is an exact algorithm (and is implemented correctly).

The datasets used in our experiments can be found in the ParChain paper, and are also described below for completeness:

\myparagraph{ParChain Datasets} The \defn{GaussianDisc} data set
contains points inside a bounding hypergrid with side length
$5\sqrt{n}$, where $n$ is the total number of points.  $90\%$ of the
points are equally divided among five clusters, each with a Gaussian
distribution.  Each cluster has its mean randomly sampled from the hypergrid, 
a standard deviation of $1/6$, and a diameter of
$\sqrt{n}$. The remaining points are randomly distributed.
The \defn{UniformFill} data set contains points distributed uniformly
at random inside a bounding hypergrid with side length $\sqrt{n}$,
where $n$ is the total number of points. 
These datasets are synthetically generated,
and thus no license is applicable to the best of our knowledge.
The synthetic
data sets contain 10 million points for dimensions $d=2$ and $d=5$. 
Lastly, \defn{UCI1}~\cite{UCI1URL, bock2004methods} is a 10-dimensional data
set with $19020$ data points. 

\defn{GeoLife}~\cite{Zheng2008, GeoLifeURL} is a 3-dimensional real-world data
set with $24876978$ points. This dataset contains user
location data, and is extremely skewed. The data is provided as part of a Microsoft
Research project, and no license information is available online to the best of our
knowledge.
\defn{CHEM}~\cite{fonollosa2015reservoir, CHEMURL} is a 16-dimensional
dataset with $4208261$ points containing chemical sensor data.
Similar to the other datasets from UCI, no licensing information is available,
although the project states that the dataset has been donated to UCI by the
creators of the dataset (from UCSD).

\newpage
\begin{figure*}
\begin{minipage}{.49\textwidth}
\hspace{-1em}
    \includegraphics[width=\textwidth]{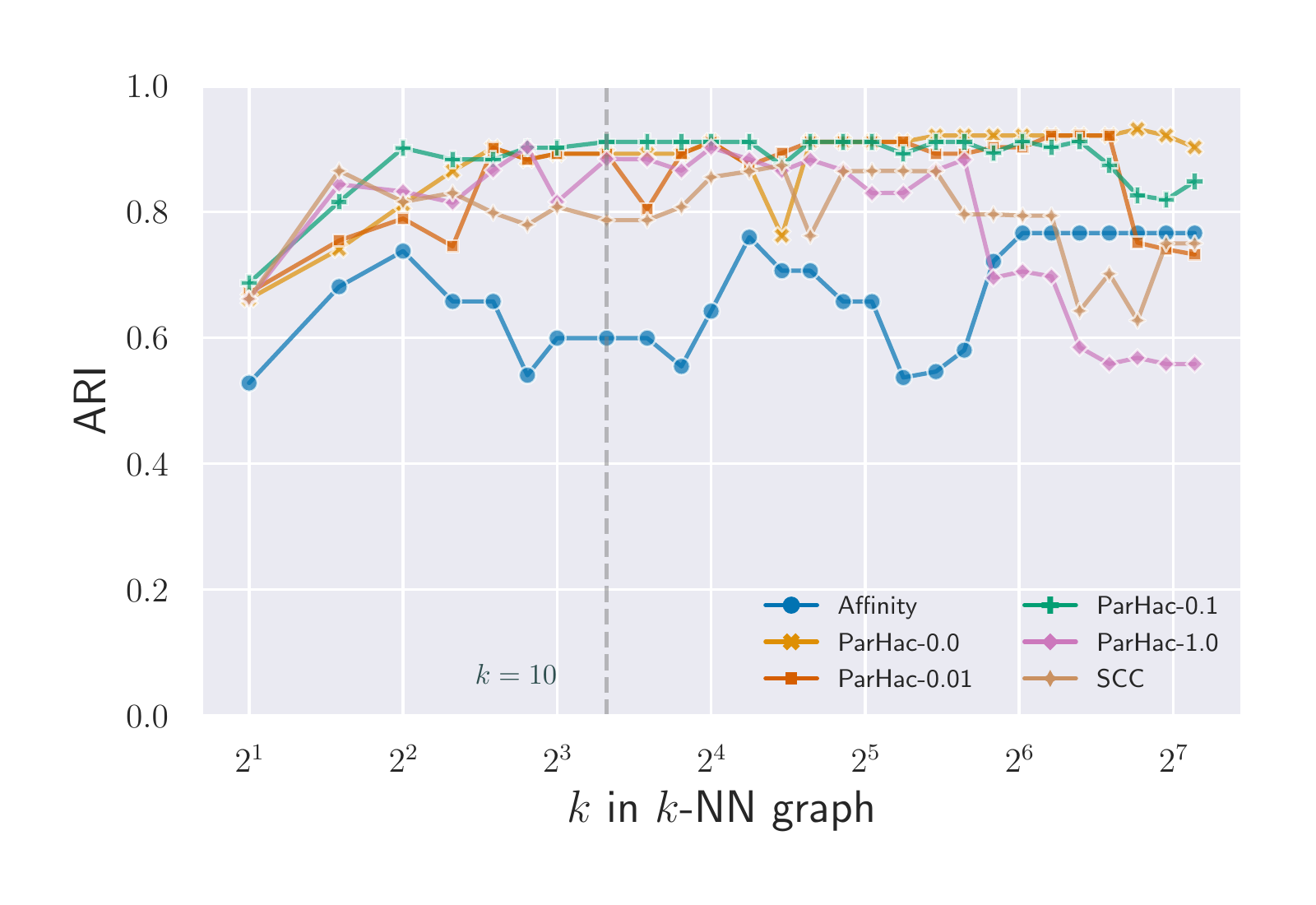}
\end{minipage}
\begin{minipage}{0.49\textwidth}
  \includegraphics[width=\textwidth]{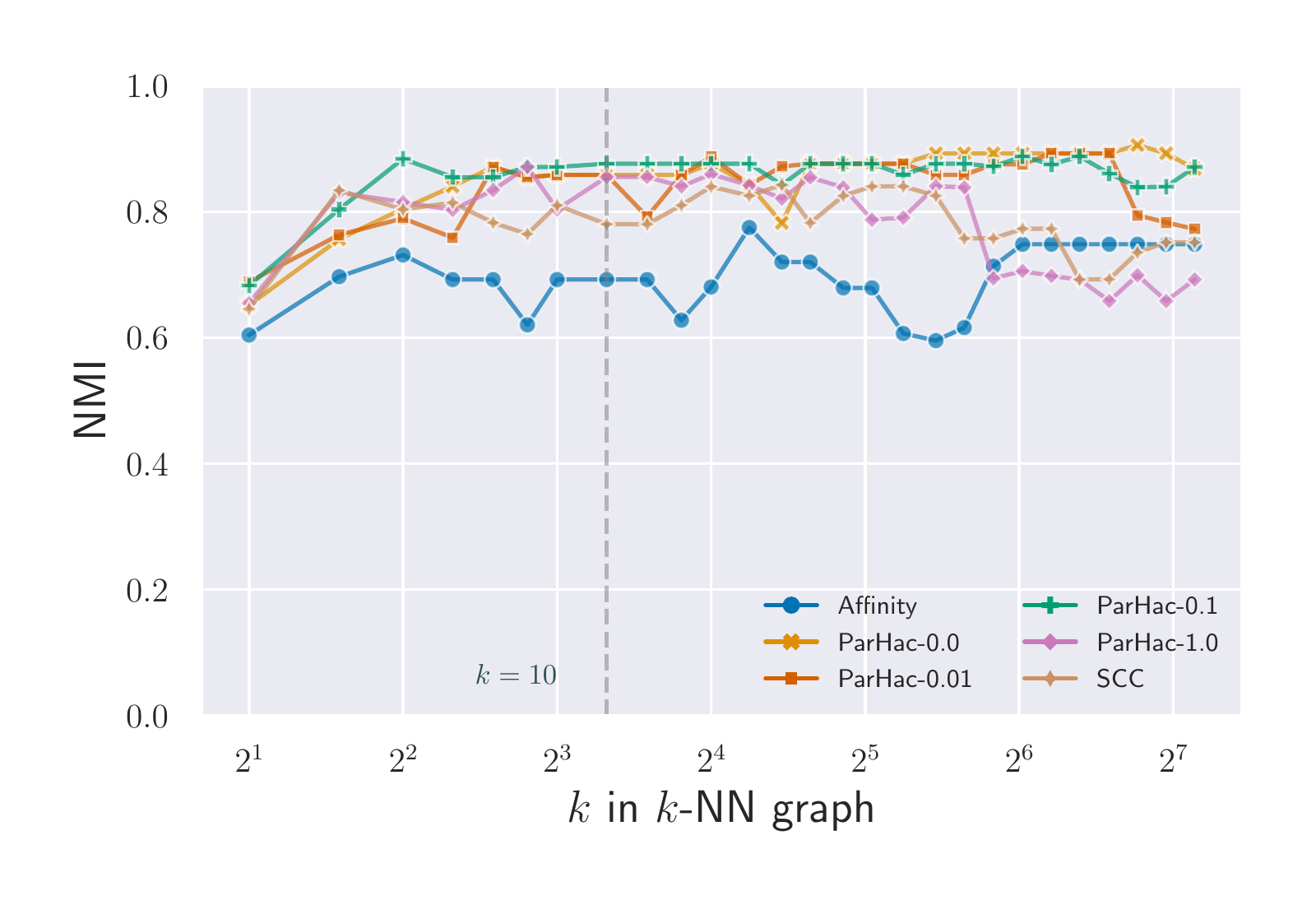}
\end{minipage}\\
\begin{minipage}[t]{.49\textwidth}
  \caption{\small Adjusted Rand-Index (ARI) of clusterings computed by
  \parhac{} for varying $\epsilon$ on Iris  versus the $k$
  used in similarity graph construction.
\label{fig:iris-ARI}}
\end{minipage}\hfill
\begin{minipage}[t]{.49\textwidth}
  \caption{\small Normalized Mutual Information (NMI) of clusterings
  computed by \parhac{} for varying $\epsilon$ on Iris 
  versus the $k$ used in similarity graph construction.
\label{fig:iris-NMI}}
\end{minipage}
\end{figure*}
\begin{figure*}
\begin{minipage}{.49\textwidth}
\hspace{-1em}
    \includegraphics[width=\textwidth]{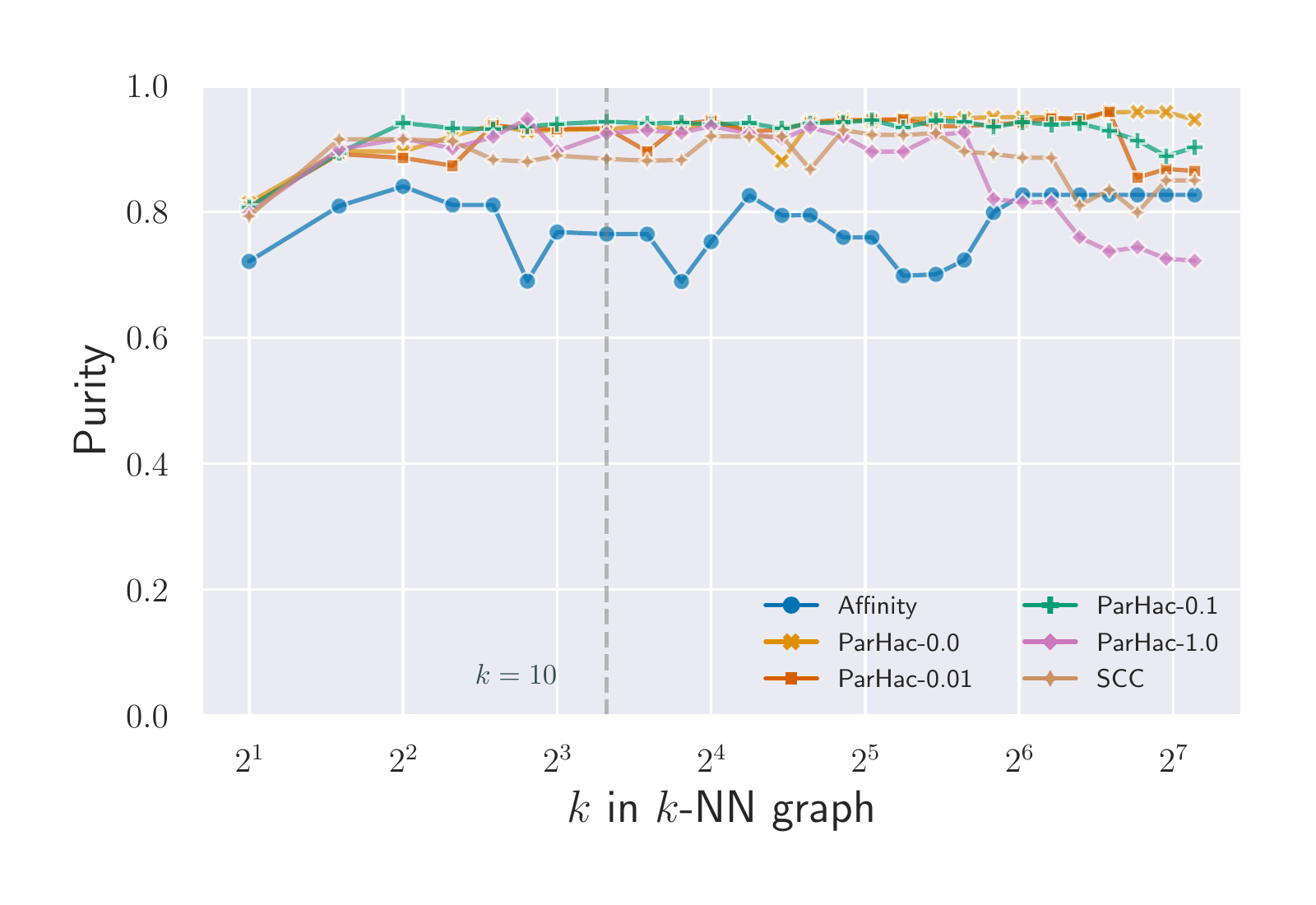}
\end{minipage}
\begin{minipage}{0.49\textwidth}
  \includegraphics[width=\textwidth]{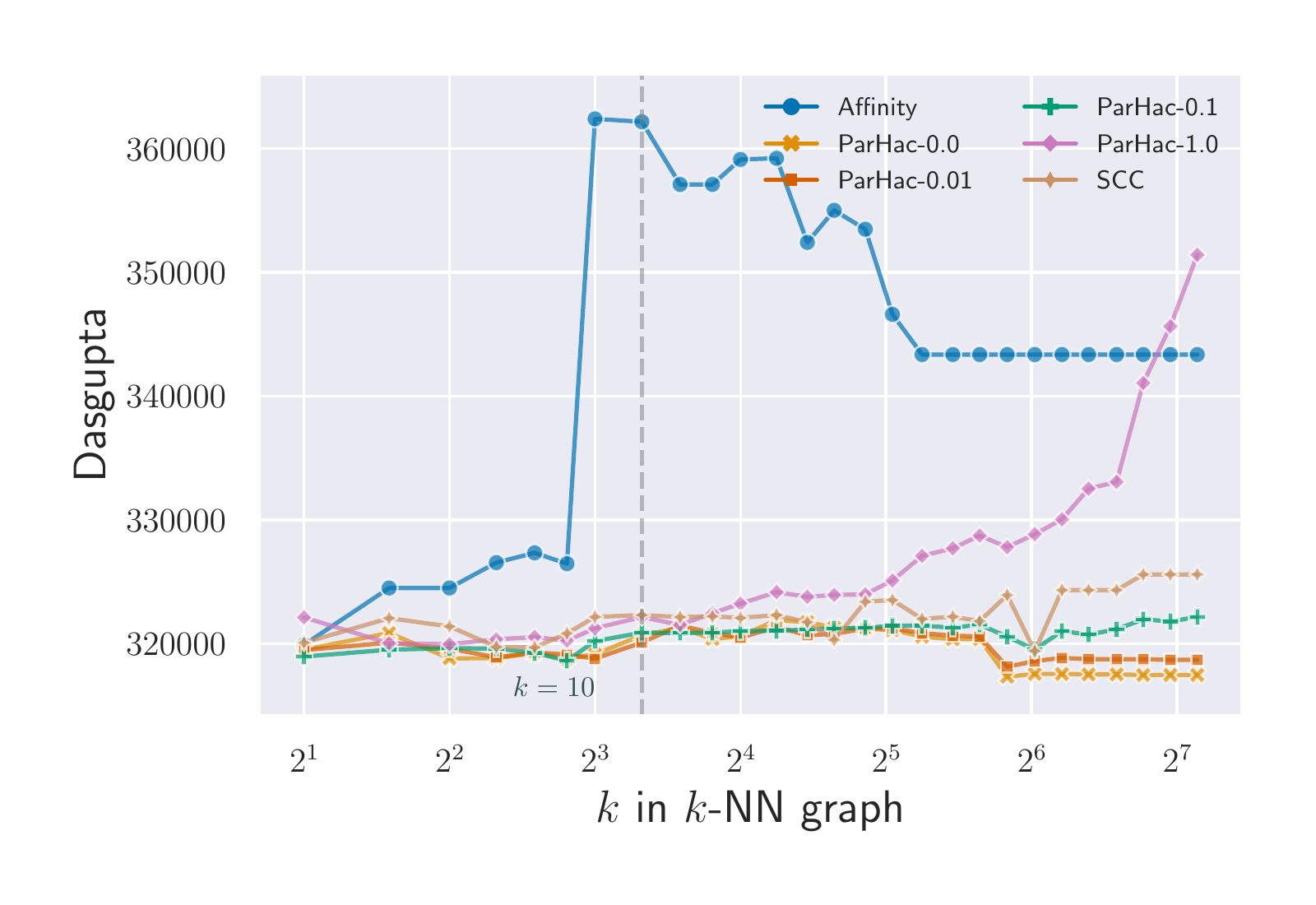}
\end{minipage}\\
\begin{minipage}[t]{.49\textwidth}
  \caption{\small Dendrogram Purity (Purity) of clusterings computed by
  \parhac{} for varying $\epsilon$ on Iris  versus the $k$
  used in similarity graph construction.
\label{fig:iris-Purity}}
\end{minipage}\hfill
\begin{minipage}[t]{.49\textwidth}
  \caption{\small Dasgupta Cost (Dasgupta) of clusterings
  computed by \parhac{} for varying $\epsilon$ on Iris 
  versus the $k$ used in similarity graph construction.
\label{fig:iris-Dasgupta}}
\end{minipage}
\end{figure*}

\begin{figure*}
\begin{minipage}{.49\textwidth}
\hspace{-1em}
    \includegraphics[width=\textwidth]{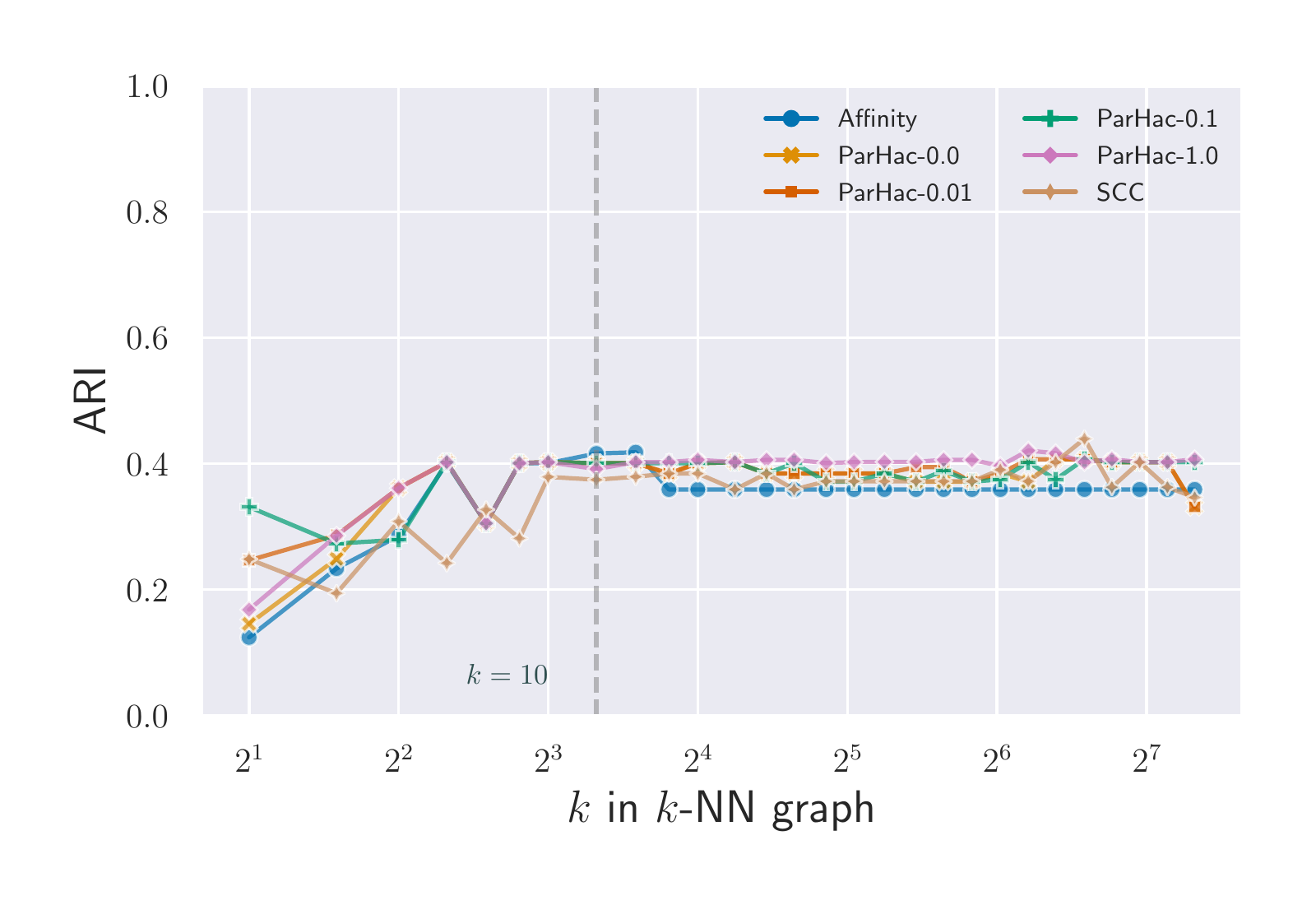}
\end{minipage}
\begin{minipage}{0.49\textwidth}
  \includegraphics[width=\textwidth]{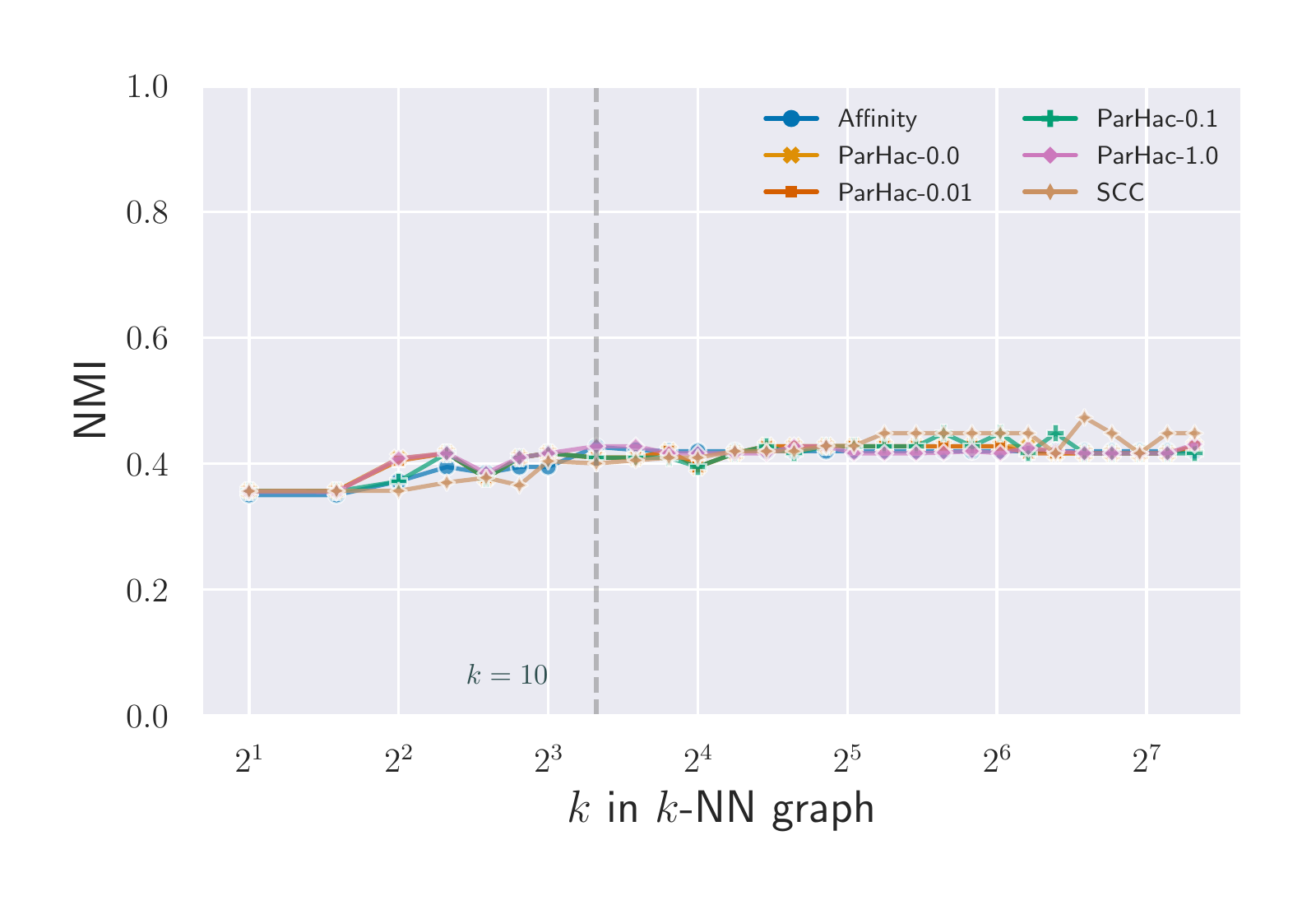}
\end{minipage}\\
\begin{minipage}[t]{.49\textwidth}
  \caption{\small Adjusted Rand-Index (ARI) of clusterings computed by
  \parhac{} for varying $\epsilon$ on Wine  versus the $k$
  used in similarity graph construction.
\label{fig:wine-ARI}}
\end{minipage}\hfill
\begin{minipage}[t]{.49\textwidth}
  \caption{\small Normalized Mutual Information (NMI) of clusterings
  computed by \parhac{} for varying $\epsilon$ on Wine 
  versus the $k$ used in similarity graph construction.
\label{fig:wine-NMI}}
\end{minipage}
\end{figure*}
\begin{figure*}
\begin{minipage}{.49\textwidth}
\hspace{-1em}
    \includegraphics[width=\textwidth]{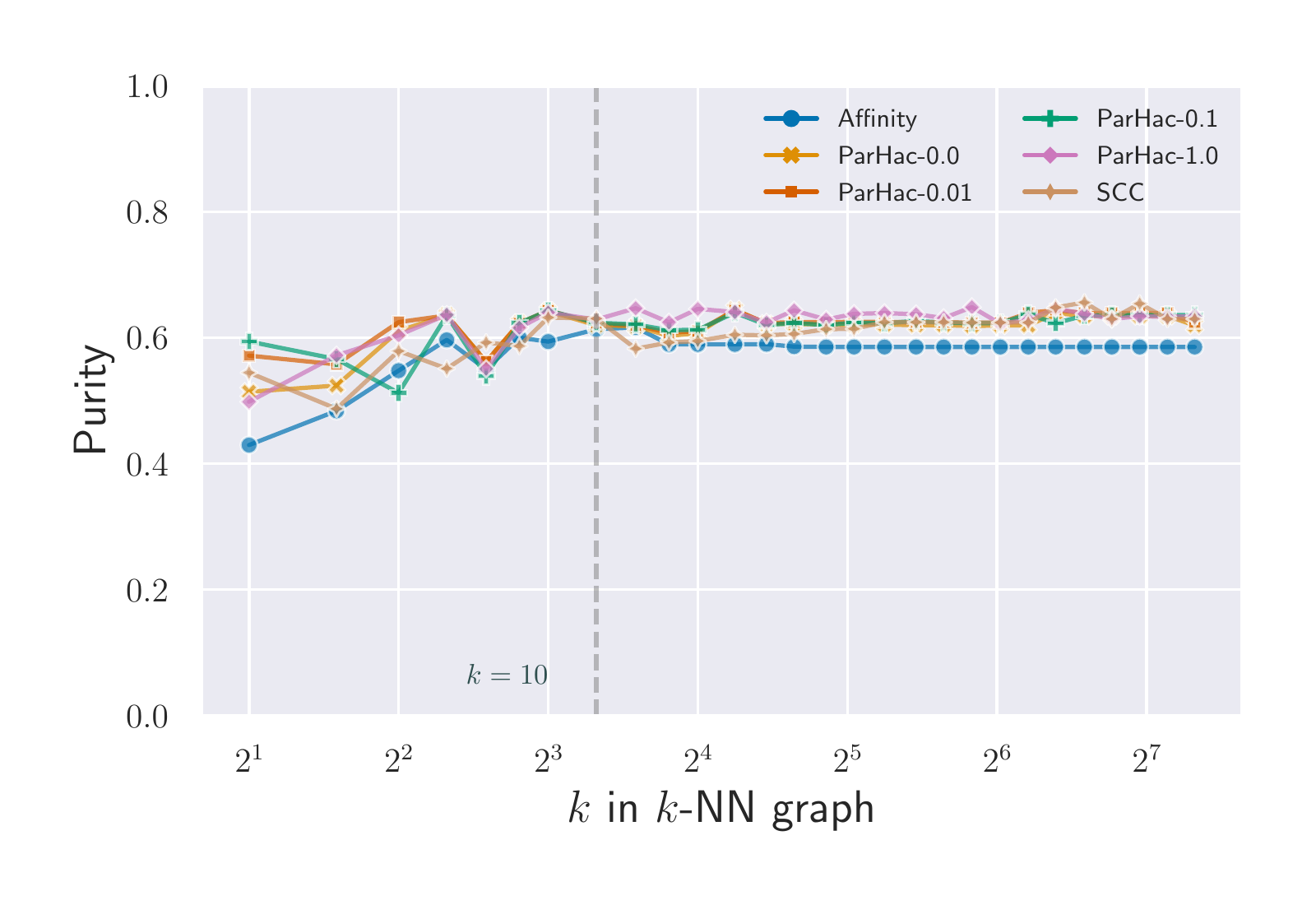}
\end{minipage}
\begin{minipage}{0.49\textwidth}
  \includegraphics[width=\textwidth]{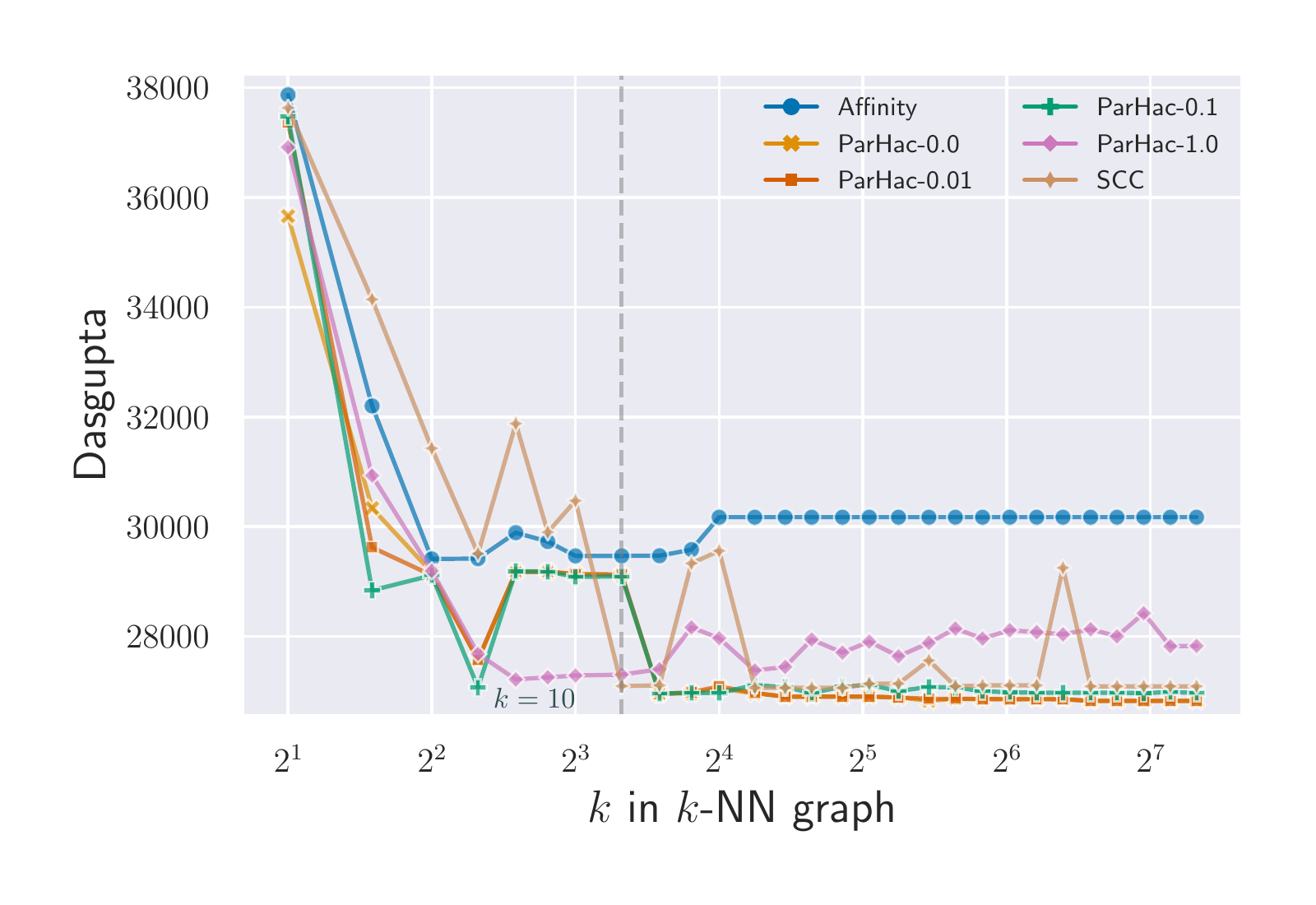}
\end{minipage}\\
\begin{minipage}[t]{.49\textwidth}
  \caption{\small Dendrogram Purity (Purity) of clusterings computed by
  \parhac{} for varying $\epsilon$ on Wine  versus the $k$
  used in similarity graph construction.
\label{fig:wine-Purity}}
\end{minipage}\hfill
\begin{minipage}[t]{.49\textwidth}
  \caption{\small Dasgupta Cost (Dasgupta) of clusterings
  computed by \parhac{} for varying $\epsilon$ on Wine  versus the $k$ used in
  similarity graph construction.
\label{fig:wine-Dasgupta}}
\end{minipage}
\end{figure*}

\begin{figure*}
\begin{minipage}{.49\textwidth}
\hspace{-1em}
    \includegraphics[width=\textwidth]{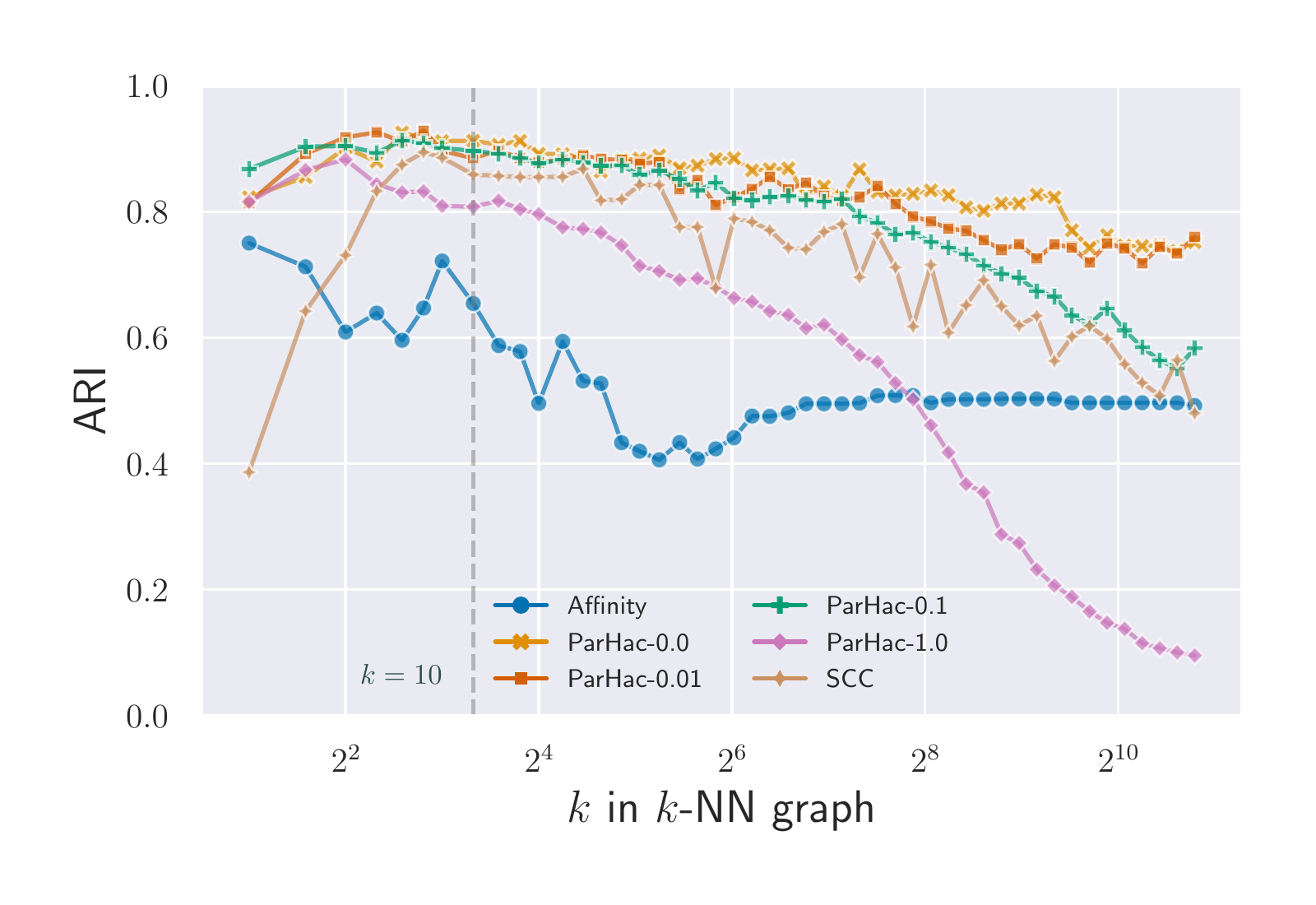}
\end{minipage}
\begin{minipage}{0.49\textwidth}
  \includegraphics[width=\textwidth]{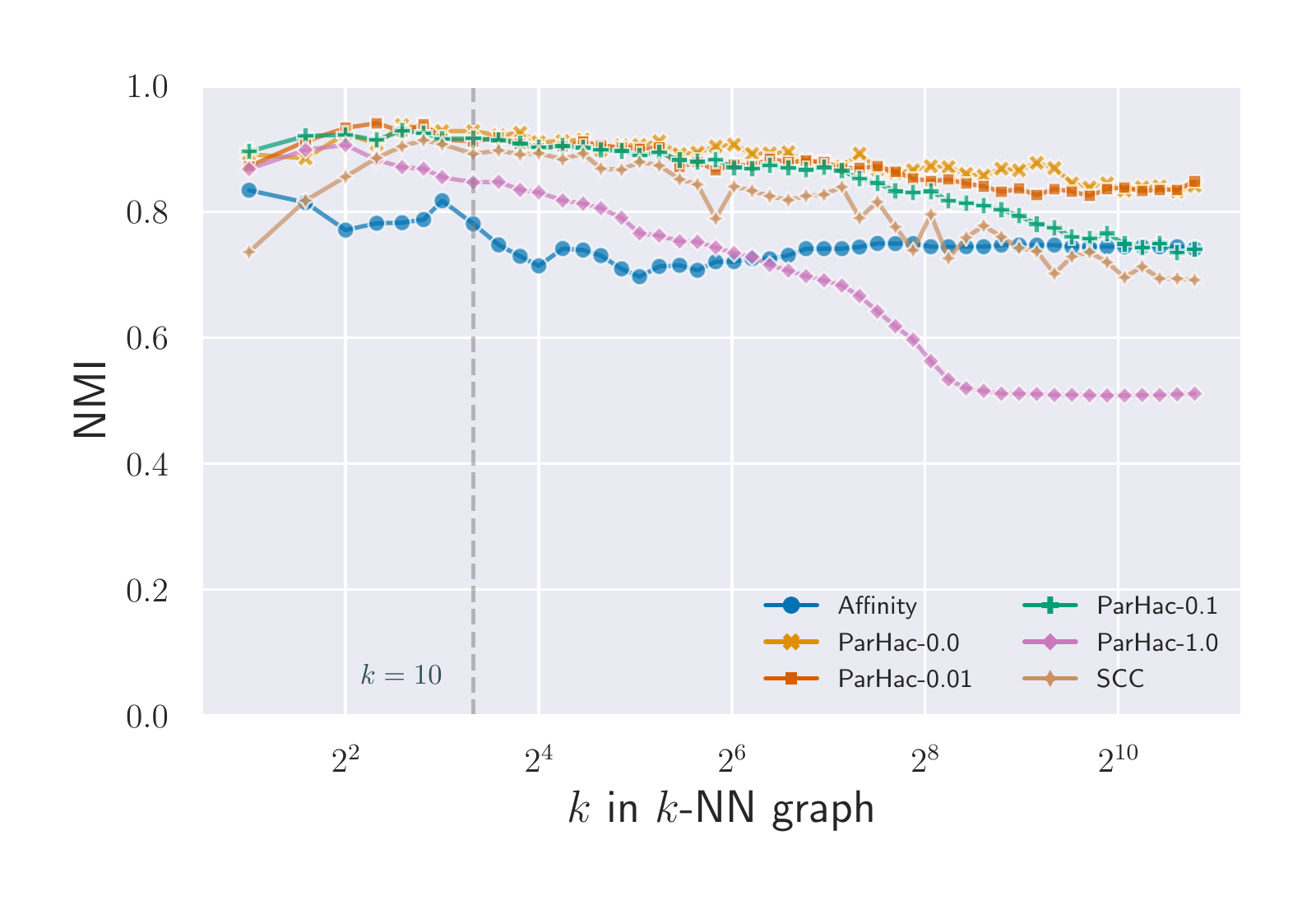}
\end{minipage}\\
\begin{minipage}[t]{.49\textwidth}
  \caption{\small Adjusted Rand-Index (ARI) of clusterings computed by
  \parhac{} for varying $\epsilon$ on Digits  versus the $k$
  used in similarity graph construction.
\label{fig:digits-ARI}}
\end{minipage}\hfill
\begin{minipage}[t]{.49\textwidth}
  \caption{\small Normalized Mutual Information (NMI) of clusterings
  computed by \parhac{} for varying $\epsilon$ on Digits 
  versus the $k$ used in similarity graph construction.
\label{fig:digits-NMI}}
\end{minipage}
\end{figure*}
\begin{figure*}
\begin{minipage}{.49\textwidth}
\hspace{-1em}
    \includegraphics[width=\textwidth]{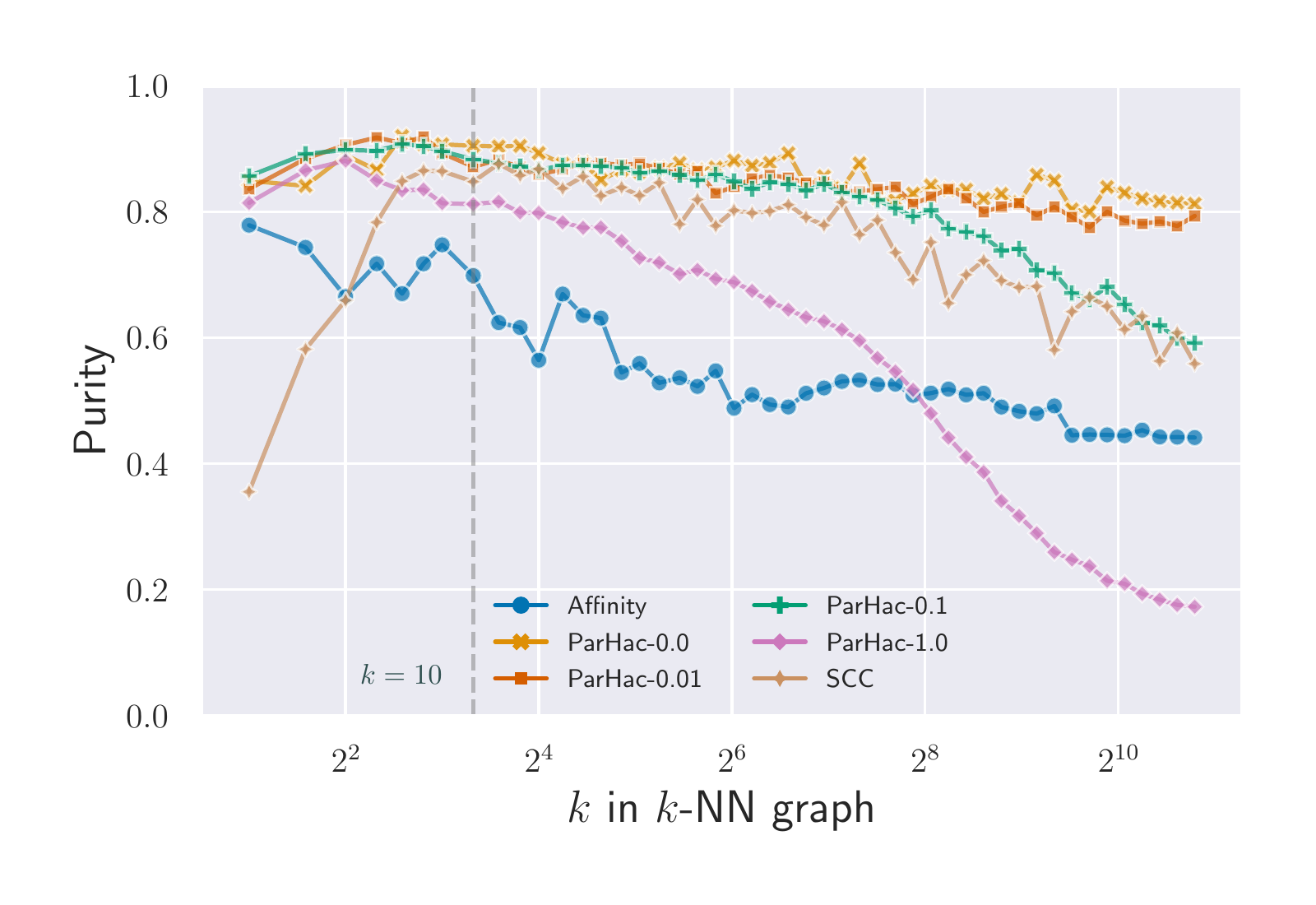}
\end{minipage}
\begin{minipage}{0.49\textwidth}
  \includegraphics[width=\textwidth]{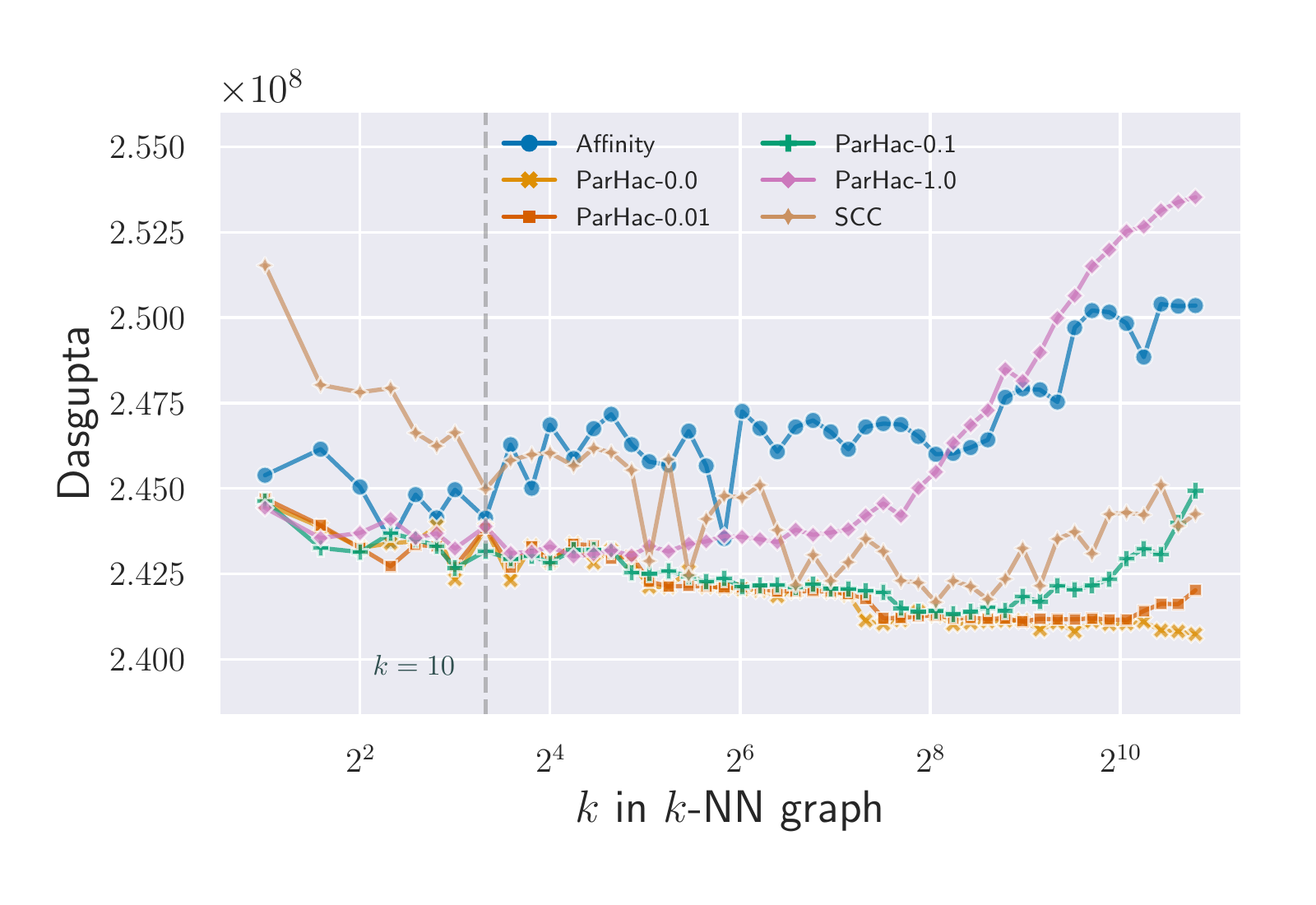}
\end{minipage}\\
\begin{minipage}[t]{.49\textwidth}
  \caption{\small Dendrogram Purity (Purity) of clusterings computed by
  \parhac{} for varying $\epsilon$ on Digits  versus the $k$
  used in similarity graph construction.
\label{fig:digits-Purity}}
\end{minipage}\hfill
\begin{minipage}[t]{.49\textwidth}
  \caption{\small Dasgupta Cost (Dasgupta) of clusterings
  computed by \parhac{} for varying $\epsilon$ on Digits  versus the $k$ used in
  similarity graph construction.
\label{fig:digits-Dasgupta}}
\end{minipage}
\end{figure*}

\begin{figure*}
\begin{minipage}{.49\textwidth}
\hspace{-1em}
    \includegraphics[width=\textwidth]{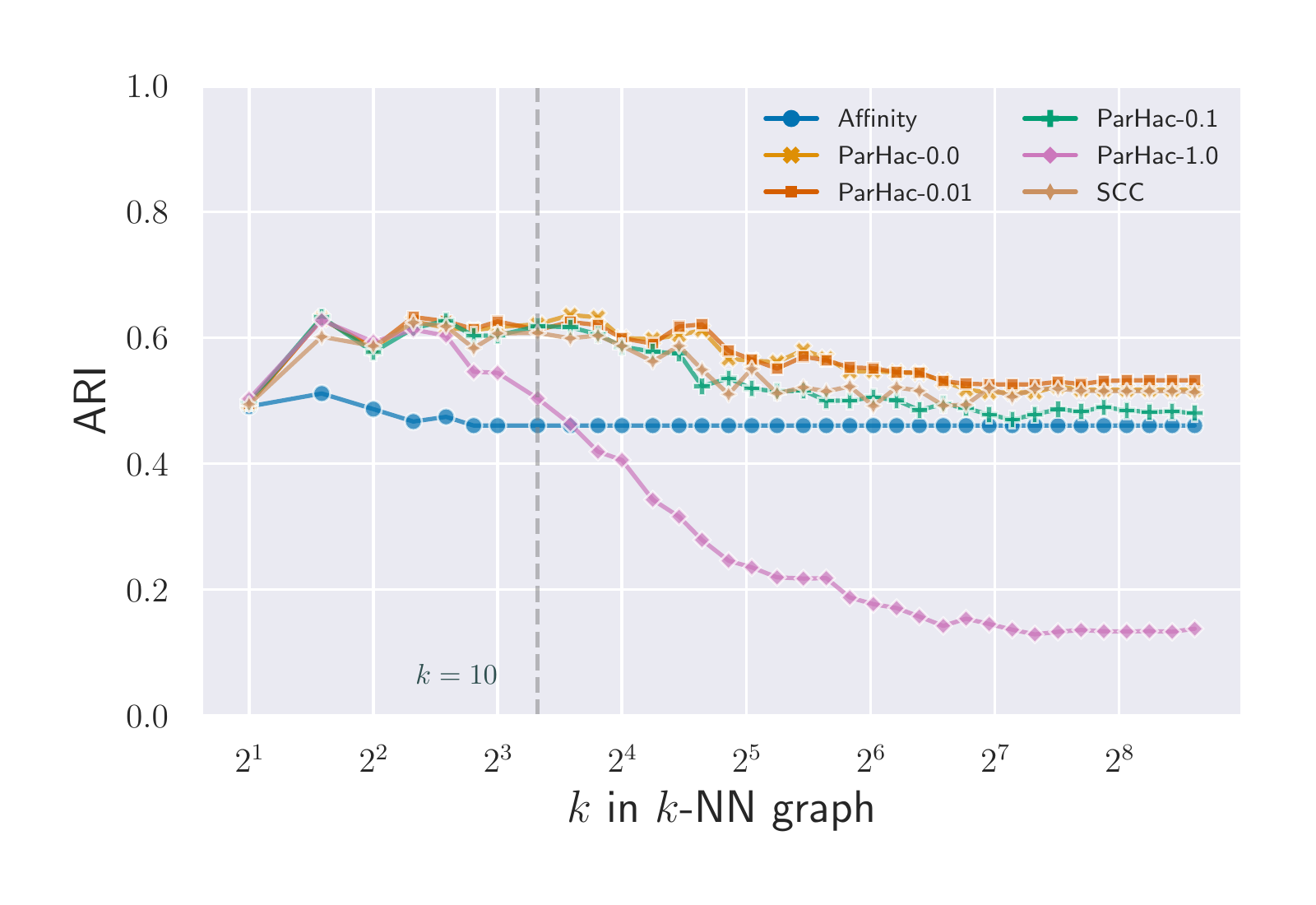}
\end{minipage}
\begin{minipage}{0.49\textwidth}
  \includegraphics[width=\textwidth]{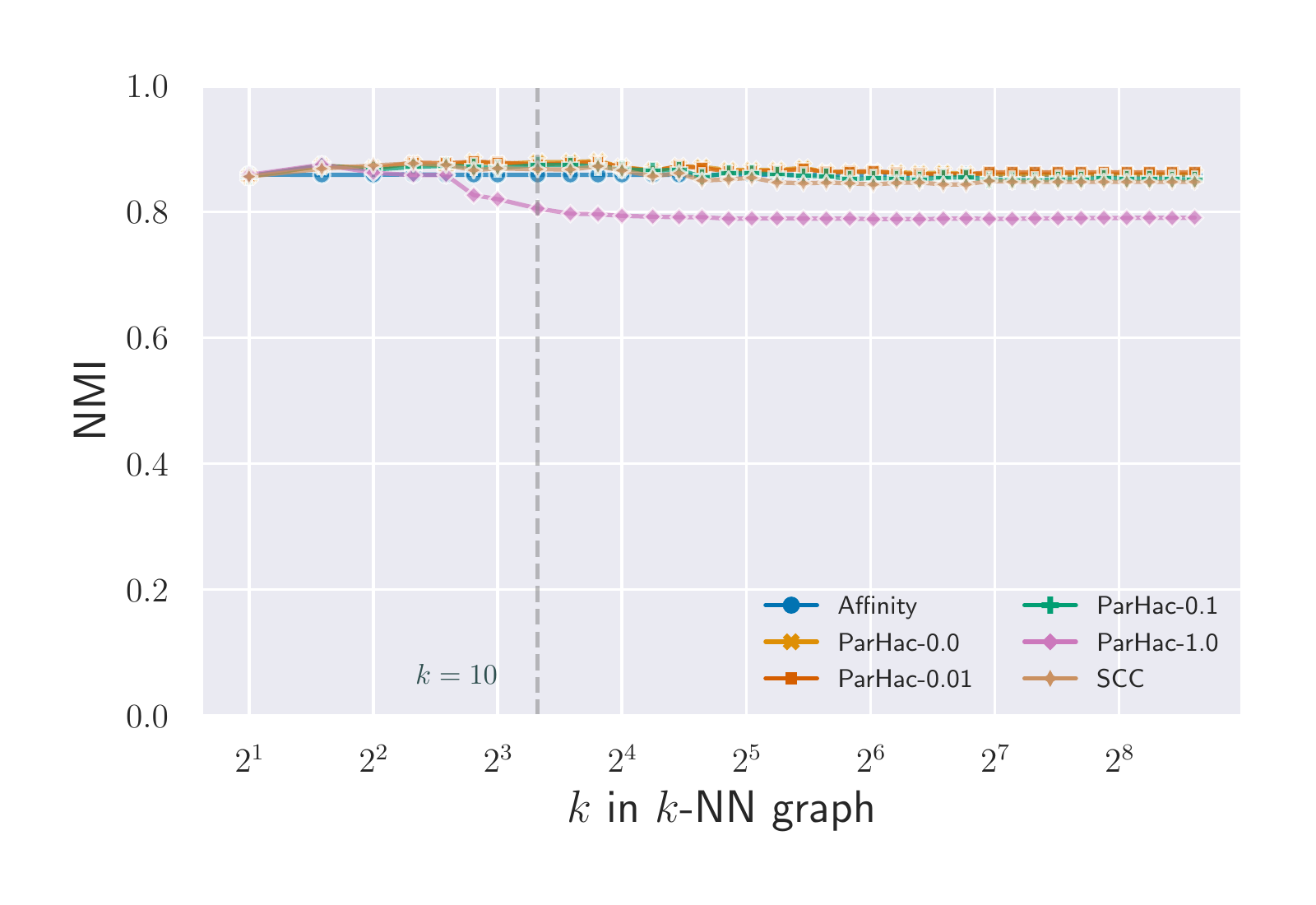}
\end{minipage}\\
\begin{minipage}[t]{.49\textwidth}
  \caption{\small Adjusted Rand-Index (ARI) of clusterings computed by
  \parhac{} for varying $\epsilon$ on Faces  versus the $k$
  used in similarity graph construction.
\label{fig:faces-ARI}}
\end{minipage}\hfill
\begin{minipage}[t]{.49\textwidth}
  \caption{\small Normalized Mutual Information (NMI) of clusterings
  computed by \parhac{} for varying $\epsilon$ on Faces 
  versus the $k$ used in similarity graph construction.
\label{fig:faces-NMI}}
\end{minipage}
\end{figure*}

\begin{figure*}
\begin{minipage}{.49\textwidth}
\hspace{-1em}
    \includegraphics[width=\textwidth]{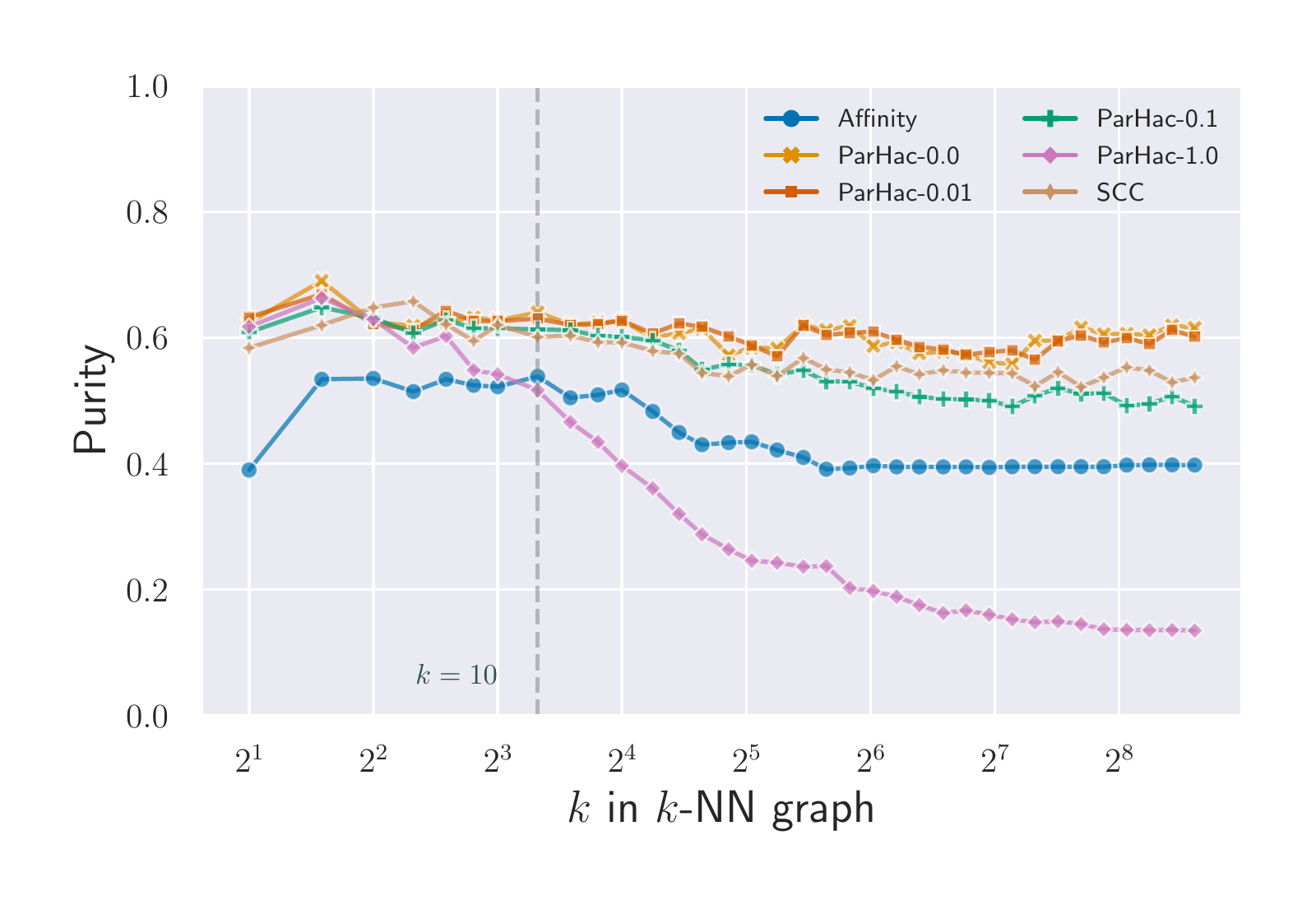}
\end{minipage}
\begin{minipage}{0.49\textwidth}
  \includegraphics[width=\textwidth]{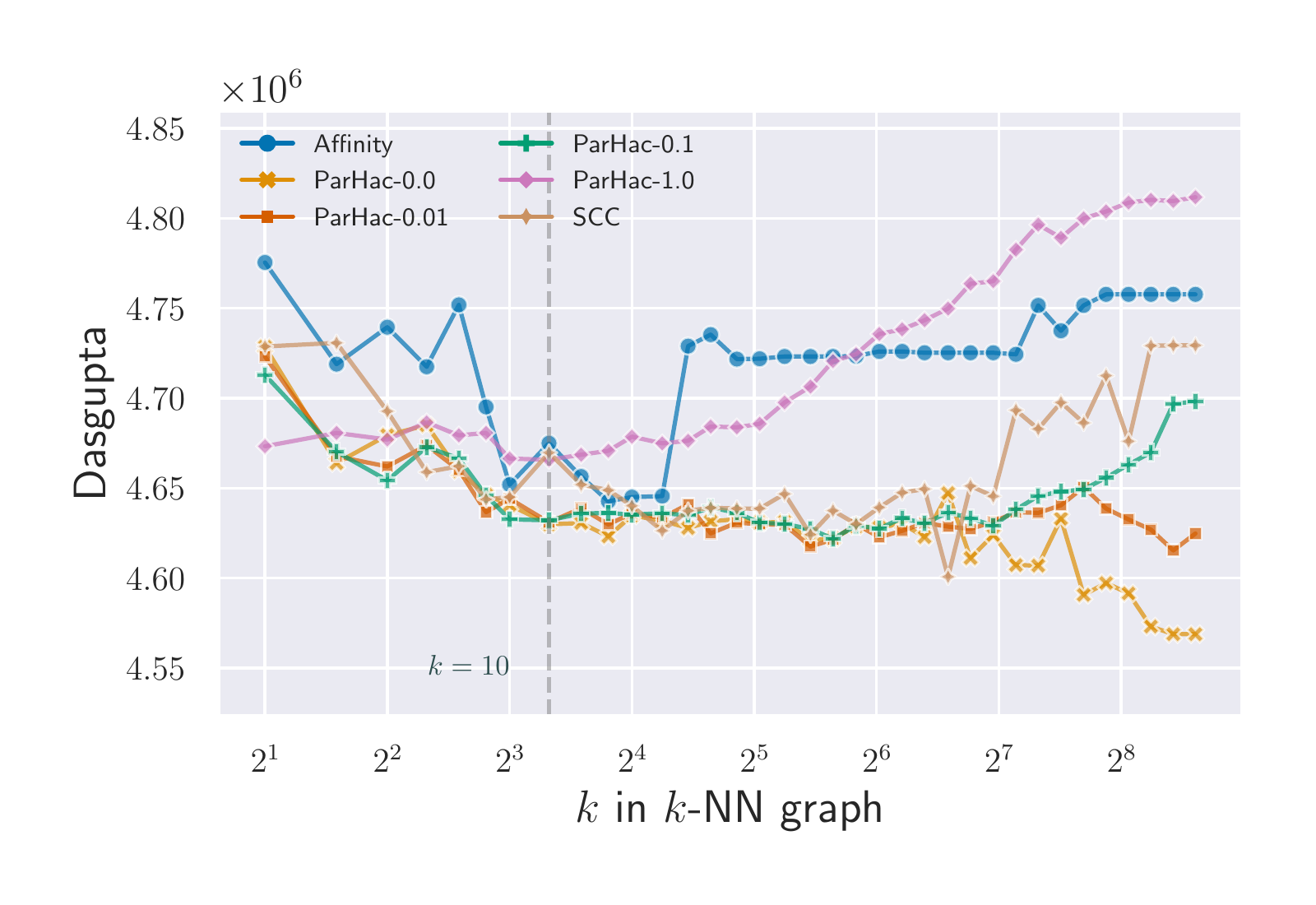}
\end{minipage}\\
\begin{minipage}[t]{.49\textwidth}
  \caption{\small Dendrogram Purity (Purity) of clusterings computed by
  \parhac{} for varying $\epsilon$ on Faces  versus the $k$
  used in similarity graph construction.
\label{fig:faces-Purity}}
\end{minipage}\hfill
\begin{minipage}[t]{.49\textwidth}
  \caption{\small Dasgupta Cost (Dasgupta) of clusterings
  computed by \parhac{} for varying $\epsilon$ on Faces  versus the $k$ used in
  similarity graph construction.
\label{fig:faces-Dasgupta}}
\end{minipage}
\end{figure*}

\begin{figure*}
\begin{minipage}{.49\textwidth}
\hspace{-1em}
    \includegraphics[width=\textwidth]{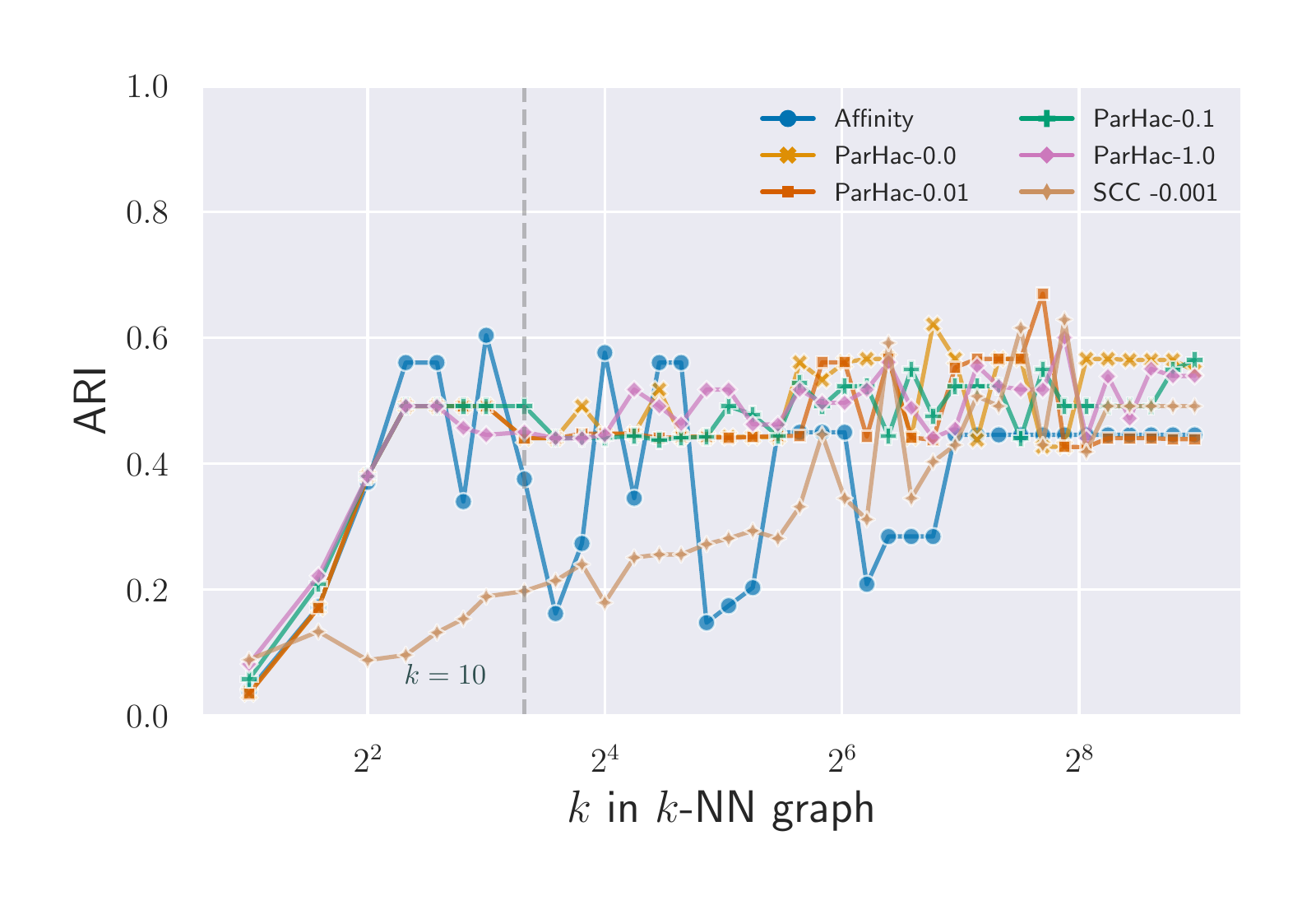}
\end{minipage}
\begin{minipage}{0.49\textwidth}
  \includegraphics[width=\textwidth]{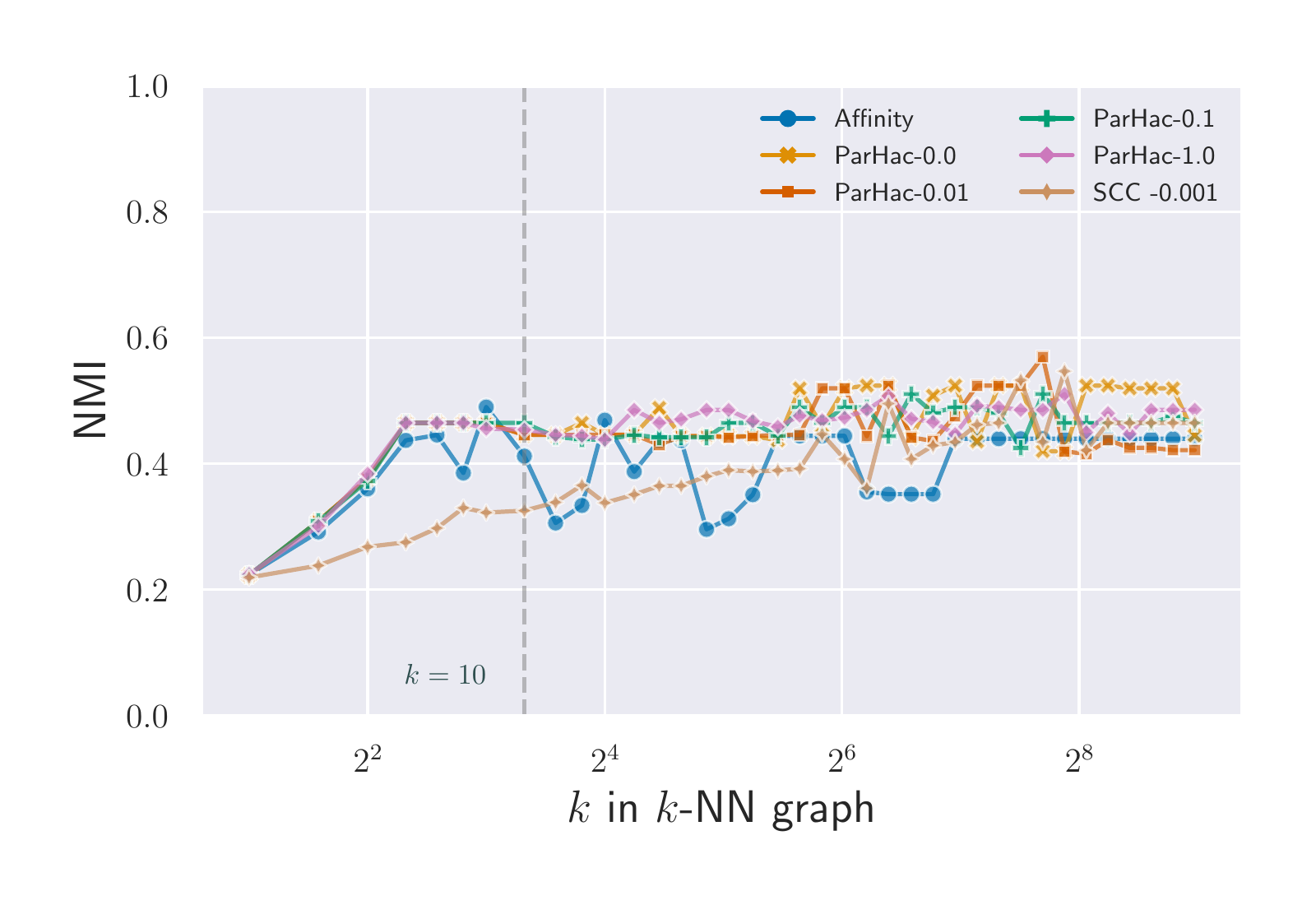}
\end{minipage}\\
\begin{minipage}[t]{.49\textwidth}
  \caption{\small Adjusted Rand-Index (ARI) of clusterings computed by
  \parhac{} for varying $\epsilon$ on Cancer  versus the $k$
  used in similarity graph construction.
\label{fig:cancer-ARI}}
\end{minipage}\hfill
\begin{minipage}[t]{.49\textwidth}
  \caption{\small Normalized Mutual Information (NMI) of clusterings
  computed by \parhac{} for varying $\epsilon$ on Cancer 
  versus the $k$ used in similarity graph construction.
\label{fig:cancer-NMI}}
\end{minipage}
\end{figure*}
\begin{figure*}
\begin{minipage}{.49\textwidth}
\hspace{-1em}
    \includegraphics[width=\textwidth]{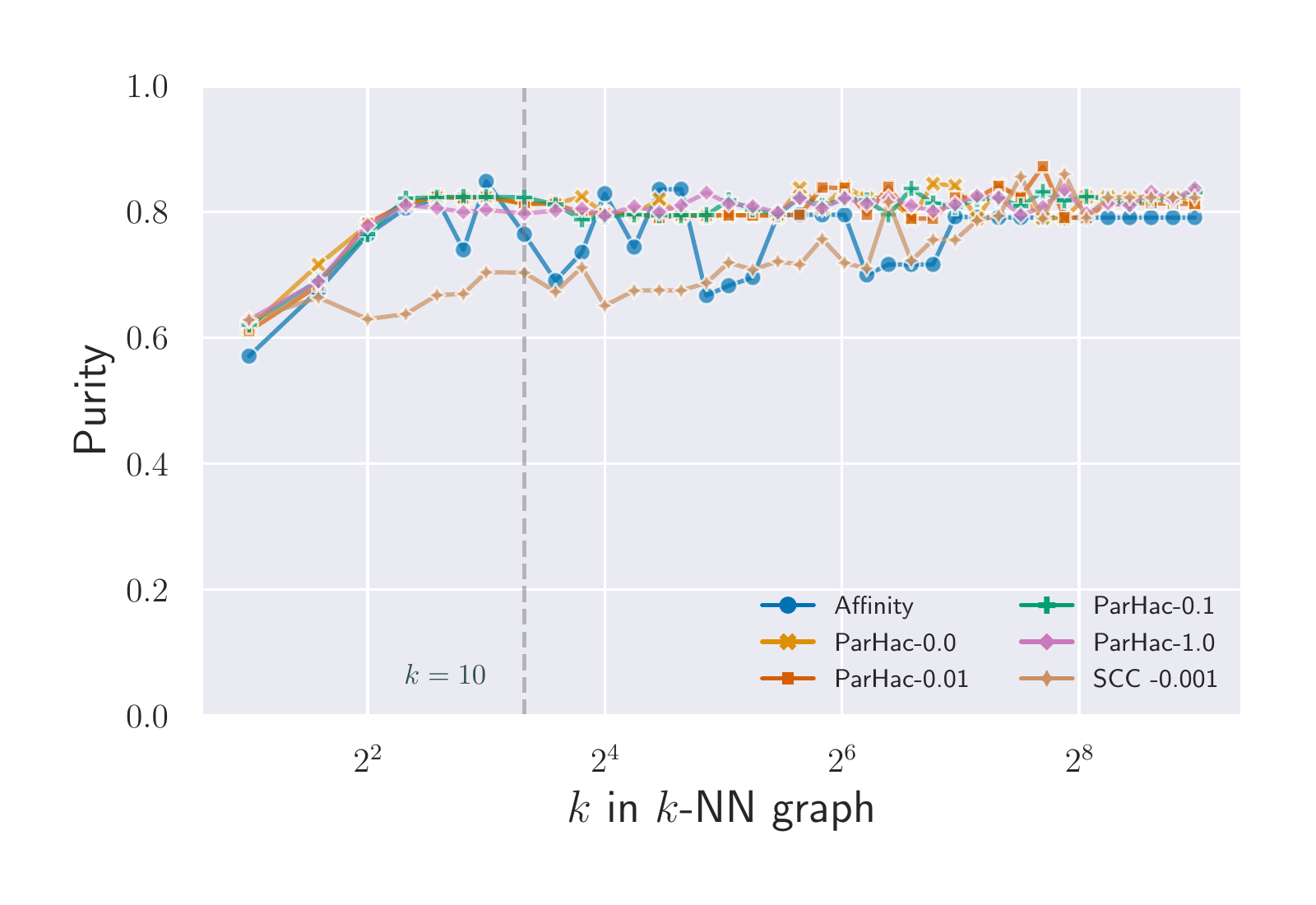}
\end{minipage}
\begin{minipage}{0.49\textwidth}
  \includegraphics[width=\textwidth]{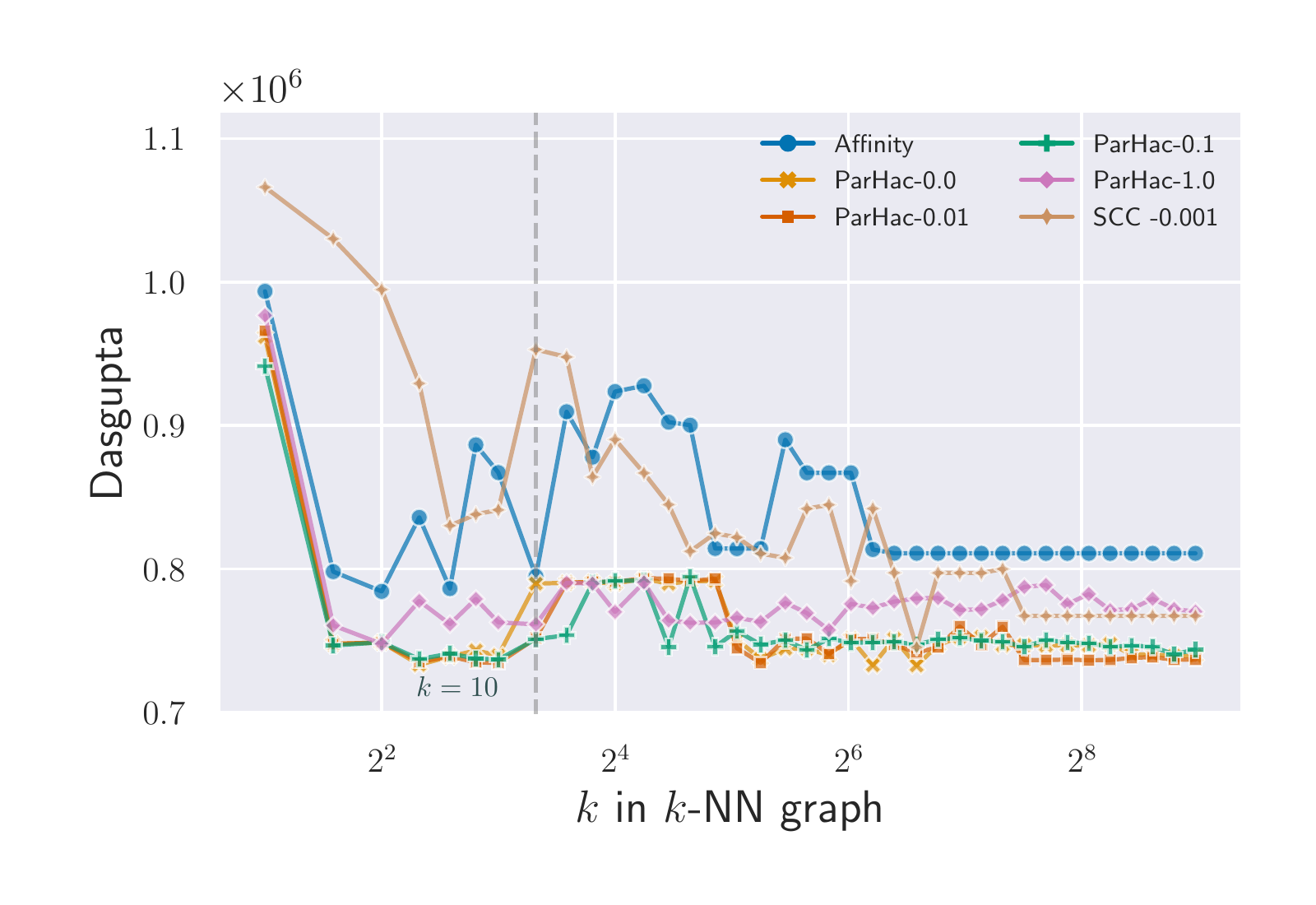}
\end{minipage}\\
\begin{minipage}[t]{.49\textwidth}
  \caption{\small Dendrogram Purity (Purity) of clusterings computed by
  \parhac{} for varying $\epsilon$ on Cancer  versus the $k$
  used in similarity graph construction.
\label{fig:cancer-Purity}}
\end{minipage}\hfill
\begin{minipage}[t]{.49\textwidth}
  \caption{\small Dasgupta Cost (Dasgupta) of clusterings
  computed by \parhac{} for varying $\epsilon$ on Cancer  versus the $k$ used in
  similarity graph construction.
\label{fig:cancer-Dasgupta}}
\end{minipage}
\end{figure*}

\end{document}